\newtheorem{teo}{Theorem}
\newtheorem{prop}{Proposition}
\def\fourvdots{\vbox{\baselineskip1\p@ \lineskiplimit\z@
		\kern6\p@\hbox{.}\hbox{.}\hbox{.}\hbox{.}}}
\begin{document}
\title{On the effects of firing memory in the dynamics of conjunctive networks.}
%
%\titlerunning{Abbreviated paper title}
% If the paper title is too long for the running head, you can set
% an abbreviated paper title here
%
\author{Eric Goles \inst{1} \and
	Pedro Montealegre \inst{1} \and
    Mart\'in R\'ios Wilson \inst{2} }
\authorrunning{Eric Goles, Pedro Montealegre and Mart\'in R\'ios Wilson.}
% First names are abbreviated in the running head.
% If there are more than two authors, 'et al.' is used.
%
\institute{Universidad Adolfo Ibáñez, Facultad de Ingeniería y Ciencias, Santiago, Chile. \email{eric.chacc@uai.cl,p.montealegre@uai.cl  \and Universidad de Chile, Facultad de Ciencias Físicas y Matemáticas, Departamento de Ingeniería Matemática, Santiago, Chile. \email{mrios@dim.uchile.cl}}}
%\url{http://www.springer.com/gp/computer-science/lncs} \and
%ABC Institute, Rupert-Karls-University Heidelberg, Heidelberg, Germany\\
%\email{\{abc,lncs\}@uni-heidelberg.de}}
%%
\maketitle              % typeset the header of the contribution
\begin{abstract}
Boolean networks are one of the most studied discrete models in the context of the study of gene expression. In order to define the dynamics associated to a Boolean network, there are several \textit{update schemes} that range from parallel or \textit{synchronous} to \textit{asynchronous.} However, studying each possible dynamics defined by different update schemes might not be efficient. In this context, considering some type of temporal delay in the dynamics of Boolean networks emerges as an alternative approach. In this paper, we focus in studying the effect of a particular type of delay called \textit{firing memory} in the dynamics of Boolean networks. Particularly, we focus in symmetric (non-directed) conjunctive networks and we show that there exist examples that exhibit attractors of non-polynomial period. In addition, we study the prediction problem consisting in determinate if some vertex will eventually change its state, given an initial condition. We prove that this problem is \textbf{PSPACE}-complete.
\keywords{Boolean network \and Firing memory \and Conjunctive networks. \and Prediction problem. \and PSPACE.}
\end{abstract}
\section{Introduction}
Boolean networks are one of the simplest and most studied discrete models in the context of the study of gene expression \cite{kauffman1969metabolic,kauffman1974large,thomas1973boolean}. A boolean network is defined by a boolean map that is usually represented as graph, called interaction graph, where the vertices or nodes represent  genes and the edges represent regulatory interactions. A gene in the network can be active or inactive and that is represented by a node in \textit{state} $1$ or $0$ respectively. This model was first introduced by Kauffman in the end of the 60's  \cite{kauffman1969metabolic} and it was thought as a generalization of the McCulloch and Pitts neural network model \cite{mcculloch1943logical}. The seminal papers by Kauffman and Thomas focused in studying the dynamical properties of random generated networks \cite{kauffman1969metabolic,kauffman1969homeostasis,kaufmann1971gene} as well as studying the structure involved in the regulatory circuitry \cite{thomas1973boolean,thomas1991regulatory}.   A boolean network naturally defines a discrete dynamical system by updating all the nodes of the network simultaneously, i.e. the consecutive states of the dynamics are given by iterations of the original boolean map. This update scheme is called \textit{parallel} or \textit{synchronous}. As the number of posible states is finite (it is given by the number of possible tuples with values $0$ and $1$ which is $2^n$) every initial state eventually exhibit periodic dynamics. We call the set of states that define these periodic dynamics an \textit{attractor}. If the attractor is one single state we call it a \textit{fixed point} and otherwhise we call it a \textit{limit cycle}. Though this model is fairly simple to study, it fails to reproduce gene expression data in a realistic way, mainly because of the synchronous update scheme. One straightforward approach to improve the model is adding asynchronicity to the dynamics by considering different update schemes \cite{aracena2009robustness,demongeot2008robustness,goles2008comparison,robert2012discrete}. Since some biologists agreed that some synchronicity is not completely unrealistic \cite{boettiger2009synchronous,wang2011spatial} updates schemes usually range from synchronous to sequential update schemes in which every node is updated according to a given partial order. A notable example of update schemes that are somewhere between the latter categories are \textit{block sequential} update schemes. In these update schemes, a partition in the node set is defined and nodes inside each set of the partition are updated in parallel while sets in the partition are updated sequentially. However, in order to define one of the latter update schemes, a partition and an partial order need to be chosen. These requirements introduce, in the biological networks modelling framework, several ways to model the dynamics of a fixed object of study. Although it is relevant and interesting from a mathematical or computational point of view to study the dynamics generated by every possible update scheme in the latter context, this exercise might turn to be rather impractical  \\[12pt]
An alternative approach to allow adding asynchronicity to the dynamics of a boolean network is based in the concept of delay that is generally defined as an internal clock, that could be independent from the original dynamics of the system, and that dictates its dynamical behaviour during a fixed time interval. This latter concept was first introduced by Thomas in \cite{thomas1991regulatory,thomas1995dynamical} and then studied in different frameworks such as in \cite{ahmad2008analysing,bernot2004application,fromentin2010hybrid,ren2008asymptotic,ribeiro2015learning}. Particularly, here we are interested in specific type of delay called \textit{firing memory}. It was based in the concept of memory and it was first introduced by Graudenzi and Serra under the name of \textit{gene protein Boolean networks} \cite{graudenzi2010new,graudenzi2011dynamical,graudenzi2011robustness} and they defined this delay inspired in the concept of decay of proteins. In \cite{goles2017firing}, Goles et al. introduced some modifications of the original model and presented it under the name of \textit{Boolean networks with firing memory.} A question that naturally arise in this context is what are the effects of firing memory in the dynamics of boolean networks. According to \cite{goles2017firing}, one of the first observations stated in the seminal papers, that was deduced through the analysis of numerical simulations, is that the more maximum time decay value (delay) the less the network admits asymptotic degrees of freedom. In order to survey this observation from a theoretical point of view, a straightforward methodology is to study a specific class of boolean networks preferably the one where the dynamics have been characterized. In fact, we are interested in the effect of \textit{firing memory} in the dynamics of threshold networks.  In these networks, the state of every node evolves accordingly to a \textit{threshold function} that depends on the state of certain variables represented as the neighbors of the node in underlying  interaction graph. In \cite{goles1985decreasing} Goles et. al characterized the dynamics of the latter network (without delay) and particularly they showed that attractors can only be limit cycles with period $2$ or fixed points. One of the simplest type of threshold networks are the \textit{disjunctive} and \textit{conjunctive} boolean networks in which the state of every node depends on an OR or an AND function of its neighbors respectively. In \cite{goles2017firing}, Goles et. al proved that 
disjunctive networks with firing memory only admit homogeneous fixed points as attractors. However, the effect of these type of delay in  the dynamics of conjunctive networks have not been described until now, perhaps surprisingly, conjunctive networks with firing memory can exhibit extremely different behaviour compared to the regular ones.  \\[12pt]
In the latter context, an interesting question is if firing memory is able to induce in the original boolean network  dynamics the capability of simulating other computation models such as boolean circuits, Turing machines, etc. This line of research led us to consider a natural problem that arise in the study of boolean network dynamics: the \textit{prediction problem.} This problem is defined in the following way: given an initial condition and an update scheme(in this case parallel scheme with firing memory), to predict the future states. To solve that problem, several strategies can be proposed from directly simulating the network to more elaborated strategies based on the topological or algebraical properties of the network. A measure of the efficiency of an strategy is given by the computational complexity of the problem.  Prediction problems have been broadly studied in threshold networks \cite{goles2014computational,goles2016pspace} and particularly, in disjunctive(conjunctive) networks, it is known that the problem is in the \textbf{P} class. \\[12pt]
In this paper, we focus in studying the dynamics of conjunctive networks with firing memory and we prove that, contrary to what might be assumed based on previous results for disjunctive networks, conjunctive networks with firing memory admit attractors of non polynomial period. Then, we study the prediction problem and we prove that it is \textbf{PSPACE}-complete. We achieve this by showing that conjunctive networks with firing memory are capable of simulating iterated boolean circuits. As a direct corollary of this result, we conclude that the latter boolean networks with firing memory are universal, in the sense that they are able to simulate an arbitrary given boolean function.
%Agregar intrinsic universal? discutir.
\section{Contributions and structure of the paper}
In this paper we show that, contrarily to what one may think, conjunctive networks with firing memory exhibit an extremely complex dynamical behavior. More precisely,  we show that \textsc{2-And-Prediction} is \textbf{PSPACE}-complete as a consequence of the capability of this rule to simulate iterated monotone boolean circuits.  As a corollary of the latter result, we show that conjunctive boolean networks with firing memory are a universal model in the sense that they are capable of simulating every boolean network automata. \\

The rest of the paper is organized as follows. In Section \ref{sec:prel} we give the main formal definitions and previous results. In Section \ref{sec:cycle} we show the gadgets that play an essential role in the proof of our main results and we use them to exhibit a conjunctive network with firing memory that admits attractors of non polynomial period.  In Section \ref{sec:PSPACE} we study the computational complexity of the \textsc{2-And-Prediction} problem and we give a complete proof of the main result in Theorem \ref{teo:PSPACE}. 
\section{Preliminaries}\label{sec:prel}
A boolean network is a map $F: \{0,1\}^n \to \{0,1\}^n$. Associated to this function, we define its interaction graph $\mathcal{G}(f) = (V,E)$  by $V = \{0,\hdots, 1\}$ and $ij \in E \iff F_j \text{ depends on the variable } x_i.$ $F$ defines a dynamical system $(X= \{0,1\}^n, F)$ in which the elements $x \in X$ are called \textit{states} or \textit{configurations.} and the transitions are given by the iterations of the map $F$, i.e, for every state $x \in X$ we define its next state by $x(1) = F(x)$ and in general we have that $x(t+1) = F(x(t))$ for every $t \in \mathbb{N}$. This type of dynamics is often called \textit{parallel} or \textit{synchronous} update scheme. In the next sections we will assume that boolean networks dynamics will be defined in this way.  Given an initial condition $x \in X$, we call its associated \textit{trajectory} to the infinite sequence $T(x) = (x,x(1),\hdots).$ As the number of possible states is finite ($2^n$), every trajectory is eventually periodic, i.e., there exists $p\geq0$ such that $x(t+p) = x(t)$. We say that a trajectory reaches a \textit{limit cycle} with period $p$ if the last property hold for that trajectory and $p$ is the minimum time in which the property is satisfied. A set of configurations in a limit cycle with period $p$ is called an \textit{attractor} with period $p$.  Particularly, when $p = 1$ we say that the attractor is a \textit{fixed point.}\\[12pt]
We are interested in some specific type of boolean networks called \textit{threshold} networks. A \textit{threshold} network is a boolean network in which given a matrix $A = (a_{ij})$ with integer entries and an integer vector $\Theta = (\theta_i)_i$  the function $F$ is defined by $$ F(x)_i  = \begin{cases}
1 & \text{ if } \sum \limits_{j=1}^{n} a_{ij} x_j - \theta_i \geq 0 \\
0 & \text{ otherwise } 
\end{cases}$$
One particular class of threshold networks are disjunctive and conjunctive networks.  Disjunctive networks are  defined by threshold $0$ in every coordinate function $F_i$, in other words,  are defined by an OR of certain variables i.e., $F(x)_i = F_i(x_{j_1}, \hdots, x_{j_k}) = \bigvee\limits_{i=1}^{k} x_{j_i}.$ On the other hand, a conjunctive network is a boolean network $F$ such that every local rule $F_i$ is given by an AND function of certain variables, i.e., $F(x)_i = F_i(x_{j_1}, \hdots, x_{j_k}) = \bigwedge\limits_{i=1}^{k} x_{j_i}.$ In this case, we have that $\theta_i = \delta_i$ for every $i$ where $\delta_i$ is the number of neighbors of $i$ in the interaction graph associated to $F$. These networks have been broadly studied in different frameworks \cite{goles2012disjunctive,jarrah2010dynamics,aracena2017fixed,gao2018controllability} mainly because its simplicity and its relevance in applications in modelling gene regulatory networks in which conjunctive functions describe common regulatory interactions  \cite{nguyen2006deciphering,gummow2006reciprocal}.\\[12pt]
As we referred in the introduction, the concept of delay in boolean networks has emerged as an alternative approach to introduce asynchronicity. In particular, we are interested in studying the effects of a type of delay called \textit{firing memory}.  We consider a boolean network $F$ and states $Y = \prod_{i = 1}^{n}\{0,1\} \times \{1,\hdots,dt_i\}$, $dt_i \geq 1$ for all $i$.  Given $y(0) \in X, \text{ } y(0)_i = (x(0)_i, \Delta(0)_i),$ we define the following dynamics:	
\begin{equation*}
x_i(t+1) = \left\{ \begin{array}{cc}
1 & \Delta_i(t+1) \geq 1, \\
F_i(x(t)) &  \Delta_i(t+1) = 0.\\
\end{array} \right.
\end{equation*}
\begin{equation*}
\Delta_i(t+1) = \left\{ \begin{array}{cc}
dt_i    & F_i(x(t)) = 1, \\
\Delta_i(t) -1 &  F_i(x(t)) = 0 \land \Delta_i(t) \geq 1, \\
0 & F_i(x(t)) = 0 \land \Delta_i(t) = 0. \\
\end{array} \right.
\end{equation*}	
 This local rule $(x_i(t),\Delta_i(t) \to  (x_i(t+1), \Delta_i(t+1))$ defines a global transition function $F^{dt}: Y \to Y$ that we call boolean network with \textit{firing memory.}  One useful notation introduced in \cite{goles2017firing}, is considering the states as the single delay value instead of a tuple. For example, the state $(1,2)$ that means state $1$ and delay $2$ is represented exclusively by $2$. In the next sections, we will be using this notation.  \\[12pt]
A natural question regarding the effects of firing memory in the dynamics of conjunctive networks is if this type of delay is able to give simulation capabilities to the network in the sense of allowing it to simulate other boolean networks of a different class or other computation models. In this context, we are interested in studying \textit{prediction problems.} A well studied topic in the context of the dynamics of boolean networks is to make predictions about the attractor associated to an specific trajectory defined by a initial condition $x$. There exists a very simple solution to this problem that is to simulate the network dynamics until the initial state reaches a limit cycle. However, a question that naturally arise is if there exist a more efficient solution, considering the fact that using the last strategy may take as many steps as there are possible states. These more efficient solutions would be based on algorithmic or algebraic properties of the network.  If $x \in \{0,1\}^n$ we introduce the complement of $x$ denoted by  $\bar{x}$ and defined by: $x_i = 1$ implies $\bar{x}_i = 0$ and $x_i = 0$ implies $\bar{x}_i = 0$. Given a maximum delay vector $dt$ we define the following decision problem:\\[6pt]    
 \fbox{\begin{minipage}{0.98\textwidth}
 $dt$-\textsc{and-prediction}: \\
Given a conjunctive network with firing memory $F$ with maximun delay vector $dt$,  $i \in \{1,\hdots,\}$ and a configuration $x \in \{0,1\}^n$, does there exist $y \in T(x)$ such that $y_i = \bar{x}_i$ ?
 \end{minipage}}\\[7pt]  
We remark that, because we are working with the AND rule, if $x_i = 0$ and $F(x)_i = 0$ then $x(t)_i = 0$ for all $t \geq 2$, so the $dt$-\textsc{and-prediction} problem can be solved simulating one step of the local rule. In this case and we can asume that $x_i = 1$. In addition, when we consider the maximum delay vector as uniform, i.e. the same maximum delay $dt_i = \tau$ in every $i$, we will refer to $dt$-\textsc{and-prediction} as $\tau$-\textsc{and-prediction}.  We are interested in studying the computational complexity of the previous problem. This concept is roughly defined as the amount of resources that are needed to find a solution, given as an expression of the input size. Classical theory defines the following main classes of complexity:  \textbf{P} is the class of problems solvable by a deterministic Turing machine in polynomial time and \textbf{PSPACE} is the class of problems solvable by a deterministic machine that uses polynomial space. Additionally, it is known that $ \textbf{P} \subset \textbf{PSPACE}.$ It is conjectured that these inclusions are strict, so there are problems in \textbf{PSPACE} that do not belong to {\bf P}. The problems in {\bf PSPACE} that are the most likely to not belong to {\bf P}  are the \textbf{PSPACE}-complete problems, which analogously \textbf{NP}-complete problems, are the ones such that any other problem in \textbf{PSPACE} can be reduced to them in polynomial time.  \\[7pt]
One very well known type of \textbf{PSPACE}-complete problem is related to the iterative evaluation of boolean circuits. A boolean circuit is a directed acyclic graph $C$ whose have three types of vertices: the ones with in-degree $0$ called \textit{inputs}, the ones with out-degree $0$ called \textit{outputs} and the rest of the vertices that have in and out neighbors called logical gates. These nodes are labelled by $\wedge, \vee, \neg$.  A boolean circuit simulates a boolean function in the obvious way, and because of that, usually a circuit with $n$ inputs and $m$ outputs is denoted  by $C: \{0,1\}^n \to \{0,1\}^m$. A circuit $C$ is monotone if there are no gates labelled by $\neg$. For each gate of a circuit, its \textit{layer} is the length of the shortest path from an input to the gate. A monotone circuit is alternating if for any path from an input to an output the gates on the path alternate between OR and AND gates. In addition, the inputs are connected to OR gates exclusively and outputs are OR gates. We define the following decision problem: Given a (monotone) boolean circuit $C:\{0,1\}^n \to \{0,1\}^n$, an input $x \in \{0,1\}^n$, and $i \in \{0,\hdots,n\}$ whether there exists a time $t \geq 1$ such that $C^t(x)_i = 1$. We call this problem \textsc{Iter-Circuit-Prediction} (respectively \textsc{Iter-Mon-Circuit-Prediction}).
\begin{proposition}
	\textsc{Iter-Mon-Circuit-Prediction} is \textbf{PSPACE}-complete even when restricted to alternating circuits of degree 4.
	\label{prop:PSPACECirc}
\end{proposition}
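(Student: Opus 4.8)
The plan is to prove the two directions separately, the membership being easy and the hardness requiring a chain of structure-preserving reductions.

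\emph{Membership in }\textbf{PSPACE}. Given $C:\{0,1\}^n\to\{0,1\}^n$, $x$ and $i$, the orbit $x,C(x),C^2(x),\dots$ stays inside a set of size $2^n$, so if $C^t(x)_i=1$ for some $t\ge 1$ then this already happens for some $t\le 2^n$. Hence it suffices to simulate the iteration for $2^n$ steps, maintaining the current configuration ($n$ bits), an $n$-bit step counter, and the polynomial work space needed to evaluate $C$ once, testing coordinate $i$ after each application. This uses polynomial space and no nondeterminism, so \textsc{Iter-Mon-Circuit-Prediction} (and also the unrestricted \textsc{Iter-Circuit-Prediction}) lies in \textbf{PSPACE}.

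\emph{Hardness.} I would start from the classical fact that a deterministic Turing machine running in space $p(\lvert w\rvert)$ can be simulated by iterating a Boolean circuit of size $\mathrm{poly}(\lvert w\rvert)$ computing its one-step transition on configurations encoded in $O(p(\lvert w\rvert))$ bits; adding a latching ``accepted'' coordinate that becomes $1$ forever once an accepting configuration is reached, acceptance becomes: does the iteration ever make that coordinate equal to $1$. This shows \textsc{Iter-Circuit-Prediction}, with unrestricted $\wedge,\vee,\neg$ gates and unbounded degree, is \textbf{PSPACE}-hard. To reach alternating degree-$4$ monotone circuits I would first remove negations by a \emph{dual-rail} encoding: replace every wire $w$ by a pair $(w^{+},w^{-})$ with invariant $w^{-}=\neg w^{+}$, turning $c=a\wedge b$ into $c^{+}=a^{+}\wedge b^{+},\ c^{-}=a^{-}\vee b^{-}$, turning $c=a\vee b$ dually, and turning $b=\neg a$ into the crossing $b^{+}:=a^{-},\ b^{-}:=a^{+}$; inputs become $(x_j,\bar x_j)$ and output $j$ becomes $(o_j^{+},o_j^{-})$. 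Since the invariant is preserved, the resulting monotone circuit $C'$ on $2n$ wires satisfies $C'^{\,t}(x,\bar x)=(C^{t}(x),\overline{C^{t}(x)})$ for all $t$, so the query ``is $o_i^{+}=1$ at some time'' is equivalent to the original query on $C$, and the reduction is polynomial.

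\emph{From monotone to alternating degree $4$.} Next I would make $C'$ leveled (every input-to-gate path has the same length) by subdividing edges, reduce every fan-in to $2$ (a large $\wedge$ becomes an $\wedge$-tree, a large $\vee$ a $\vee$-tree) and every fan-out to $2$ (via a copy tree), and then assign gate types level by level so that OR- and AND-levels alternate, inputs feed only OR-gates, and outputs are OR-gates. The key is that all routing is performed by \emph{trivial} gates, using $\vee(w,w)=\wedge(w,w)=w$, so every identity/copy layer can be chosen OR or AND at will; inserting such layers and padding with pairs of them synchronizes path parities, so that each genuine $\wedge$ (resp. $\vee$) gate sits at an AND-level (resp. OR-level). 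Every vertex ends with in-degree and out-degree at most $2$, hence degree at most $4$, with only polynomial blow-up. As these transformations merely restructure the combinational circuit computing the map $\{0,1\}^{2n}\to\{0,1\}^{2n}$ (up to the fixed dual-rail encoding on the distinguished input/output coordinates), the iteration semantics and the answer to the query are unchanged, completing the reduction.

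\emph{Main obstacle.} The delicate point is the last step: monotonicity, alternation, bounded degree, and faithfulness of the iterated dynamics must all be secured simultaneously, and the natural gadgets fight each other — a fan-in tree of $\wedge$'s or a fan-out tree of copies is a priori a block of gates of a single type, which breaks alternation, and the outputs/inputs must additionally be forced to be OR-gates. Getting the order of the transformations right and carrying out the parity bookkeeping with alternating layers of trivial gates is where the real care lies; everything else is routine.
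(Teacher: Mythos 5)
The paper never proves Proposition~\ref{prop:PSPACECirc}: it is stated in the preliminaries as a known fact and used as a black box (its only role is to feed the reduction in Proposition~\ref{prop:circand} and Theorem~\ref{teo:PSPACE}), so there is no in-paper argument to compare yours against. Judged on its own, your proof follows the standard route and is essentially sound. Membership is fine: any coordinate that is ever $1$ is $1$ within the first $2^n$ steps, and step-by-step simulation needs only polynomial space. For hardness, the chain Turing machine $\to$ iterated transition circuit with a latching accept bit $\to$ dual-rail monotonization $\to$ leveling, fan-in/fan-out reduction and alternation is exactly the classical construction; the dual-rail step is compatible with iteration because the complementarity invariant is re-established at the outputs and the reduction only has to supply $(x,\bar x)$ once, at time $0$. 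The step you flag as delicate does go through with your parity bookkeeping, and it is worth noting why: once you fix a global parity-to-type assignment for the levels (say even levels AND, odd levels OR) and shift every genuine gate, by inserting identity buffers, to a level of the matching parity, the type of every inserted buffer is forced by its level, so alternation along every path, the ``inputs feed only OR gates'' condition and the ``outputs are OR gates'' condition are all obtained by construction rather than by ad hoc local repairs; fan-in and fan-out trees interleaved with these buffers stay within total degree $4$ with polynomial blow-up. One cosmetic point: realizing the identity as $\vee(w,w)$ or $\wedge(w,w)$ presupposes parallel edges and consumes two units of fan-out of $w$; it is cleaner to allow fan-in-$1$ gates (or to pad the second input with a constant), which changes nothing in the argument.
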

 \subsection{Previous results}
Threshold networks were vastly studied in \cite{chacc1980comportement,goles1985decreasing,hopfield1982neural,mortveit2007introduction} and Goles et. al \cite{goles1985decreasing} showed using a technique based on monotone energy operator that the \textit{synchronous} dynamics associated to these networks admits only attractors of bounded period (moreover, there are only attractors with period $2$ and fixed points) when the associated weight matrix is symmetric, i.e. when the underlying interaction graph is non-directed which is the case we are most interested on in this paper. \\[7pt]
A wide studied subclass of threshold networks are the conjunctive or disjuntive networks. These systems have a very important role in modelling of biological systems because mainly because of its simplicity and their straightforward way to describe common interactions between different variables. The dynamics of these type of networks was studied under different update schemes in \cite{goles2012disjunctive}. Our approach here is to continue the studies presented in the seminal paper of the \textit{firing memory}  \cite{goles2017firing} in which dynamics of disjunctive networks with firing memory were characterized. In the latter paper,  Goles et. al showed that disjuntive networks with firing memory and delay $dt_i \geq 2$ in at least one coordinate $i$ admit only homogeneous fixed points when the network is defined over a strongly connected directed graph.
%Nuevamente quizas poner intrisicamente universal, discutir.
\section{Conjunctive networks with firing memory admit non-polynomial cycles.}
\label{sec:cycle}
 One surprising observation about the effect of firing memory in the dynamics of conjunctive networks is that it allows the dynamics to have attractors with period $p\geq3$. We recall that these type of dynamics(without delay) have bounded cycles of maximum period $p=2$  \cite{goles1985decreasing}. A general method to construct  dynamics with a given period $p\geq 3$ is considering a conjunctive network defined by an interaction graph given by a complete graph $K_{p+1}$ and a firing memory $dt_i = p$ in every vertex.
 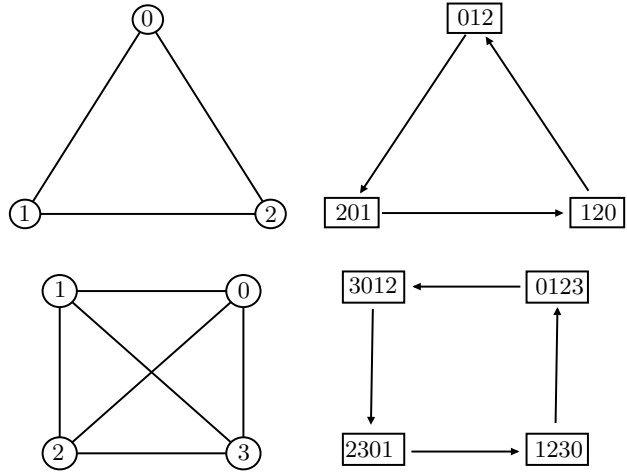
\begin{figure}[t]
 	\centering

 		\tikzset{every picture/.style={line width=0.75pt}} %set default line width to 0.75pt        
 		
 		\begin{tikzpicture}[x=0.75pt,y=0.75pt,yscale=-0.45,xscale=0.48]
 		%uncomment if require: \path (0,300); %set diagram left start at 0, and has height of 300
 		
 		%Shape: Triangle [id:dp09599665644363364] 
 		\draw   (154.5,21) -- (283.5,239) -- (25.5,239) -- cycle ;
 		%Shape: Circle [id:dp7833873198461037] 
 		\draw  [fill={rgb, 255:red, 255; green, 255; blue, 255 }  ,fill opacity=1 ] (267.5,239) .. controls (267.5,230.16) and (274.66,223) .. (283.5,223) .. controls (292.34,223) and (299.5,230.16) .. (299.5,239) .. controls (299.5,247.84) and (292.34,255) .. (283.5,255) .. controls (274.66,255) and (267.5,247.84) .. (267.5,239) -- cycle ;
 		%Shape: Circle [id:dp09408792490567186] 
 		\draw  [fill={rgb, 255:red, 255; green, 255; blue, 255 }  ,fill opacity=1 ] (9.5,239) .. controls (9.5,230.16) and (16.66,223) .. (25.5,223) .. controls (34.34,223) and (41.5,230.16) .. (41.5,239) .. controls (41.5,247.84) and (34.34,255) .. (25.5,255) .. controls (16.66,255) and (9.5,247.84) .. (9.5,239) -- cycle ;
 		%Shape: Circle [id:dp8386786622092759] 
 		\draw  [fill={rgb, 255:red, 255; green, 255; blue, 255 }  ,fill opacity=1 ] (138.5,21) .. controls (138.5,12.16) and (145.66,5) .. (154.5,5) .. controls (163.34,5) and (170.5,12.16) .. (170.5,21) .. controls (170.5,29.84) and (163.34,37) .. (154.5,37) .. controls (145.66,37) and (138.5,29.84) .. (138.5,21) -- cycle ;

 		%Shape: Rectangle [id:dp8215442508746977] 
 		\draw  [fill={rgb, 255:red, 255; green, 255; blue, 255 }  ,fill opacity=1 ] (469.5,3) -- (526,3) -- (526,36) -- (469.5,36) -- cycle ;
 		%Shape: Rectangle [id:dp25388100913696343] 
 		\draw  [fill={rgb, 255:red, 255; green, 255; blue, 255 }  ,fill opacity=1 ] (340.5,221) -- (397,221) -- (397,254) -- (340.5,254) -- cycle ;
 		%Shape: Rectangle [id:dp3159827591001336] 
 		\draw  [fill={rgb, 255:red, 255; green, 255; blue, 255 }  ,fill opacity=1 ] (598.5,221) -- (655,221) -- (655,254) -- (598.5,254) -- cycle ;
 		%Straight Lines [id:da17746227112491586] 
 		\draw    (488.5,38) -- (378.56,214.3) ;
 		\draw [shift={(377.5,216)}, rotate = 301.95] [fill={rgb, 255:red, 0; green, 0; blue, 0 }  ][line width=0.75]  [draw opacity=0] (8.93,-4.29) -- (0,0) -- (8.93,4.29) -- cycle    ;
 		
 		%Straight Lines [id:da3618023426610695] 
 		\draw    (589.5,238) -- (400.5,238) ;
 		
 		\draw [shift={(591.5,238)}, rotate = 180] [fill={rgb, 255:red, 0; green, 0; blue, 0 }  ][line width=0.75]  [draw opacity=0] (8.93,-4.29) -- (0,0) -- (8.93,4.29) -- cycle    ;
 		%Straight Lines [id:da2604609179078401] 
 		\draw    (617.5,213) -- (510.56,41.7) ;
 		\draw [shift={(509.5,40)}, rotate = 418.02] [fill={rgb, 255:red, 0; green, 0; blue, 0 }  ][line width=0.75]  [draw opacity=0] (8.93,-4.29) -- (0,0) -- (8.93,4.29) -- cycle    ;

 		% Text Node
 		\draw (154,20) node  [align=left] {$\displaystyle 0$};
 		% Text Node
 		\draw (25,238) node  [align=left] {$\displaystyle 1$};
 		% Text Node
 		\draw (284,239) node  [align=left] {$\displaystyle 2$};
 		% Text Node
 		\draw (499,18) node  [align=left] {$\displaystyle 012$};
 		% Text Node
 		\draw (371,237) node  [align=left] {$\displaystyle 201$};
 		% Text Node
 		\draw (628,237) node  [align=left] {$\displaystyle 120$};

 		\end{tikzpicture}\label{fig:sub1}
 		
% 		\caption{Conjunctive network with firing memory and delay $dt_i = 2$ in every node that admits a limit cycle with period $3$. On the left hand side we show the interaction graph of the network and on the right hand side there is the transitions graph of the cycle.}

	\vspace{0.5cm}

 		\tikzset{every picture/.style={line width=0.75pt}} %set default line width to 0.75pt        
 		
 		\begin{tikzpicture}[x=0.75pt,y=0.75pt,yscale=-0.45,xscale=0.48]
 		%uncomment if require: \path (0,300); %set diagram left start at 0, and has height of 300
 		
 		%Straight Lines [id:da4581326084510222] 
 		\draw    (38.5,45) -- (231.5,227) ;

 		%Straight Lines [id:da5509073180617754] 
 		\draw    (38.5,227) -- (231.5,45) ;

 		%Shape: Rectangle [id:dp20956847021802671] 
 		\draw   (38.5,45) -- (231.5,45) -- (231.5,227) -- (38.5,227) -- cycle ;
 		%Shape: Circle [id:dp8294295355518866] 
 		\draw  [fill={rgb, 255:red, 255; green, 255; blue, 255 }  ,fill opacity=1 ] (213.5,45) .. controls (213.5,35.06) and (221.56,27) .. (231.5,27) .. controls (241.44,27) and (249.5,35.06) .. (249.5,45) .. controls (249.5,54.94) and (241.44,63) .. (231.5,63) .. controls (221.56,63) and (213.5,54.94) .. (213.5,45) -- cycle ;
 		%Shape: Circle [id:dp3197320719354404] 
 		\draw  [fill={rgb, 255:red, 255; green, 255; blue, 255 }  ,fill opacity=1 ] (213.5,227) .. controls (213.5,217.06) and (221.56,209) .. (231.5,209) .. controls (241.44,209) and (249.5,217.06) .. (249.5,227) .. controls (249.5,236.94) and (241.44,245) .. (231.5,245) .. controls (221.56,245) and (213.5,236.94) .. (213.5,227) -- cycle ;
 		%Shape: Circle [id:dp8192553458494987] 
 		\draw  [fill={rgb, 255:red, 255; green, 255; blue, 255 }  ,fill opacity=1 ] (20.5,227) .. controls (20.5,217.06) and (28.56,209) .. (38.5,209) .. controls (48.44,209) and (56.5,217.06) .. (56.5,227) .. controls (56.5,236.94) and (48.44,245) .. (38.5,245) .. controls (28.56,245) and (20.5,236.94) .. (20.5,227) -- cycle ;
 		%Shape: Circle [id:dp17375369328906354] 
 		\draw  [fill={rgb, 255:red, 255; green, 255; blue, 255 }  ,fill opacity=1 ] (20.5,45) .. controls (20.5,35.06) and (28.56,27) .. (38.5,27) .. controls (48.44,27) and (56.5,35.06) .. (56.5,45) .. controls (56.5,54.94) and (48.44,63) .. (38.5,63) .. controls (28.56,63) and (20.5,54.94) .. (20.5,45) -- cycle ;
 		
 		%Shape: Rectangle [id:dp11461777676965901] 
 		\draw  [fill={rgb, 255:red, 255; green, 255; blue, 255 }  ,fill opacity=1 ] (336,23) -- (400.5,23) -- (400.5,58) -- (336,58) -- cycle ;
 		%Shape: Rectangle [id:dp5856561975006922] 
 		\draw  [fill={rgb, 255:red, 255; green, 255; blue, 255 }  ,fill opacity=1 ] (336,205) -- (400.5,205) -- (400.5,240) -- (336,240) -- cycle ;
 		%Shape: Rectangle [id:dp6039736690230895] 
 		\draw  [fill={rgb, 255:red, 255; green, 255; blue, 255 }  ,fill opacity=1 ] (529,205) -- (593.5,205) -- (593.5,240) -- (529,240) -- cycle ;
 		%Shape: Rectangle [id:dp9969051085755622] 
 		\draw  [fill={rgb, 255:red, 255; green, 255; blue, 255 }  ,fill opacity=1 ] (529,23) -- (593.5,23) -- (593.5,58) -- (529,58) -- cycle ;
 		%Straight Lines [id:da7429850152227793] 
 		\draw    (411,40) -- (522.5,40) ;
 		
 		\draw [shift={(409,40)}, rotate = 0] [fill={rgb, 255:red, 0; green, 0; blue, 0 }  ][line width=0.75]  [draw opacity=0] (8.93,-4.29) -- (0,0) -- (8.93,4.29) -- cycle    ;
 		%Straight Lines [id:da4782329221795434] 
 		\draw    (365.53,193) -- (367.5,65) ;
 		
 		\draw [shift={(365.5,195)}, rotate = 270.88] [fill={rgb, 255:red, 0; green, 0; blue, 0 }  ][line width=0.75]  [draw opacity=0] (8.93,-4.29) -- (0,0) -- (8.93,4.29) -- cycle    ;
 		%Straight Lines [id:da7442695119122722] 
 		\draw    (559.5,193) -- (561.47,65) ;
 		\draw [shift={(561.5,63)}, rotate = 450.88] [fill={rgb, 255:red, 0; green, 0; blue, 0 }  ][line width=0.75]  [draw opacity=0] (8.93,-4.29) -- (0,0) -- (8.93,4.29) -- cycle    ;
 		
 		%Straight Lines [id:da1212923692824679] 
 		\draw    (408,225) -- (519.5,225) ;
 		\draw [shift={(521.5,225)}, rotate = 180] [fill={rgb, 255:red, 0; green, 0; blue, 0 }  ][line width=0.75]  [draw opacity=0] (8.93,-4.29) -- (0,0) -- (8.93,4.29) -- cycle    ;

 		% Text Node
 		\draw (39,43) node  [align=left] {$\displaystyle 1$};
 		% Text Node
 		\draw (232,44) node  [align=left] {$\displaystyle 0$};
 		% Text Node
 		\draw (232,226) node  [align=left] {$\displaystyle 3$};
 		% Text Node
 		\draw (37,226) node  [align=left] {$\displaystyle 2$};
 		% Text Node
 		\draw (564,40) node  [align=left] {$\displaystyle 0123$};
 		% Text Node
 		\draw (368,39) node  [align=left] {$\displaystyle 3012$};
 		% Text Node
 		\draw (364,222) node  [align=left] {$\displaystyle 2301$};
 		% Text Node
 		\draw (563,222) node  [align=left] {$\displaystyle 1230$};

 		\end{tikzpicture}
 		
 %		\caption{Conjunctive network with firing memory and delay $dt_i = 3$ in every node that admits a limit cycle with period $4$. On the left hand side we show the interaction graph of the network and on the right hand side there is the transitions graph of the cycle.}
 		\label{fig:sub2}
 	
 	\caption{Attractors with period $p= \tau +1$ for $\tau = 2$ and $\tau = 3$. a) Conjunctive network with firing memory and delay $dt_i = 2$ in every node that admits a limit cycle with period $3$. On the left hand side we show the interaction graph of the network and on the right hand side there is the transitions graph of the cycle. b) Conjunctive network with firing memory and delay $dt_i = 3$ in every node that admits a limit cycle with period $4$. On the left hand side we show the interaction graph of the network and on the right hand side there is the transitions graph of the cycle.}
 	\label{fig:example}
 \end{figure}
\begin{proposition}
	Let $\tau \geq 2$. There exists a conjunctive network with firing memory and delay $dt_i = \tau$ in every $i$ allowing attractors with period $p = \tau + 1$.
	\label{prop:cycles}
\end{proposition}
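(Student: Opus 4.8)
The plan is to write down explicitly the network depicted in Figure~\ref{fig:example}. I take the interaction graph to be the complete graph $K_{\tau+1}$ on vertex set $\{0,1,\dots,\tau\}$, equipped with the conjunctive rule, so that $F_i(x)=\bigwedge_{j\neq i}x_j$ for every $i$, and with the uniform maximum delay $dt_i=\tau$. Using the single-number notation for the states of a network with firing memory, I consider for each $s\in\{0,\dots,\tau\}$ the configuration $c_s$ in which vertex $i$ carries the delay value $(i-s)\bmod(\tau+1)$; thus in $c_s$ exactly one vertex, namely $v:=s\bmod(\tau+1)$, is in boolean state $0$ (its delay is $0$) and all the remaining $\tau$ vertices are in boolean state $1$. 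The claimed attractor will be the orbit of $c_0$.

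The core of the argument is the computation of a single generic transition step, for which I would show $F^{dt}(c_s)=c_{(s+1)\bmod(\tau+1)}$. Since in $c_s$ every vertex other than $v$ has boolean state $1$, we get $F_v(c_s)=1$, and the firing-memory rule resets $\Delta_v$ to $dt_v=\tau$. On the other hand, for every $j\neq v$ the conjunction defining $F_j$ contains the coordinate $x_v=0$, hence $F_j(c_s)=0$; as $\Delta_j\geq 1$ in $c_s$, the rule decrements $\Delta_j$ to $\Delta_j-1$. Therefore the delay vector of $c_s$, namely $\bigl((i-s)\bmod(\tau+1)\bigr)_i$, is sent to the vector equal to $\tau$ at position $v$ and to $\bigl((i-s)\bmod(\tau+1)\bigr)-1$ everywhere else, and a short check shows this is precisely $\bigl((i-(s+1))\bmod(\tau+1)\bigr)_i$, i.e. $c_{(s+1)\bmod(\tau+1)}$. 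The one place to be careful is the vertex carrying delay $\tau$ in $c_s$: it is distinct from $v$, so it has $F_j=0$ and its delay decreases to $\tau-1$, matching the target; I expect this small bit of modular bookkeeping to be the only subtle point in the proof.

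Chaining these steps gives the trajectory $c_0\to c_1\to\dots\to c_\tau\to c_0$, so $c_0$ already lies on a limit cycle and $\{c_0,\dots,c_\tau\}$ is an attractor. Its period is exactly $\tau+1$: by induction on the generic step, $c_0(t)=c_{t\bmod(\tau+1)}$, and the configurations $c_0,\dots,c_\tau$ are pairwise distinct because the position of the unique delay-$0$ coordinate differs among them; hence the least $p>0$ with $c_0(p)=c_0$ is $\tau+1$. Since $\tau\geq 2$ we obtain an attractor of period $p=\tau+1\geq 3$, in contrast with the period-at-most-$2$ behaviour of conjunctive (threshold) networks without delay established in \cite{goles1985decreasing}, which finishes the proof.
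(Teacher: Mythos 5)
Your proof is correct and follows essentially the same route as the paper: the same network $K_{\tau+1}$ with uniform delay $dt_i=\tau$, the same rotating initial condition $(0,1,\dots,\tau)$, and the same observation that the unique delay-$0$ vertex resets to $\tau$ while all others decrement. The only difference is presentational — you make the generic step $c_s\mapsto c_{(s+1)\bmod(\tau+1)}$ and the minimality of the period explicit via modular arithmetic, where the paper computes the first two steps and concludes by ``iterating this process''.
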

\begin{proof}
	Let's consider the conjunctive network given by the function $F(x)_i =  \bigwedge\limits_{j \neq i} ^{} x_{j},$ where $i \in \{0,\hdots, \tau \}$.  We have that its interaction graph is given by the complete graph with $\tau+1$ vertices $K_{\tau+1}$ (see Figure \ref{fig:example}). We are going to exhibit an attractor $X$ with period $p = \tau +1$. Let $x_0 = (0123\hdots \tau)$ be the initial condition. Observe that every $i \in V$ has initial delay $i$ and every node $i \neq 0$ is in state $1$. Because of how we defined $F$ we have that in the next state every node $i \neq 0$ will be set to $0$. However, the only node that will actually be in state $0$ in its next state is $i = 1$, because every node $ i \in V \setminus \{0,1\}$ has delay $\Delta(0)_i \geq 2$. On the other hand, we have that for $i = 0$ every of its neighbours is in state $1$ in the initial condition, so it will be updated as $\tau$ in the next iteration. Thus, $x_1 = (\tau0123\hdots \tau-1)$. Now, we have that every node except $i = 1$ is in state $1$ and every node $i \in i \in V \setminus \{1,2\}$ has delay $\Delta_i(1) \geq 2$ so using the same argument we used for deducing $x_1$ we have that $x_2 = (\tau-1 \tau 0 1 2 3 \hdots, \tau-2)$. Iterating this process $\tau+1$ times we verify we have a cycle with period $p = \tau + 1$:	
	\begin{equation*}
	X=\left  \{ \begin{array}{clc}
	x_0 & =  (0123\hdots \tau-1 \tau) & \\
	x_1 & = (\tau0123\hdots \tau-1) & \\
	x_2 & =  (\tau-1 \tau 0 1 2 3 \hdots, \tau-2) & \\
	\vdots & & \\
	x_{\tau} & = (123 \hdots \tau-1\tau 0) & \\
	x_0 & = (0123\hdots \tau-1\tau) & \\
	\end{array} \right.
	\end{equation*}
\end{proof}
As a consequence of the last proposition, we have a stronger result on the period of the attractors when we consider conjunctive networks with firing memory with different maximum delay values.
\begin{theorem}
	There exists a connected conjunctive network with firing memory (and not necessarily the same values for maximum delay) which admits attractors with non polynomial period. 
	\label{teononpolydif}
\end{theorem}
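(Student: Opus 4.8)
The plan is to glue together many copies of the complete-graph gadget of Proposition~\ref{prop:cycles}, one per prime in a suitable finite set of primes, so that the period of the assembled network is at least the least common multiple of the individual gadget periods, i.e.\ the product of those primes, which is exponential in the size of the network; the only real work is to carry out the gluing so that the network stays \emph{connected} without disturbing the individual gadget dynamics.

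In more detail: fix $k$ and let $p_1<\dots<p_k$ be the $k$ smallest primes larger than $2k$ (so each $p_i=O(k\log k)$ and $\sum_{i=1}^k 1/p_i<1/2$). For each $i$ let $G_i$ be the gadget of Proposition~\ref{prop:cycles} with parameter $\tau=p_i-1$; that is, $G_i=K_{p_i}$ with uniform delay $p_i-1$, and with the initial condition from that proof it sits on an orbit of period exactly $p_i$ on which, at each time step, exactly one vertex is in state $0$, the $0$-vertex cycling through all $p_i$ vertices. Now add a single new vertex $h$ (the \emph{hub}), join it by an edge to exactly one vertex $u_i$ of each $G_i$ — take $u_i$ to be the vertex of $G_i$ that is in state $0$ at the times divisible by $p_i$ — and start $h$ in state $1$ with maximum delay $dt_h:=2k$. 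The graph is connected since $h$ reaches into every $G_i$ and each $G_i$ is connected; the network is conjunctive by construction; and the maximum delays are not uniform ($p_i-1$ inside $G_i$, and $2k$ at $h$).

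The key point is that $h$ should remain in state $1$ forever, for then, since adding an always-$1$ neighbour to the AND computed at $u_i$ does not change its value, each $G_i$ runs exactly as in isolation and stays on its period-$p_i$ orbit; meanwhile $F_h(x(t))=\bigwedge_{i=1}^k x_{u_i}(t)$ equals $0$ exactly when some $p_i$ divides $t$. The plan is to prove ``$h$ stays at state $1$ and each $G_i$ stays on its orbit'' simultaneously by induction on $t$. The inductive step needs a bound on the longest run of consecutive ``bad'' times (integers divisible by some $p_i$): a standard counting estimate shows a run of $M$ consecutive such integers satisfies $M\le\sum_i(\lfloor M/p_i\rfloor+1)\le M\sum_i 1/p_i+k$, hence $M<2k$ under our choice of primes. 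Consequently, between two resets of $\Delta_h$ its value drops by at most $2k-1<dt_h$, so $\Delta_h$ never reaches $0$ and $x_h(t)=1$ for all $t$ — the gadget behaviour used to evaluate $F_h$ up to time $t$ being supplied by the inductive hypothesis. It follows that the whole configuration is periodic with period a common multiple of the pairwise-coprime numbers $p_1,\dots,p_k$, hence at least $p_1\cdots p_k$.

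Finally, since $n=1+\sum_{i=1}^k p_i=O(k^2\log k)$ while the period is at least $p_1\cdots p_k\ge(2k)^k$, taking $k$ large makes the period exceed any fixed polynomial in $n$ (crudely, $n\le k^3$ for large $k$, so the period is at least $2^{n^{1/3}}$). I expect the gluing to be the only delicate part: connecting the gadgets without perturbing them forces the hub device, and making the hub provably permanent forces the run-length estimate that fixes $dt_h$; everything else is a direct continuation of Proposition~\ref{prop:cycles}. If one prefers to avoid even the mild estimate on the primes, one may instead take $dt_h=p_1\cdots p_k$, which trivially exceeds the gap between consecutive good times since those have positive density, and the period bound is unchanged.
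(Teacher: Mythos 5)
Your proposal is correct, and its global skeleton is the same as the paper's (one clock per prime, overall period at least the product of the primes because each component keeps its own prime period, then a prime-counting estimate to make that super-polynomial in the number of vertices --- the technique the paper attributes to Kiwi et al.); what differs, genuinely, is how you glue the clocks into a connected network without disturbing them, which is exactly the delicate step. The paper chains the clocks $K_{p_i+1}$ directly, adding one edge between a chosen vertex $s_i$ of each clock and a vertex $s_{i+1}$ of the next, and controls interference only through the initial condition (no two adjacent vertices in state $0$); since a connector vertex is reset precisely at the step where all its neighbours, including the external one, are in state $1$, one must in fact argue that $s_i$ and $s_{i+1}$ are never simultaneously $0$ at any later time, a point the paper leaves implicit. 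Your hub construction avoids this issue altogether: the hub $h$ sees a $0$ exactly at the times divisible by some $p_i$, your counting lemma $M\le\sum_i(\lfloor M/p_i\rfloor+1)<M/2+k$ bounds every run of such times by $2k-1<dt_h$, so $h$ is provably frozen at $1$ and each clock literally runs in isolation (the induction you sketch closes, provided you also fix the initial delay, e.g.\ $\Delta_h(0)=dt_h$, which covers the bad time $t=0$ since $t=1$ is already good); the minimal period of the attractor is then a multiple of each prime, hence at least their product. What each approach buys: the paper's chaining needs no auxiliary vertex and no counting lemma, and with the Chebyshev-type estimates it states the sharper bound $T\ge\exp(\Omega(\sqrt{|V|\log|V|}))$; your version trades this for robustness and self-containedness (everything reduces to Proposition~\ref{prop:cycles} plus the run-length estimate, with only a mild restriction on the primes, $p_i>2k$, and a hub delay linear in $k$), and although you only claim a $2^{n^{1/3}}$-type bound, your own choice of primes in fact already gives $n=O(k^2\log k)$ and $\log T\ge k\log(2k)=\Omega(\sqrt{n\log n})$, i.e.\ the same asymptotic strength as the paper's. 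The closing variant with $dt_h=p_1\cdots p_k$ is unnecessary and slightly weakens the construction (it makes a single delay parameter exponentially large in the number of vertices), so the $dt_h=2k$ version is the one to keep.
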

The main idea in this proof is to use the latter proposition to construct conjunctive networks with firing memory such that each of these networks admits attractors with prime period. Then, connecting this components as it is shown in Figure \ref{fig:nonpoly} we get a connected network that admits attractors with non polynomial period. This technique was first introduced in \cite{kiwi1994no} and a complete proof is available in the Appendix Section \ref{sec:nonpolyproof}.
 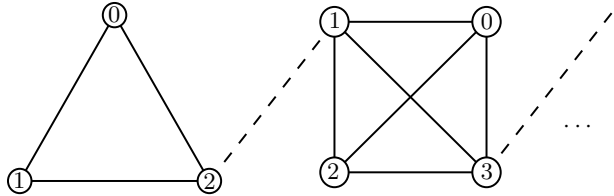
\begin{figure}
	\centering
	\tikzset{every picture/.style={line width=0.75pt}} %set default line width to 0.75pt        
	
	\begin{tikzpicture}[x=0.75pt,y=0.75pt,yscale=-0.5,xscale=0.5]
	%uncomment if require: \path (0,225); %set diagram left start at 0, and has height of 225
	
	%Straight Lines [id:da4496625412143227] 
	\draw  [dash pattern={on 4.5pt off 4.5pt}]  (628.36,29.29) -- (502.19,189.97) ;

	%Straight Lines [id:da8319858572077528] 
	\draw  [dash pattern={on 4.5pt off 4.5pt}]  (348.81,38.03) -- (222.64,198.71) ;

	%Straight Lines [id:da41001561023492417] 
	\draw    (348.81,38.03) -- (502.19,189.97) ;

	%Straight Lines [id:da31548361463285857] 
	\draw    (348.81,189.97) -- (502.19,38.03) ;

	%Shape: Rectangle [id:dp015589741889788566] 
	\draw   (348.81,38.03) -- (502.19,38.03) -- (502.19,189.97) -- (348.81,189.97) -- cycle ;
	%Shape: Ellipse [id:dp110005296343614] 
	\draw  [fill={rgb, 255:red, 255; green, 255; blue, 255 }  ,fill opacity=1 ] (487.89,38.03) .. controls (487.89,29.73) and (494.29,23) .. (502.19,23) .. controls (510.1,23) and (516.5,29.73) .. (516.5,38.03) .. controls (516.5,46.33) and (510.1,53.06) .. (502.19,53.06) .. controls (494.29,53.06) and (487.89,46.33) .. (487.89,38.03) -- cycle ;
	%Shape: Ellipse [id:dp3130289688531007] 
	\draw  [fill={rgb, 255:red, 255; green, 255; blue, 255 }  ,fill opacity=1 ] (487.89,189.97) .. controls (487.89,181.67) and (494.29,174.94) .. (502.19,174.94) .. controls (510.1,174.94) and (516.5,181.67) .. (516.5,189.97) .. controls (516.5,198.27) and (510.1,205) .. (502.19,205) .. controls (494.29,205) and (487.89,198.27) .. (487.89,189.97) -- cycle ;
	%Shape: Ellipse [id:dp7559141579701321] 
	\draw  [fill={rgb, 255:red, 255; green, 255; blue, 255 }  ,fill opacity=1 ] (334.5,189.97) .. controls (334.5,181.67) and (340.9,174.94) .. (348.81,174.94) .. controls (356.71,174.94) and (363.11,181.67) .. (363.11,189.97) .. controls (363.11,198.27) and (356.71,205) .. (348.81,205) .. controls (340.9,205) and (334.5,198.27) .. (334.5,189.97) -- cycle ;
	%Shape: Ellipse [id:dp6202249598689311] 
	\draw  [fill={rgb, 255:red, 255; green, 255; blue, 255 }  ,fill opacity=1 ] (334.5,38.03) .. controls (334.5,29.73) and (340.9,23) .. (348.81,23) .. controls (356.71,23) and (363.11,29.73) .. (363.11,38.03) .. controls (363.11,46.33) and (356.71,53.06) .. (348.81,53.06) .. controls (340.9,53.06) and (334.5,46.33) .. (334.5,38.03) -- cycle ;
	
	%Shape: Triangle [id:dp7293540277434976] 
	\draw   (127,31.29) -- (222.64,198.71) -- (31.36,198.71) -- cycle ;
	%Shape: Ellipse [id:dp28466343757627754] 
	\draw  [fill={rgb, 255:red, 255; green, 255; blue, 255 }  ,fill opacity=1 ] (210.78,198.71) .. controls (210.78,191.93) and (216.09,186.42) .. (222.64,186.42) .. controls (229.19,186.42) and (234.5,191.93) .. (234.5,198.71) .. controls (234.5,205.5) and (229.19,211) .. (222.64,211) .. controls (216.09,211) and (210.78,205.5) .. (210.78,198.71) -- cycle ;
	%Shape: Ellipse [id:dp6469080857695841] 
	\draw  [fill={rgb, 255:red, 255; green, 255; blue, 255 }  ,fill opacity=1 ] (19.5,198.71) .. controls (19.5,191.93) and (24.81,186.42) .. (31.36,186.42) .. controls (37.91,186.42) and (43.22,191.93) .. (43.22,198.71) .. controls (43.22,205.5) and (37.91,211) .. (31.36,211) .. controls (24.81,211) and (19.5,205.5) .. (19.5,198.71) -- cycle ;
	%Shape: Ellipse [id:dp1899770339483977] 
	\draw  [fill={rgb, 255:red, 255; green, 255; blue, 255 }  ,fill opacity=1 ] (115.14,31.29) .. controls (115.14,24.5) and (120.45,19) .. (127,19) .. controls (133.55,19) and (138.86,24.5) .. (138.86,31.29) .. controls (138.86,38.07) and (133.55,43.58) .. (127,43.58) .. controls (120.45,43.58) and (115.14,38.07) .. (115.14,31.29) -- cycle ;

	% Text Node
	\draw (349.2,36.36) node  [align=left] {$\displaystyle 1$};
	% Text Node
	\draw (502.59,37.19) node  [align=left] {$\displaystyle 0$};
	% Text Node
	\draw (502.59,189.14) node  [align=left] {$\displaystyle 3$};
	% Text Node
	\draw (347.61,189.14) node  [align=left] {$\displaystyle 2$};
	% Text Node
	\draw (126.63,30.52) node  [align=left] {$\displaystyle 0$};
	% Text Node
	\draw (30.99,197.94) node  [align=left] {$\displaystyle 1$};
	% Text Node
	\draw (223.01,198.71) node  [align=left] {$\displaystyle 2$};
	% Text Node
	\draw (594,144) node   {$\cdots $};
	\end{tikzpicture}
	\caption{A construction of a conjunctive network with firing memory which admits attractors of non polynomial period. The interaction graph is a connected union of complete graphs $K_{p_i +1 }$ with $p_i$ a prime number. Each component is setted to an initial condition according to Proposition \ref{prop:cycles} and the connections between components are arbitrary. Global initial condition $x$ is defined such that nodes in initial state $0$ are not allowed to be connected.}
	\label{fig:nonpoly}
\end{figure}
%\vspace{-0.2in}

One natural question that arise in the context of the last theorem is if we can say something about the period of the attractors in the case when we restrict a conjunctive network with firing memory to have the same delay $\tau$ in all its coordinates. Would there exists a network of this class which dynamics allows attractors with non-polynomial period? The answer is yes, but in order to exhibit it, we need to prove a proposition that is analogous to Proposition \ref{prop:cycles} (complete proof in the Appendix \ref{sec:nonpolyproof}). 
\begin{proposition}
	Let $\tau \geq 2$.  For every integer $k\geq2$, there exists a conjunctive network with firing memory and maximum delay $dt_i = \tau$ in every node $i$ which admits attractors with period $k(\tau+1).$
	\label{propcycleshom}
\end{proposition}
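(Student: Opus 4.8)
The plan is to imitate the proof of Proposition~\ref{prop:cycles}: build one explicit conjunctive network $F$ with uniform maximum delay $dt_i=\tau$, pick one explicit initial configuration $x_0$, iterate the firing-memory rule $k(\tau+1)$ times, and read off a limit cycle of period exactly $k(\tau+1)$. The elementary ingredient is the rotating ``defect'' of Proposition~\ref{prop:cycles} --- the single coordinate at state $0$ --- which turns $K_{\tau+1}$ with delay $\tau$ into a clock of period $\tau+1$; the whole point is to arrange that a \emph{single} defect travels through $k$ such bricks in succession, so that one full turn takes $k(\tau+1)$ steps.

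The cleanest realization to write down first takes the interaction graph to be the circulant digraph on $n=k(\tau+1)$ vertices $v_0,\dots,v_{n-1}$ in which the in-neighbours of $v_i$ are $v_{i-1},v_{i-2},\dots,v_{i-\tau}$ (indices mod $n$); thus $F_i(x)=\bigwedge_{j=1}^{\tau}x_{v_{i-j}}$, every in-degree (hence every threshold) equals $\tau$, and every maximum delay is $\tau$. As initial condition take the ``staircase plus plateau'' configuration $x_0$ given, in the single-value notation of the paper, by $v_i\mapsto i$ for $0\le i\le\tau$ and $v_i\mapsto\tau$ for $\tau<i\le n-1$. The claim to verify is that one step of the global rule is exactly the cyclic shift of all values by one position, so that $x(t)$ is the shift of $x_0$ by $t$; this reduces to the same three-case bookkeeping as in Proposition~\ref{prop:cycles}, according to whether the value $c$ to be produced at $v_i$ (namely the old value of $v_{i-1}$) equals $\tau$ --- in which case one needs $F_{v_i}=1$, i.e. the defect is not among the $\tau$ in-neighbours of $v_i$ --- lies in $\{1,\dots,\tau-1\}$ --- in which case one needs $F_{v_i}=0$ and the old value of $v_i$ to be $c+1$ --- or equals $0$ --- in which case one needs $F_{v_i}=0$ and the old value of $v_i$ to be $1$. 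Since $n=k(\tau+1)>2\tau$, the defect never reappears among the in-neighbours of a vertex too soon after that vertex has been reset, so all three cases go through; the orbit is then the set of the $n$ cyclic shifts of $x_0$, and it has period exactly $n=k(\tau+1)$ because $x_0$ has a unique coordinate equal to $0$.

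To stay closer to the symmetric setting of Proposition~\ref{prop:cycles}, the alternative plan is to glue $k$ copies $Q_0,\dots,Q_{k-1}$ of the $K_{\tau+1}$-clock cyclically through small interface gadgets, and to phase-shift their staircase initial conditions so that at each instant exactly one $Q_i$ carries the travelling defect while the other blocks sit frozen at the all-$\tau$ configuration (on which the AND-firing-memory rule acts as the identity, as long as their neighbours stay frozen too). I expect the real difficulty to be precisely this interface: in an undirected graph a coordinate adjacent to the defect is \emph{forced} to decrement, and a coordinate that stops seeing the defect is \emph{forced} to jump back to $\tau$, so a careless gluing makes the defect either circulate forever inside a single $Q_i$, or split into several defects that re-infect already-recovered coordinates (an ``echo''), or reset the hand-off coordinate one step too early and annihilate it; each of these either collapses the orbit onto a fixed point or shortens its period. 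The interface therefore has to be designed so that, exactly when the defect reaches the boundary of $Q_i$: the unique coordinate of $Q_{i+1}$ currently at value $1$ is the one adjacent to that boundary (so the defect is handed on), the other coordinates of $Q_i$ have already been spent (sit at high values, so none of them drops to $0$ behind the departing defect), and the still-dormant blocks $Q_{i+2},\dots$ see only states $\ge 1$ and stay frozen. Once such an interface is fixed, the remaining step is the purely mechanical iteration of the global rule, block by block, over one full turn of $k(\tau+1)$ steps --- exactly as in the proof of Proposition~\ref{prop:cycles}, but with $k$ coupled blocks.
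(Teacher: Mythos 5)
Your first construction --- the circulant \emph{digraph} in which $v_i$ computes the AND of $v_{i-1},\dots,v_{i-\tau}$, with the staircase-plus-plateau initial condition --- is verified essentially correctly: one update is the cyclic shift, and since the configuration has a unique coordinate at $0$ the period is exactly $n=k(\tau+1)$. But it is a directed network, and that is not what this proposition is about. The paper's setting, announced in the abstract and used throughout, is symmetric (non-directed) conjunctive networks; this is precisely what makes the statement non-trivial, since symmetric threshold networks without memory only admit fixed points and $2$-cycles, whereas directed conjunctive networks have long-period attractors even without firing memory (a directed cycle already rotates any configuration). Moreover the object this proposition is meant to produce --- the block of Figure \ref{fig:structblock}, an undirected gadget --- is reused verbatim in Theorem \ref{teo:nonpolysamedelay} and in the circuit-simulation of Proposition \ref{prop:circand}, so a directed circulant cannot play its role. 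As you yourself observe, the naive symmetrization (AND over $v_{i\pm1},\dots,v_{i\pm\tau}$) breaks down because the vertex behind the defect is also forced to decrement.

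For the symmetric case you only offer a plan: glue $k$ copies of the $K_{\tau+1}$ clock through an ``interface'' whose required properties you list (hand the defect forward, avoid echoes behind it, keep dormant blocks frozen) but which you never construct --- and that construction is the entire content of the proof. The paper's resolution is in fact structurally different from the gluing you sketch: it takes a ring of $k(\tau+1)$ nodes (a path of $\tau+1$ nodes per block, the $k$ blocks joined cyclically) and attaches to \emph{every} ring node $\tau-1$ private copies of the $K_{\tau+1}$ clock (Figure \ref{fig:appendixstructblock}); the clocks run their own period-$(\tau+1)$ attractors, and their phases, fixed by the initial condition of Figure \ref{fig:stblocks}, dictate at each step which ring nodes are forced to decrement and which one may refresh, so that no node behind the travelling $0$ can drop to $0$ and the unique defect advances one block every $\tau+1$ steps, giving period $k(\tau+1)$ (Proposition \ref{Appendix:propcycle}). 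Without exhibiting such a gadget and checking at least one full hand-off between consecutive blocks, your argument for the symmetric statement remains a sketch with its key step missing.
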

A gadget that we use in the proof of the latter result and that is very important for our main result as well is the \emph{block}. To define a block, let us first define $C = K_{\tau +1}$ as complete graph  with $\tau + 1$ vertices. We recall that these gadget defines a conjunctive network with firing memory which allows cycles of length $\tau +1$ when we have that the maximum delay of every node is $dt_i = \tau$ (see Figure \ref{fig:example}). We are going to call this structure a \emph{clock}.  We define a block $B$ as a $\tau+1$-path such that every node has a $\tau -1$ neighbours in a different clock beside its neighbour in the path as it is shown in Figure \ref{fig:structblock}.  \\[7pt]
An immediate consequence of Proposition \ref{propcycleshom} is that we can exhibit a conjunctive network with firing memory (that have the same maximum delay values in every coordinate) which admits attractors with non polynomial period. This is possible because we can replicate, in this context, the same strategy we used to prove Theorem \ref{teononpolydif}. The complete proof of this result is available in the Appendix \ref{sec:nonpolyproof}.
	\begin{figure}[t]
	\centering		
	\tikzset{every picture/.style={line width=0.75pt}} %set default line width to 0.75pt        
	
	\begin{tikzpicture}[x=0.75pt,y=0.75pt,yscale=-0.5,xscale=0.5]
	%uncomment if require: \path (0,300); %set diagram left start at 0, and has height of 300
	
	%Straight Lines [id:da7793732355738401] 
	\draw    (481.5,171) -- (541,171) ;

	%Straight Lines [id:da33320963799675307] 
	\draw    (393,171) -- (452.5,171) ;

	%Straight Lines [id:da9982643034980614] 
	\draw    (172,171) -- (247.5,171) ;

	%Straight Lines [id:da6305898668226269] 
	\draw    (62,171) -- (137.5,171) ;

	%Shape: Rectangle [id:dp4588495680475857] 
	\draw  [fill={rgb, 255:red, 255; green, 255; blue, 255 }  ,fill opacity=1 ] (27,151) -- (97,151) -- (97,191) -- (27,191) -- cycle ;
	%Shape: Rectangle [id:dp36593315683530037] 
	\draw  [fill={rgb, 255:red, 255; green, 255; blue, 255 }  ,fill opacity=1 ] (137,151) -- (207,151) -- (207,191) -- (137,191) -- cycle ;
	%Straight Lines [id:da35541436591703834] 
	\draw    (283,171) -- (358.5,171) ;

	%Shape: Rectangle [id:dp5465447409823256] 
	\draw  [fill={rgb, 255:red, 255; green, 255; blue, 255 }  ,fill opacity=1 ] (248,151) -- (318,151) -- (318,191) -- (248,191) -- cycle ;
	%Shape: Rectangle [id:dp2889532562959056] 
	\draw  [fill={rgb, 255:red, 255; green, 255; blue, 255 }  ,fill opacity=1 ] (358,151) -- (428,151) -- (428,191) -- (358,191) -- cycle ;
	%Straight Lines [id:da5477995783183771] 
	\draw    (541,171) -- (616.5,171) ;

	%Shape: Rectangle [id:dp6611916296824716] 
	\draw  [fill={rgb, 255:red, 255; green, 255; blue, 255 }  ,fill opacity=1 ] (506,151) -- (576,151) -- (576,191) -- (506,191) -- cycle ;
	%Shape: Brace [id:dp7562140163971955] 
	\draw   (109.5,76) .. controls (109.55,71.33) and (107.24,68.98) .. (102.57,68.93) -- (70.57,68.61) .. controls (63.9,68.54) and (60.59,66.17) .. (60.64,61.5) .. controls (60.59,66.17) and (57.24,68.47) .. (50.57,68.4)(53.57,68.43) -- (18.57,68.08) .. controls (13.9,68.03) and (11.55,70.34) .. (11.5,75.01) ;
	%Straight Lines [id:da2958830536366863] 
	\draw    (27.5,92) -- (27,151) ;

	%Shape: Circle [id:dp10886912839990037] 
	\draw  [fill={rgb, 255:red, 255; green, 255; blue, 255 }  ,fill opacity=1 ] (12,92) .. controls (12,83.44) and (18.94,76.5) .. (27.5,76.5) .. controls (36.06,76.5) and (43,83.44) .. (43,92) .. controls (43,100.56) and (36.06,107.5) .. (27.5,107.5) .. controls (18.94,107.5) and (12,100.56) .. (12,92) -- cycle ;
	%Straight Lines [id:da5444084032693257] 
	\draw    (96.5,93) -- (96,152) ;

	%Shape: Circle [id:dp7730743194504233] 
	\draw  [fill={rgb, 255:red, 255; green, 255; blue, 255 }  ,fill opacity=1 ] (81,93) .. controls (81,84.44) and (87.94,77.5) .. (96.5,77.5) .. controls (105.06,77.5) and (112,84.44) .. (112,93) .. controls (112,101.56) and (105.06,108.5) .. (96.5,108.5) .. controls (87.94,108.5) and (81,101.56) .. (81,93) -- cycle ;
	
	%Straight Lines [id:da006591036483765356] 
	\draw    (137.5,91) -- (137,150) ;

	%Straight Lines [id:da9349306475319963] 
	\draw    (206.5,92) -- (206,151) ;

	%Straight Lines [id:da1981895346253757] 
	\draw    (248.5,91) -- (248,150) ;

	%Straight Lines [id:da6559813489768787] 
	\draw    (317.5,92) -- (317,151) ;

	%Straight Lines [id:da3058400459455948] 
	\draw    (359.5,91) -- (359,150) ;

	%Straight Lines [id:da5395533258379659] 
	\draw    (428.5,92) -- (428,151) ;

	%Straight Lines [id:da7969969459712731] 
	\draw    (506.5,91) -- (506,150) ;

	%Straight Lines [id:da07232346592530337] 
	\draw    (575.5,92) -- (575,151) ;

	%Shape: Circle [id:dp35715548617664106] 
	\draw  [fill={rgb, 255:red, 255; green, 255; blue, 255 }  ,fill opacity=1 ] (122,91) .. controls (122,82.44) and (128.94,75.5) .. (137.5,75.5) .. controls (146.06,75.5) and (153,82.44) .. (153,91) .. controls (153,99.56) and (146.06,106.5) .. (137.5,106.5) .. controls (128.94,106.5) and (122,99.56) .. (122,91) -- cycle ;
	%Shape: Circle [id:dp05187329347038272] 
	\draw  [fill={rgb, 255:red, 255; green, 255; blue, 255 }  ,fill opacity=1 ] (191,92) .. controls (191,83.44) and (197.94,76.5) .. (206.5,76.5) .. controls (215.06,76.5) and (222,83.44) .. (222,92) .. controls (222,100.56) and (215.06,107.5) .. (206.5,107.5) .. controls (197.94,107.5) and (191,100.56) .. (191,92) -- cycle ;
	%Shape: Circle [id:dp9414162857142212] 
	\draw  [fill={rgb, 255:red, 255; green, 255; blue, 255 }  ,fill opacity=1 ] (233,91) .. controls (233,82.44) and (239.94,75.5) .. (248.5,75.5) .. controls (257.06,75.5) and (264,82.44) .. (264,91) .. controls (264,99.56) and (257.06,106.5) .. (248.5,106.5) .. controls (239.94,106.5) and (233,99.56) .. (233,91) -- cycle ;
	%Shape: Circle [id:dp6147502416545361] 
	\draw  [fill={rgb, 255:red, 255; green, 255; blue, 255 }  ,fill opacity=1 ] (302,92) .. controls (302,83.44) and (308.94,76.5) .. (317.5,76.5) .. controls (326.06,76.5) and (333,83.44) .. (333,92) .. controls (333,100.56) and (326.06,107.5) .. (317.5,107.5) .. controls (308.94,107.5) and (302,100.56) .. (302,92) -- cycle ;
	%Shape: Circle [id:dp6098296648633619] 
	\draw  [fill={rgb, 255:red, 255; green, 255; blue, 255 }  ,fill opacity=1 ] (344,91) .. controls (344,82.44) and (350.94,75.5) .. (359.5,75.5) .. controls (368.06,75.5) and (375,82.44) .. (375,91) .. controls (375,99.56) and (368.06,106.5) .. (359.5,106.5) .. controls (350.94,106.5) and (344,99.56) .. (344,91) -- cycle ;
	%Shape: Circle [id:dp36425474798694135] 
	\draw  [fill={rgb, 255:red, 255; green, 255; blue, 255 }  ,fill opacity=1 ] (413,92) .. controls (413,83.44) and (419.94,76.5) .. (428.5,76.5) .. controls (437.06,76.5) and (444,83.44) .. (444,92) .. controls (444,100.56) and (437.06,107.5) .. (428.5,107.5) .. controls (419.94,107.5) and (413,100.56) .. (413,92) -- cycle ;
	%Shape: Circle [id:dp9177982566367411] 
	\draw  [fill={rgb, 255:red, 255; green, 255; blue, 255 }  ,fill opacity=1 ] (491,91) .. controls (491,82.44) and (497.94,75.5) .. (506.5,75.5) .. controls (515.06,75.5) and (522,82.44) .. (522,91) .. controls (522,99.56) and (515.06,106.5) .. (506.5,106.5) .. controls (497.94,106.5) and (491,99.56) .. (491,91) -- cycle ;
	%Shape: Circle [id:dp140606505581079] 
	\draw  [fill={rgb, 255:red, 255; green, 255; blue, 255 }  ,fill opacity=1 ] (560,92) .. controls (560,83.44) and (566.94,76.5) .. (575.5,76.5) .. controls (584.06,76.5) and (591,83.44) .. (591,92) .. controls (591,100.56) and (584.06,107.5) .. (575.5,107.5) .. controls (566.94,107.5) and (560,100.56) .. (560,92) -- cycle ;
	%Shape: Brace [id:dp6154982720304323] 
	\draw   (20.5,200) .. controls (20.51,204.67) and (22.84,207) .. (27.51,206.99) -- (296.51,206.52) .. controls (303.18,206.51) and (306.51,208.83) .. (306.52,213.5) .. controls (306.51,208.83) and (309.84,206.49) .. (316.51,206.48)(313.51,206.49) -- (585.51,206.01) .. controls (590.18,206) and (592.51,203.67) .. (592.5,199) ;

	% Text Node
	\draw (466,169) node   {$\cdots $};
	% Text Node
	\draw (62,89) node   {$\cdots $};
	% Text Node
	\draw (172,88) node   {$\cdots $};
	% Text Node
	\draw (283,88) node   {$\cdots $};
	% Text Node
	\draw (394,88) node   {$\cdots $};
	% Text Node
	\draw (541,88) node   {$\cdots $};
	% Text Node
	\draw (63,44) node   {$\tau \ -1\ \text{clocks}$};
	% Text Node
	\draw (322,229) node   {$\tau \ +1\ \text{nodes}$};
	\end{tikzpicture}
	\caption{Structure of a block used in Proposition \ref{prop:cycles} (see Appendix Section \ref{sec:nonpolyproof},  Proposition \ref{Appendix:propcycle} for the complete proof) to define a conjunctive network with firing memory and maximum delay values $dt_i = \tau$ for every node $i$ that admits attractors with period $k(\tau+1)$. Every circle in the figure represents clock  associated to a node $r$ represented by a square. This gadget has $\tau + 1$ nodes and every node has $\tau -1$ clocks.}
	\label{fig:structblock}
\end{figure}
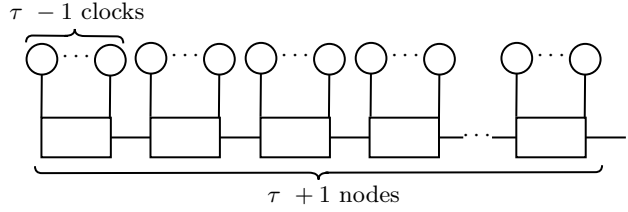	
\begin{theorem}
	Let $\tau \geq 2$.  There exists a conjunctive network with firing memory and maximum delay  $dt_i = \tau$ in every node $i$ which admits attractors with non polynomial period.
	\label{teo:nonpolysamedelay}
\end{theorem}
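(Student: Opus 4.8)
The plan is to mimic, in the uniform-delay setting, the argument behind Theorem~\ref{teononpolydif}, using Proposition~\ref{propcycleshom} in place of Proposition~\ref{prop:cycles}. Fix $\tau\ge 2$ and choose $m$ distinct primes $p_1<p_2<\dots<p_m$, all larger than $\tau+1$. By Proposition~\ref{propcycleshom}, for each $j$ there is a conjunctive network with firing memory $\mathcal N_j$, with maximum delay $dt_i=\tau$ at every node, admitting an attractor $A_j$ of period $p_j(\tau+1)$; moreover the explicit block-and-clock construction behind that proposition produces $\mathcal N_j$ with a number of nodes polynomial in $p_j$ and $\tau$ (of order $p_j\,\tau^3$: each of the $p_j$ blocks contributes $\tau+1$ path nodes, each carrying $\tau-1$ clocks of size $\tau+1$).

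Next I would form the disjoint union $\mathcal N=\mathcal N_1\sqcup\dots\sqcup\mathcal N_m$. This is again a conjunctive network with firing memory and uniform maximum delay $\tau$, and the configuration obtained by placing each $\mathcal N_j$ on its attractor $A_j$ is periodic of period $\operatorname{lcm}\bigl(p_1(\tau+1),\dots,p_m(\tau+1)\bigr)=(\tau+1)\prod_{j=1}^m p_j$, since the $p_j$ are pairwise coprime and coprime to $\tau+1$. The network has $N=\sum_j O(p_j\tau^3)=O\bigl(\tau^3\textstyle\sum_{j\le m}p_j\bigr)$ nodes, which for fixed $\tau$ is polynomial in $m$, whereas $\prod_{j\le m}p_j=e^{(1+o(1))p_m}$ grows faster than any polynomial in $m$ by Chebyshev's estimate on the primorial. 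Letting $m\to\infty$ therefore yields a family of conjunctive networks with firing memory and uniform delay $\tau$ whose attractor periods are not bounded by any polynomial in the number of nodes; since the theorem does not require connectivity, this already establishes the statement.

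To also obtain a \emph{connected} example — the form in which Theorem~\ref{teononpolydif} is phrased — I would then add edges joining the components, exactly as in Figure~\ref{fig:nonpoly}, connecting only nodes that at time $0$ are in state $1$ with full delay $\tau$. The key is to choose the initial phase of each $\mathcal N_j$ so that, for every added edge $\{a,b\}$, at every moment when $a$ is about to fire (i.e. $F_a(x(t))=1$, which for a clock node means it is the unique $0$-node of its clock) its partner $b$ is in state $1$, and symmetrically. This can be arranged because any two component periods share the common factor $\tau+1\ge 3$, so by the Chinese Remainder Theorem there is enough freedom to keep the periodic ``firing windows'' of the two endpoints of each edge disjoint. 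Under such a choice every newly added AND-input equals $1$ precisely when it could matter, so the local rule of each node is unchanged along the relevant trajectory, each component keeps evolving as in $\mathcal N$, and the global period is still $(\tau+1)\prod_{j}p_j$.

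The main obstacle is exactly this last point: the AND rule is fragile, since a single spurious $0$ fed into the $0$-node of a clock prevents it from firing and collapses that clock — hence its whole block — to the all-$0$ fixed point, destroying the long period. Guaranteeing that the added inter-component edges never inject such a $0$ at a bad time is the delicate part, and it is handled by the phase argument above, which is the idea introduced in \cite{kiwi1994no}. The remaining verifications (the node count of the block construction, that the required families of ``$0$-windows'' can be made simultaneously disjoint, and that the attractor of each component is genuinely preserved) are routine, and a complete proof is deferred to Appendix~\ref{sec:nonpolyproof}.
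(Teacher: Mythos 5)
Your proposal is correct and follows essentially the same route as the paper: components obtained from Proposition \ref{propcycleshom} for distinct primes, joined into a connected network through clock nodes whose firing phases are staggered modulo $\tau+1$ so that the added AND-inputs never supply a $0$ at a firing time, and prime-counting estimates to show the resulting period $(\tau+1)\prod_j p_j$ is superpolynomial in the number of nodes. The paper realizes your phase condition concretely by joining the clock attached to the first path vertex of one component to the clock attached to the second path vertex of the next (phases offset by one, so the two clock nodes are never simultaneously $0$), and it works directly with the connected construction rather than first invoking the disjoint-union observation.
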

%\vspace{-0.65in}
	%\vspace{-0.25in}
\newpage
\section{\textsc{2-And-Prediction} is \textbf{PSPACE}-complete.}	
\label{sec:PSPACE}
%\vspace{-0in}
The fact that there exists conjunctive networks with firing memory that admits attractors with non polynomial period together with the structure of the gadgets we described, strongly suggest that there are conjunctive networks that are able to simulate boolean circuits. The next results confirm this insight establishing that conjunctive networks with firing memory are able to simulate iterated boolean circuits. We remark that previous results on the attractors period hold for an arbitrary value for the maximum delay that is defined for all the nodes in the given networks. So, we address the capability of conjunctive networks with firing  memory and maximum delay $dt_i = 2$ in every node $i$ to simulate an arbitrary iterated boolean circuit. These results have consequences related to computational complexity and universality of conjunctive networks with firing memory.
\vspace{-1.2in}
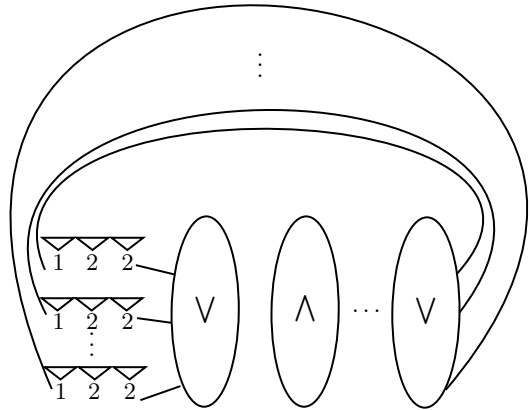
\begin{figure}[htp]
\centering

\tikzset{every picture/.style={line width=0.75pt}} %set default line width to 0.75pt        

\begin{tikzpicture}[x=1pt,y=0.4pt,yscale=-1.4,xscale=0.78]
\path (100,280); %set diagram left start at 0, and has height of 280

%Shape: Triangle [id:dp47986887756275287] 
\draw   (211.11,168.43) -- (203.43,160.14) -- (218.8,160.14) -- cycle ;
%Shape: Triangle [id:dp21594274069996366] 
\draw   (227.69,168.43) -- (220.01,160.14) -- (235.37,160.14) -- cycle ;
%Shape: Triangle [id:dp5350868429158915] 
\draw   (244.61,168.43) -- (236.93,160.14) -- (252.3,160.14) -- cycle ;
%Shape: Triangle [id:dp8228989001022171] 
\draw   (212.21,255.92) -- (204.53,247.63) -- (219.89,247.63) -- cycle ;
%Shape: Triangle [id:dp855189489949729] 
\draw   (228.79,255.92) -- (221.1,247.63) -- (236.47,247.63) -- cycle ;
%Shape: Triangle [id:dp47823196005150703] 
\draw   (245.71,255.92) -- (238.02,247.63) -- (253.39,247.63) -- cycle ;
%Shape: Ellipse [id:dp12579974952630302] 
\draw   (266.05,210.03) .. controls (266.35,174.5) and (273.88,145.7) .. (282.87,145.7) .. controls (291.87,145.7) and (298.93,174.5) .. (298.64,210.03) .. controls (298.35,245.56) and (290.82,274.37) .. (281.82,274.37) .. controls (272.82,274.37) and (265.76,245.56) .. (266.05,210.03) -- cycle ;
%Shape: Ellipse [id:dp4998043472001883] 
\draw   (314.51,210.03) .. controls (314.8,174.5) and (322.33,145.7) .. (331.33,145.7) .. controls (340.32,145.7) and (347.38,174.5) .. (347.09,210.03) .. controls (346.8,245.56) and (339.27,274.37) .. (330.27,274.37) .. controls (321.27,274.37) and (314.22,245.56) .. (314.51,210.03) -- cycle ;
%Shape: Ellipse [id:dp6289277114081964] 
\draw   (373.09,210.67) .. controls (373.38,175.14) and (380.91,146.34) .. (389.91,146.34) .. controls (398.91,146.34) and (405.96,175.14) .. (405.67,210.67) .. controls (405.38,246.2) and (397.85,275) .. (388.85,275) .. controls (379.85,275) and (372.8,246.2) .. (373.09,210.67) -- cycle ;
%Curve Lines [id:da012471727001936328] 
\draw    (204.66,181.83) .. controls (161.96,46.19) and (483.89,62.67) .. (404.5,184.36) ;

%Curve Lines [id:da5097606825157974] 
\draw    (205.26,212.11) .. controls (135.11,23.55) and (504.14,32.25) .. (405.67,210.67) ;

%Curve Lines [id:da5689678923792318] 
\draw    (207.98,262.43) .. controls (84.72,-133.39) and (577.71,-29.9) .. (399.02,262.96) ;

%Straight Lines [id:da7282800211967713] 
\draw    (248.74,178.66) -- (267.62,185) ;

%Straight Lines [id:da7888789920344272] 
\draw    (249.28,214.46) -- (265.98,217.63) ;

%Straight Lines [id:da5242646676133609] 
\draw    (250.93,269.93) -- (270.36,260.42) ;

%Shape: Triangle [id:dp04717523093362197] 
\draw   (210.73,209.31) -- (203.04,201.02) -- (218.41,201.02) -- cycle ;
%Shape: Triangle [id:dp8778573872190858] 
\draw   (227.3,209.31) -- (219.62,201.02) -- (234.99,201.02) -- cycle ;
%Shape: Triangle [id:dp07134377906014033] 
\draw   (244.22,209.31) -- (236.54,201.02) -- (251.91,201.02) -- cycle ;

% Text Node
\draw (211.29,177.08) node  [align=left] {$\displaystyle 1$};
% Text Node
\draw (244.78,177.08) node  [align=left] {$\displaystyle 2$};
% Text Node
\draw (227.52,177.08) node  [align=left] {$\displaystyle 2$};
% Text Node
\draw (211.29,216.81) node  [align=left] {$\displaystyle 1$};
% Text Node
\draw (244.78,216.81) node  [align=left] {$\displaystyle 2$};
% Text Node
\draw (227.52,216.81) node  [align=left] {$\displaystyle 2$};
% Text Node
\draw (227.45,232.84-6) node   {$\vdots $};
% Text Node
\draw (212.38,263.78) node  [align=left] {$\displaystyle 1$};
% Text Node
\draw (245.88,263.78) node  [align=left] {$\displaystyle 2$};
% Text Node
\draw (228.61,263.78) node  [align=left] {$\displaystyle 2$};
% Text Node
\draw (361.52,209.72) node   {$\dotsc $};
% Text Node
\draw (330.8,206.89) node   {$\bigwedge $};
% Text Node
\draw (282.35,210.03) node [rotate=-180]  {$\bigwedge $};
% Text Node
\draw (309.05,37.87) node   {$\vdots $};
% Text Node
\draw (389.38,210.67) node [rotate=-180]  {$\bigwedge $};

\end{tikzpicture}

	\caption{Interaction graph associated to a conjunctive network with firing memory and maximum delay vector $dt_i = 2$ for every node $i$ that simulates an iterated monotone boolean circuit $C$.  Layers are made up by AND or OR gates exclusively, using the gadgets shown in Figure \ref{fig:AND} and Figure \ref{fig:OR} are alternately ordered.}
	\label{fig:circuit}
\end{figure}
\vspace{-0.3in}
\begin{proposition}
	For every monotone boolean circuit $C: \{0,1\}^n \to \{0,1\}^n$ there is a conjunctive network $F$ with firing memory such that:  i) its interaction graph $G$ has polynomial size in $n$, ii) its maximum delay values are  $dt_i = 2$ in every node $i\in V(G)$ and iii) $F$ simulates every iteration of  $C$ in linear time.%iii) there exists a set $A \subseteq V(G), \text{ } |A| = n$ such that for all $x \in \{0,1\}^n,$ $C^t(x)_i = F^t(x)_i$ for every $i \in A$ and every $t \geq 1$.	
	\label{prop:circand}
\end{proposition}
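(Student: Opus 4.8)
The plan is to realise every gate and every wire of $C$ by a conjunctive sub-network with firing memory and uniform delay $dt_i = 2$, and then to glue these pieces together according to the topology of $C$; since each piece has only a constant number of vertices, the resulting interaction graph $G$ has size polynomial in the size of $C$, which yields (i), and (ii) holds by construction, while a timing analysis of how information travels through the composite network yields (iii). Because a binary $\wedge$-gadget, a binary $\vee$-gadget and a fan-out gadget already suffice to realise an arbitrary monotone circuit (a large conjunction is just a high-degree $\wedge$-node, and a large disjunction is a balanced binary tree of $\vee$-gadgets), it is enough to design these constant-size gadgets, together with a wire gadget, all driven by one common clock. Following Figure~\ref{fig:circuit}, the layout stratifies $G$ into layers alternating between $\wedge$-gadgets and $\vee$-gadgets, the triangles $K_{3}$ of Proposition~\ref{prop:cycles} (here $\tau = 2$) serving as the internal oscillators --- the \emph{clocks} --- of the gadgets.

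First I would fix the encoding: the Boolean value $1$ present at a gadget port at a prescribed time is represented by a designated node being pushed to state $2$, i.e.\ $(1,\text{delay }2)$, and the value $0$ by that node staying at $0$; the firing-memory rule then keeps a fired node at state $1$ for two further steps, which is precisely the slack needed for a pulse to travel through a conjunctive path without an adjacent $0$ quenching it. The central auxiliary result to establish is a \emph{transport lemma} for the wire gadget: a pulse injected at one end reappears at the other end after a fixed latency, the trailing activity decays back to $0$, and --- this is the point that secures one-way flow --- the pulse never re-ignites the nodes it has already left, because their firing memory has by then expired (this uses that consecutive pulses on a wire are spaced by a constant number of steps). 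Granting the transport lemma, the $\wedge$-gadget is the node where two equal-latency wires meet: its conjunctive update is literally the $\wedge$ of the two incoming pulses, and one re-clocked output wire carries the result on; the only thing to check is that the two input branches have equal latency so the pulses coincide. A fan-out gadget is obtained symmetrically by letting a re-clocked node feed several output wires.

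The genuinely delicate part --- and the main obstacle --- is the $\vee$-gadget, since a conjunctive local rule cannot compute a disjunction directly. The plan is to realise $\vee(a,b)$ through the \emph{phase} of a clock: route $a$ and $b$ into a small $K_{3}$-based block so that a pulse on \emph{either} input perturbs a free-running clock, and read off the output from whether, at a fixed later time, a designated node of that clock is in state $0$ or in state $2$, converting this back to the standard pulse encoding with one more re-clocked stage. Making the latencies, phases and pulse widths of the $\wedge$-stage, the $\vee$-stage, the fan-out and the wires all agree is where the real work lies: one has to pin down a single constant $T$ --- the length of one macro-step --- so that the whole network, started from the configuration coding the input $x$, sits in the configuration coding $C^{k}(x)$ at every time $kT$. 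Once the gadgets and their interface timings are fixed, correctness of the simulation follows by a double induction: over the alternating layers of $C$ inside one macro-step, and over $k$ for the iterated evaluation, invoking at each gate the corresponding gadget lemma. Claim (iii) is then immediate, since $T$ is a constant and simulating $k$ iterations of $C$ takes exactly $kT$ steps.
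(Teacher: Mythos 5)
Your overall architecture (constant-size gate and wire gadgets driven by a common clock, glued along the topology of $C$, giving a polynomial-size graph and a constant macro-step $T$) matches the paper's plan, but your signal encoding breaks the construction at its core. You encode a logical $1$ at a port by a node fired to state $2$ and a logical $0$ by that node sitting at boolean state $0$. In a conjunctive network a node can fire only when \emph{all} of its neighbours are at boolean state $1$; consequently a ``$1$-pulse'' can never advance into a wire whose downstream nodes are quiescent at $0$ (the next wire node always has a $0$ neighbour further along, so its local AND evaluates to $0$), and the transport lemma you rely on is false as stated. Worse, a static $0$ representing a logical-$0$ value poisons its surroundings: any clock or block node adjacent to it computes AND $=0$, falls to $0$ once its delay expires, and the oscillating background collapses, so the gadgets cannot clear themselves and be reused at the next macro-step. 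In conjunctive networks with firing memory the only travelling signal is a $0$-front moving through a field of $1$s that is constantly refreshed by attached clocks; this is exactly the encoding the paper uses ($y_{B}=122$ for logical $0$, $y_{B}=120$ for logical $1$, the boolean $0$ being the signal), and it is the reason every block carries $\tau-1$ clocks.

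This inversion also flips which gate is hard. With the workable ($0$-as-signal) encoding, OR is the easy gate --- a $0$ arriving from either input block propagates into the output block, which is the gadget of Figure \ref{fig:OR} --- while AND is the delicate one: a lone $0$ must be absorbed, and only two suitably synchronised $0$s may produce an output $0$, which is what the asymmetric, timing-based gadget of Figure \ref{fig:AND} achieves and what the self-clearing analysis of Figures \ref{fig:iterAND} and \ref{fig:iterOR} verifies. Your proposal spends its effort on a phase-perturbation OR gadget and treats AND as immediate, i.e. the opposite distribution of difficulty; moreover the phase idea itself is unsubstantiated, since ``perturbing'' a $K_3$ clock with a $0$ in a conjunctive network tends to destroy its cycle rather than shift its phase in a readable way. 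To repair the argument you would need to adopt the dual encoding, prove transport for $0$-fronts along clock-supported blocks, and construct and verify the AND gadget, which is the genuinely nontrivial object here.
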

\begin{proof}
	Let $C$ be a monotone circuit. We assume that: i) every input has out degree $1$, ii) every output is identified with an input, iii) the degree of every logic gate in C is 4, and iv) every layer contains exclusively OR or AND gates and they are ordered alternately, i.e., if a the $k$-th layer is made up of AND gates then the $k+1$-th is made up of OR gates (See Proposition \ref{prop:PSPACECirc}). We represent this structure using the block gadget we defined for the last propositions (see Figure \ref{fig:structblock}) with maximum delay vector $dt_i = 2$ in every node $i$. A scheme of the interaction graph associated to this conjunctive network with firing memory is shown in Figure \ref{fig:circuit}. Let $i_1, \hdots, i_n$ be the inputs of $C$ and because we are considering $C$ as an iterated circuit, we are going to identify its outputs by the same names. For every $k \in \{1,\hdots,n\}$, we define a block $B_{i_k}$.  These blocks are made up of a path of length three in which every node is connected to a clock. We introduce the following notation: if $x \in \{0,1\}^n$ and $B$ is a block then $x_{B} = x_ux_vx_w$ where $u,v$ and $w$ are the labels of the vertices in the path. Given a initial configuration $x \in \{0,1\}^n$ for the circuit $C$,  we code it using a variable $y \in \{0,1\}^m$,  defined by maping the blocks in the following way: $$ y_{B_{i_k} } = \begin{cases}
	122 \text{ if } x_{i_k} = 0 \\
	120 \text{ if } x_{i_k} = 1 \\
	\end{cases}$$ 
	For every logic gate AND or OR we define a gadget as the one showed in Figure \ref{fig:AND} and Figure \ref{fig:OR}. Note that every logic gate is represented by a block so the last coding is well defined. We remark that as all the gadget have a constant number of nodes and edges then, $m = O(n)$ and because of the assumptions we are doing on $C$ we have that this coding has polynomial size in $n$. We will call $\varphi$ the function such that $\varphi(x) = y$. Finally, as it is shown in Figures \ref{fig:iterAND} and Figure \ref{fig:iterOR} the information is transmitted through the blocks in a way such that in a maximum of $6$ steps the nodes return to the initial condition so the circuit is cleared and the structures are available for continue receiving and emitting signals. Then, we have given $x \in \{0,1\}^n$ and $y = \varphi(x)$, there exists $p\geq 1$ such that $\varphi(C^t (x))_{B_{i_k}} =  (F^{pt}(y))_{B_{i_k}}$ Thus, the conjunctive network defined using these gadgets simulates $C$ in polynomial space and linear time.
\end{proof}
%\vspace{-0.2in}
\begin{figure}[h]
\tikzset{every picture/.style={line width=0.75pt}} %set default line width to 0.75pt    
\centering    
\begin{tikzpicture}[x=0.75pt,y=0.75pt,yscale=-0.62,xscale=0.62]
%uncomment if require: \path (0,199); %set diagram left start at 0, and has height of 199

%Straight Lines [id:da8213256829443318] 
\draw    (292.5,123) -- (325,123) ;

%Straight Lines [id:da5163506677997323] 
\draw    (338.5,123) -- (371,123) ;

%Straight Lines [id:da8321269977800372] 
\draw    (284.5,89) -- (284,116) ;

%Straight Lines [id:da19154302279946256] 
\draw    (332.5,89) -- (332,116) ;

%Straight Lines [id:da19646058523721455] 
\draw    (381.5,89) -- (381,116) ;

%Shape: Triangle [id:dp5402273577922025] 
\draw   (284.5,89) -- (262.25,70) -- (306.75,70) -- cycle ;
%Shape: Triangle [id:dp1233554666742962] 
\draw   (332.5,89) -- (310.25,70) -- (354.75,70) -- cycle ;
%Shape: Triangle [id:dp7166459721888212] 
\draw   (381.5,89) -- (359.25,70) -- (403.75,70) -- cycle ;
%Straight Lines [id:da12955699773834173] 
\draw    (194.5,171) -- (234.5,136) ;

%Straight Lines [id:da8472222148195985] 
\draw    (232.5,83) -- (238.5,112) ;

%Straight Lines [id:da9395922038659921] 
\draw    (498.5,74) -- (392.5,122) ;

%Straight Lines [id:da29024849221378934] 
\draw    (393.5,130) -- (493.5,169) ;

%Straight Lines [id:da14438494785274214] 
\draw    (46.5,72) -- (79,72) ;

%Straight Lines [id:da8326081596672161] 
\draw    (92.5,72) -- (125,72) ;

%Straight Lines [id:da486017499173351] 
\draw    (38.5,36) -- (38,63) ;

%Straight Lines [id:da700689698862571] 
\draw    (86.5,36) -- (86,63) ;

%Straight Lines [id:da03315880149686712] 
\draw    (135.5,36) -- (135,63) ;

%Shape: Triangle [id:dp18426678401469854] 
\draw   (38.5,36) -- (16.25,17) -- (60.75,17) -- cycle ;
%Shape: Triangle [id:dp4120211965258205] 
\draw   (86.5,36) -- (64.25,17) -- (108.75,17) -- cycle ;
%Shape: Triangle [id:dp2735620738788642] 
\draw   (135.5,36) -- (113.25,17) -- (157.75,17) -- cycle ;
%Straight Lines [id:da744356416174681] 
\draw    (47.5,170) -- (80,170) ;

%Straight Lines [id:da8287738871712651] 
\draw    (93.5,170) -- (126,170) ;

%Straight Lines [id:da5620355481607382] 
\draw    (39.5,134) -- (39,161) ;

%Straight Lines [id:da07067648035625562] 
\draw    (87.5,134) -- (87,161) ;

%Straight Lines [id:da10011190378091794] 
\draw    (136.5,134) -- (136,161) ;

%Shape: Triangle [id:dp3710855642717015] 
\draw   (39.5,134) -- (17.25,115) -- (61.75,115) -- cycle ;
%Shape: Triangle [id:dp28952980658705485] 
\draw   (87.5,134) -- (65.25,115) -- (109.75,115) -- cycle ;
%Shape: Triangle [id:dp27075008381186505] 
\draw   (136.5,134) -- (114.25,115) -- (158.75,115) -- cycle ;

%Straight Lines [id:da22756201275721644] 
\draw    (515.5,72) -- (548,72) ;

%Straight Lines [id:da0318604053917404] 
\draw    (561.5,72) -- (594,72) ;

%Straight Lines [id:da8194701539493046] 
\draw    (507.5,36) -- (507,63) ;

%Straight Lines [id:da6135351978882924] 
\draw    (555.5,36) -- (555,63) ;

%Straight Lines [id:da02405623110966504] 
\draw    (604.5,36) -- (604,63) ;

%Shape: Triangle [id:dp856927067230508] 
\draw   (507.5,36) -- (485.25,17) -- (529.75,17) -- cycle ;
%Shape: Triangle [id:dp9757131509234598] 
\draw   (555.5,36) -- (533.25,17) -- (577.75,17) -- cycle ;
%Shape: Triangle [id:dp9574757626013768] 
\draw   (604.5,36) -- (582.25,17) -- (626.75,17) -- cycle ;

%Straight Lines [id:da1318306088358716] 
\draw    (515.5,171) -- (548,171) ;

%Straight Lines [id:da6047316876753639] 
\draw    (561.5,171) -- (594,171) ;

%Straight Lines [id:da8175778883678642] 
\draw    (507.5,135) -- (507,162) ;

%Straight Lines [id:da5664455716065269] 
\draw    (555.5,135) -- (555,162) ;

%Straight Lines [id:da16064052717734745] 
\draw    (604.5,135) -- (604,162) ;

%Shape: Triangle [id:dp18734116475551077] 
\draw   (507.5,135) -- (485.25,116) -- (529.75,116) -- cycle ;
%Shape: Triangle [id:dp721238701737448] 
\draw   (555.5,135) -- (533.25,116) -- (577.75,116) -- cycle ;
%Shape: Triangle [id:dp9508136977642827] 
\draw   (604.5,135) -- (582.25,116) -- (626.75,116) -- cycle ;

%Straight Lines [id:da3676537035749592] 
\draw    (251.5,123) -- (275.5,123) ;

%Straight Lines [id:da4897941960464658] 
\draw    (142.5,72) -- (175,72) ;

%Straight Lines [id:da7244599159097527] 
\draw    (188.5,72) -- (221,72) ;

%Straight Lines [id:da7939676117749119] 
\draw    (182.5,36) -- (182,63) ;

%Straight Lines [id:da4450141209436629] 
\draw    (231.5,36) -- (231,63) ;

%Shape: Triangle [id:dp30465018344844963] 
\draw   (182.5,36) -- (160.25,17) -- (204.75,17) -- cycle ;
%Shape: Triangle [id:dp6921721161994647] 
\draw   (231.5,36) -- (209.25,17) -- (253.75,17) -- cycle ;
%Straight Lines [id:da4335445328190002] 
\draw    (143.5,170) -- (176,170) ;

%Straight Lines [id:da34804251615981996] 
\draw    (186.5,134) -- (186,161) ;

%Shape: Triangle [id:dp18260730493574262] 
\draw   (186.5,134) -- (164.25,115) -- (208.75,115) -- cycle ;
%Shape: Rectangle [id:dp08846120525091061] 
\draw  [dash pattern={on 4.5pt off 4.5pt}] (15.5,8) -- (159.5,8) -- (159.5,90) -- (15.5,90) -- cycle ;
%Shape: Rectangle [id:dp6936648644465142] 
\draw  [dash pattern={on 4.5pt off 4.5pt}] (15.25,106.5) -- (159.25,106.5) -- (159.25,188.5) -- (15.25,188.5) -- cycle ;

% Text Node
\draw (285,123) node  [align=left] {$\displaystyle 1$};
% Text Node
\draw (382,123) node  [align=left] {$\displaystyle 2$};
% Text Node
\draw (332,123) node  [align=left] {$\displaystyle 2$};
% Text Node
\draw (39,72) node  [align=left] {$\displaystyle 1$};
% Text Node
\draw (136,72) node  [align=left] {$\displaystyle 2$};
% Text Node
\draw (86,72) node  [align=left] {$\displaystyle 2$};
% Text Node
\draw (286,78) node  [align=left] {$\displaystyle 2$};
% Text Node
\draw (333,77) node  [align=left] {$\displaystyle 0$};
% Text Node
\draw (382,77) node  [align=left] {$\displaystyle 1$};
% Text Node
\draw (40,25) node  [align=left] {$\displaystyle 2$};
% Text Node
\draw (87,24) node  [align=left] {$\displaystyle 0$};
% Text Node
\draw (136,24) node  [align=left] {$\displaystyle 1$};
% Text Node
\draw (40,170) node  [align=left] {$\displaystyle 1$};
% Text Node
\draw (137,170) node  [align=left] {$\displaystyle 2$};
% Text Node
\draw (87,170) node  [align=left] {$\displaystyle 2$};
% Text Node
\draw (41,123) node  [align=left] {$\displaystyle 2$};
% Text Node
\draw (88,122) node  [align=left] {$\displaystyle 0$};
% Text Node
\draw (137,122) node  [align=left] {$\displaystyle 1$};
% Text Node
\draw (508,72) node  [align=left] {$\displaystyle 1$};
% Text Node
\draw (605,72) node  [align=left] {$\displaystyle 2$};
% Text Node
\draw (555,72) node  [align=left] {$\displaystyle 2$};
% Text Node
\draw (509,25) node  [align=left] {$\displaystyle 2$};
% Text Node
\draw (556,24) node  [align=left] {$\displaystyle 0$};
% Text Node
\draw (605,24) node  [align=left] {$\displaystyle 1$};
% Text Node
\draw (508,171) node  [align=left] {$\displaystyle 1$};
% Text Node
\draw (605,171) node  [align=left] {$\displaystyle 2$};
% Text Node
\draw (555,171) node  [align=left] {$\displaystyle 2$};
% Text Node
\draw (509,124) node  [align=left] {$\displaystyle 2$};
% Text Node
\draw (556,123) node  [align=left] {$\displaystyle 0$};
% Text Node
\draw (605,123) node  [align=left] {$\displaystyle 1$};
% Text Node
\draw (241,123) node  [align=left] {$\displaystyle 2$};
% Text Node
\draw (232,72) node  [align=left] {$\displaystyle 2$};
% Text Node
\draw (182,72) node  [align=left] {$\displaystyle 1$};
% Text Node
\draw (183,24) node  [align=left] {$\displaystyle 2$};
% Text Node
\draw (232,24) node  [align=left] {$\displaystyle 0$};
% Text Node
\draw (187,170) node  [align=left] {$\displaystyle 1$};
% Text Node
\draw (187,122) node  [align=left] {$\displaystyle 2$};
\end{tikzpicture}
	\caption{Gadget of AND gates used in the graph shown in Figure \ref{fig:circuit}. Signals are transmitted and coded based on the block gadget.}
		\label{fig:AND}
\end{figure}
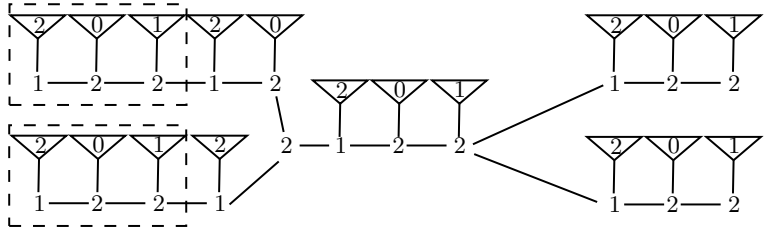
\begin{figure}[H]
\tikzset{every picture/.style={line width=0.75pt}} %set default line width to 0.75pt        
\centering
\begin{tikzpicture}[x=0.75pt,y=0.75pt,yscale=-0.62,xscale=0.62]
%uncomment if require: \path (0,200); %set diagram left start at 0, and has height of 200

%Straight Lines [id:da5416303589107274] 
\draw    (287.5,126) -- (320,126) ;

%Straight Lines [id:da5843305875563867] 
\draw    (333.5,126) -- (366,126) ;

%Straight Lines [id:da8806523076720774] 
\draw    (279.5,92) -- (279,119) ;

%Straight Lines [id:da2969239023074388] 
\draw    (327.5,92) -- (327,119) ;

%Straight Lines [id:da6280312602023572] 
\draw    (376.5,92) -- (376,119) ;

%Shape: Triangle [id:dp1929514248052555] 
\draw   (279.5,92) -- (257.25,73) -- (301.75,73) -- cycle ;
%Shape: Triangle [id:dp9505516751468812] 
\draw   (327.5,92) -- (305.25,73) -- (349.75,73) -- cycle ;
%Shape: Triangle [id:dp2278150284554219] 
\draw   (376.5,92) -- (354.25,73) -- (398.75,73) -- cycle ;
%Straight Lines [id:da2289717570754135] 
\draw    (493.5,77) -- (387.5,125) ;

%Straight Lines [id:da531742065432115] 
\draw    (388.5,133) -- (488.5,172) ;

%Straight Lines [id:da20649999677094155] 
\draw    (71.5,83) -- (104,83) ;

%Straight Lines [id:da7518325431835178] 
\draw    (117.5,83) -- (150,83) ;

%Straight Lines [id:da9606714254693824] 
\draw    (63.5,47) -- (63,74) ;

%Straight Lines [id:da19227647999578323] 
\draw    (111.5,47) -- (111,74) ;

%Straight Lines [id:da2091450623074531] 
\draw    (160.5,47) -- (160,74) ;

%Shape: Triangle [id:dp5315869509460684] 
\draw   (63.5,47) -- (41.25,28) -- (85.75,28) -- cycle ;
%Shape: Triangle [id:dp9006954706701924] 
\draw   (111.5,47) -- (89.25,28) -- (133.75,28) -- cycle ;
%Shape: Triangle [id:dp020529504020070588] 
\draw   (160.5,47) -- (138.25,28) -- (182.75,28) -- cycle ;

%Straight Lines [id:da7077536894741641] 
\draw    (68.5,178) -- (101,178) ;

%Straight Lines [id:da023639590350454487] 
\draw    (114.5,178) -- (147,178) ;

%Straight Lines [id:da9550581310696706] 
\draw    (60.5,142) -- (60,169) ;

%Straight Lines [id:da24255726319788495] 
\draw    (108.5,142) -- (108,169) ;

%Straight Lines [id:da9451504031086297] 
\draw    (157.5,142) -- (157,169) ;

%Shape: Triangle [id:dp11839840637258803] 
\draw   (60.5,142) -- (38.25,123) -- (82.75,123) -- cycle ;
%Shape: Triangle [id:dp3977440048051981] 
\draw   (108.5,142) -- (86.25,123) -- (130.75,123) -- cycle ;
%Shape: Triangle [id:dp33778788796313974] 
\draw   (157.5,142) -- (135.25,123) -- (179.75,123) -- cycle ;

%Straight Lines [id:da7682409503112626] 
\draw    (510.5,75) -- (543,75) ;

%Straight Lines [id:da474678787233495] 
\draw    (556.5,75) -- (589,75) ;

%Straight Lines [id:da5093342360561914] 
\draw    (502.5,39) -- (502,66) ;

%Straight Lines [id:da6156887293228543] 
\draw    (550.5,39) -- (550,66) ;

%Straight Lines [id:da3981516929692901] 
\draw    (599.5,39) -- (599,66) ;

%Shape: Triangle [id:dp7313868120778906] 
\draw   (502.5,39) -- (480.25,20) -- (524.75,20) -- cycle ;
%Shape: Triangle [id:dp9468981235821403] 
\draw   (550.5,39) -- (528.25,20) -- (572.75,20) -- cycle ;
%Shape: Triangle [id:dp8793824502587391] 
\draw   (599.5,39) -- (577.25,20) -- (621.75,20) -- cycle ;

%Straight Lines [id:da7684315481821666] 
\draw    (510.5,174) -- (543,174) ;

%Straight Lines [id:da7297591494082984] 
\draw    (556.5,174) -- (589,174) ;

%Straight Lines [id:da7892654782260884] 
\draw    (502.5,138) -- (502,165) ;

%Straight Lines [id:da678662880034284] 
\draw    (550.5,138) -- (550,165) ;

%Straight Lines [id:da17221478170520532] 
\draw    (599.5,138) -- (599,165) ;

%Shape: Triangle [id:dp6323561818731327] 
\draw   (502.5,138) -- (480.25,119) -- (524.75,119) -- cycle ;
%Shape: Triangle [id:dp73671331323047] 
\draw   (550.5,138) -- (528.25,119) -- (572.75,119) -- cycle ;
%Shape: Triangle [id:dp3362161272260247] 
\draw   (599.5,138) -- (577.25,119) -- (621.75,119) -- cycle ;

%Straight Lines [id:da8555201136277324] 
\draw    (165.5,179) -- (271.5,131) ;
%Straight Lines [id:da2608320634883934] 
\draw    (270.5,123) -- (170.5,84) ;
% Text Node
\draw (280,126) node  [align=left] {$\displaystyle 1$};
% Text Node
\draw (377,126) node  [align=left] {$\displaystyle 2$};
% Text Node
\draw (327,126) node  [align=left] {$\displaystyle 2$};
% Text Node
\draw (64,83) node  [align=left] {$\displaystyle 1$};
% Text Node
\draw (161,83) node  [align=left] {$\displaystyle 2$};
% Text Node
\draw (111,83) node  [align=left] {$\displaystyle 2$};
% Text Node
\draw (281,81) node  [align=left] {$\displaystyle 2$};
% Text Node
\draw (328,80) node  [align=left] {$\displaystyle 0$};
% Text Node
\draw (377,80) node  [align=left] {$\displaystyle 1$};
% Text Node
\draw (65,36) node  [align=left] {$\displaystyle 2$};
% Text Node
\draw (112,35) node  [align=left] {$\displaystyle 0$};
% Text Node
\draw (161,35) node  [align=left] {$\displaystyle 1$};
% Text Node
\draw (503,174) node  [align=left] {$\displaystyle 1$};
% Text Node
\draw (600,174) node  [align=left] {$\displaystyle 2$};
% Text Node
\draw (550,174) node  [align=left] {$\displaystyle 2$};
% Text Node
\draw (504,127) node  [align=left] {$\displaystyle 2$};
% Text Node
\draw (551,126) node  [align=left] {$\displaystyle 0$};
% Text Node
\draw (600,126) node  [align=left] {$\displaystyle 1$};
% Text Node
\draw (503,75) node  [align=left] {$\displaystyle 1$};
% Text Node
\draw (600,75) node  [align=left] {$\displaystyle 2$};
% Text Node
\draw (550,75) node  [align=left] {$\displaystyle 2$};
% Text Node
\draw (504,28) node  [align=left] {$\displaystyle 2$};
% Text Node
\draw (551,27) node  [align=left] {$\displaystyle 0$};
% Text Node
\draw (600,27) node  [align=left] {$\displaystyle 1$};
% Text Node
\draw (61,178) node  [align=left] {$\displaystyle 1$};
% Text Node
\draw (158,178) node  [align=left] {$\displaystyle 2$};
% Text Node
\draw (108,178) node  [align=left] {$\displaystyle 2$};
% Text Node
\draw (62,131) node  [align=left] {$\displaystyle 2$};
% Text Node
\draw (109,130) node  [align=left] {$\displaystyle 0$};
% Text Node
\draw (158,130) node  [align=left] {$\displaystyle 1$};
\end{tikzpicture}	
	\caption{Gadget of OR gates used in the graph shown in Figure \ref{fig:circuit}. Signals are transmitted and coded based on the block gadget.}
	\label{fig:OR}
\end{figure}
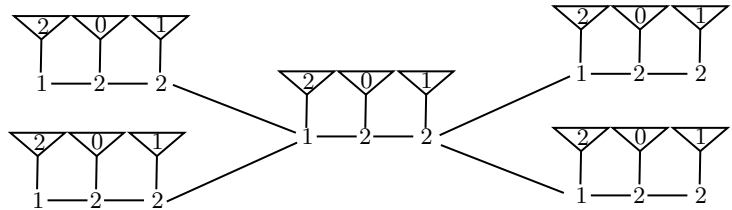
%\vspace{-0.2in}
As a direct consequence of the Proposition \ref{prop:circand}, we have that conjunctive networks with firing memory are universal, i.e, they can simulate every boolean network.
%\begin{theorem}
%	Conjunctive networks with firing memory and delay $dt_i = 2$ for every node $i$ are universal.
%\end{theorem}
%\begin{proof}
%	We have from the proof of Proposition \ref{prop:PSPACECirc} that an alternating monotone circuit of degree $4$ can simulate every boolean circuit.
%\end{proof}
Finally, we address the question about the computational complexity of the prediction problem \textsc{2-And-Prediction}.  As a direct corollary of the latter proposition, we have that \textsc{Iter-Mon-Circuit} can be reduced to \textsc{2-And-Prediction} and thus the problem is \textbf{PSPACE}-complete.
%\vspace{-0in}
\begin{theorem}
	The problem \textsc{2-And-Prediction} is \textbf{PSPACE}-complete.
	\label{teo:PSPACE}
\end{theorem}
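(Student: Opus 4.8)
The plan is to establish the two inclusions that together give \textbf{PSPACE}-completeness: membership in \textbf{PSPACE} and \textbf{PSPACE}-hardness. For membership, I would argue that the dynamics of a conjunctive network with firing memory and uniform maximum delay $dt_i=2$ lives in state space $Y$ of size $(2\cdot 3)^n$, so the transition function $F^{dt}$ can be iterated using only polynomial space: given $x$, one simulates the trajectory $T(x)$ step by step, each step requiring a constant amount of work per node and only one configuration of $O(n)$ symbols to be stored at a time, together with a step counter that can be represented in $O(n)$ bits since every trajectory reaches its cycle after at most $|Y|=6^n$ steps. One halts and accepts as soon as some $y\in T(x)$ has $y_i=\bar x_i$, and rejects once the counter exceeds $6^n$ without this happening; this is a polynomial-space decision procedure, so \textsc{2-And-Prediction} $\in$ \textbf{PSPACE}.

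For hardness, I would reduce from \textsc{Iter-Mon-Circuit-Prediction}, which is \textbf{PSPACE}-complete even restricted to alternating monotone circuits of degree $4$ by Proposition \ref{prop:PSPACECirc}. Given such a circuit $C:\{0,1\}^n\to\{0,1\}^n$, an input $x$, and a coordinate index, I would invoke Proposition \ref{prop:circand} to build, in polynomial time, a conjunctive network $F$ with firing memory, interaction graph of polynomial size, and uniform maximum delay $dt_i=2$, together with the encoding map $\varphi$ such that $\varphi(C^t(x))_{B_{i_k}}=(F^{pt}(y))_{B_{i_k}}$ for $y=\varphi(x)$ and a fixed period $p$ (the number of steps the gadgets take to clear). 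The key observation is that the question ``does there exist $t\ge 1$ with $C^t(x)_{i_k}=1$'' translates, under $\varphi$, into the question of whether some designated node inside the block $B_{i_k}$ ever reaches state $1$ — equivalently, ever changes its state relative to the encoding of a $0$-input — which is precisely an instance of \textsc{2-And-Prediction} on $F$ with initial configuration $y$. One must check that the encoding distinguishes the two circuit-bit values by a genuine state change at an identifiable vertex (by construction $y_{B_{i_k}}=122$ encodes $0$ and $120$ encodes $1$, so the relevant vertex flips exactly when the circuit bit becomes $1$), and that the constant period $p$ and linear-time simulation guarantee that a change in $C^t(x)_{i_k}$ at some $t$ corresponds to a change at time $pt$ in $F$ and conversely no spurious change occurs in between.

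The main obstacle is the faithfulness of the simulation in Proposition \ref{prop:circand}: one needs the gadgets of Figures \ref{fig:AND} and \ref{fig:OR} to compute AND and OR correctly on the block encoding, to self-reset within at most $6$ steps so that signals do not interfere across successive circuit iterations, and to do so simultaneously across all layers of the alternating circuit without desynchronization — this is exactly what the layered alternating structure in Figure \ref{fig:circuit} is designed to enforce, and verifying it is the technical heart of the argument (already carried out in the proof of Proposition \ref{prop:circand}). Granting that, the reduction is immediate: \textsc{2-And-Prediction} is \textbf{PSPACE}-hard, and combined with membership, it is \textbf{PSPACE}-complete. $\qed$
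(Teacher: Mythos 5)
Your proposal is correct and follows essentially the same route as the paper, whose proof of Theorem \ref{teo:PSPACE} simply invokes Proposition \ref{prop:circand} (the simulation of iterated monotone circuits by conjunctive networks with firing memory and delay $2$) together with Proposition \ref{prop:PSPACECirc}. You merely spell out explicitly the two ingredients the paper leaves implicit — the routine \textbf{PSPACE} membership by step-by-step simulation with a bounded counter, and the way the block encoding $\varphi$ turns the circuit question into a state-change question at a designated vertex — so there is no substantive difference in approach.
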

\begin{proof}
	It is direct from Proposition \ref{prop:circand} and Proposition \ref{prop:PSPACECirc}
\end{proof}
\section{Discussion}
In this paper, we have studied the effect of an specific type of delay called firing memory in the dynamics of conjunctive boolean networks. More specifically, we have addressed the prediction problem in conjunctive networks with firing memory whom maximum delay is $2$ in every node. We concluded that not only these type of networks admit attractors of non polynomial period but the latter problem turned out to be \textbf{PSPACE}-complete.  Deducing this result was possible because of: i) the capability of conjunctive networks with firing memory, whom have the same value for maximum delay in every node, to have attractors with period proportional to the maximum delay value and ii) the capability of transmitting information through a wire that clears itself once the information has been transmitted. These two main observations about conjunctive networks with firing memory allowed us to deduce the structure of the main gadgets used for the proof of our main results. These properties are quite surprising considering that conjunctive boolean networks admit only attractors of bounded period and the prediction problem is the \textbf{P} class. Moreover, previous result on the effect of firing memory in the  dynamics of the dual version of these type of networks, the disjunctive networks, suggested that firing memory tend to freeze the dynamics of these networks, reducing the period of the possible attractors that the network admits. We remark the relevance of the achieved results as they show that firing memory have effects on the dynamical properties of the original network that are similar to the ones exhibited by other update schemes that are somehow between synchronous and asynchronous dynamics in other type of boolean networks, such as the effects of block sequential update scheme in majority rules. In fact, in the latter case, the prediction problem with parallel update is \textbf{P}-complete while it is \textbf{PSPACE}-complete when we consider a block-sequential update scheme. It might be possible to deduce from the latter observation that firing memory allows to add asynchronicity to the dynamics of a boolean network in a less arbitrary way compared to block sequential update, that needs a predefined partition and an specific partial order.\\[12pt]
An interesting topic for future work is the characterization of the dynamics of conjunctive networks with firing memory. While we have described conjunctive networks with firing memory that admit attractors with period proportional to the maximum delay values, the possibility of the existence of networks admitting attractors with different period (not necessarily proportional to maximum delay values) remains still open for studying. In addition, the effect of firing memory in networks defined by particular topologies such as planar graphs or two dimensional grids might be interesting to analyse. Besides, considering the fact that in the light of the results of this paper there is no clear insight about a general effect of firing memory in a simple class of boolean functions such as AND or OR (we have complex dynamics in one case  and we have dynamics that admit only fixed points in the other), an interesting topic for future work could be studying prediction problems in other classes of boolean networks with firing memory that are somehow similar to conjunctive networks (another functions that are linear for example) such as XOR networks.
\section{Acknowledgement}
 This work has been partially supported by: CONICYT via  PAI + Convocatoria Nacional Subvenci\'on a la Incorporaci\'on en la Academia A\~no 2017 + PAI77170068 (P.M.), PFCHA/DOCTORADO NACIONAL/2018 – 21180910 (M.R.W) and ECOS C16E01 (E.G and M.R.W).
We would also like to thank Alejandro Maass who provided insight and expertise that greatly assisted us in the course of this research.
%
% ---- Bibliography ----
%
% BibTeX users should specify bibliography style 'splncs04'.
% References will then be sorted and formatted in the correct style.
\bibliographystyle{splncs04}
\bibliography{bibliography}
\appendix
\newpage
\section{Appendix}
\subsection{Proof of Theorem \ref{teononpolydif} and Theorem \ref{teo:nonpolysamedelay}}
\label{sec:nonpolyproof}
\begin{teo}
	There exists a connected conjunctive network with firing memory (and not necessarily the same values for maximum delay) which admits attractors with non polynomial period. 
\end{teo}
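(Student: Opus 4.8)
The plan is to combine Proposition~\ref{prop:cycles} with the gluing trick of \cite{kiwi1994no}. First, for each prime $p$ I apply Proposition~\ref{prop:cycles} with $\tau = p-1$: this yields a conjunctive network on the clique $K_{p}$ with uniform delay $p-1$ whose attractor of period $p$ is particularly rigid --- at every instant exactly one coordinate holds value $0$ (the ``hole''), all others hold value $1$, and the hole visits each of the $p$ vertices exactly once per period in round-robin order. I fix the first $k$ odd primes $p_1 = 3 < p_2 = 5 < \dots < p_k$ and start from $k$ vertex-disjoint copies $C_1, \dots, C_k$ of these networks.

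Next I would connect the $C_i$ into a single connected network without disturbing any of the rotations. The naive attempt --- joining $C_i$ and $C_j$ by one edge $uv$ with $u \in C_i$, $v \in C_j$ --- fails: since $\gcd(p_i,p_j)=1$, the Chinese Remainder Theorem gives a time at which $u$ and $v$ are simultaneously the hole, and then $F_u = \big(\bigwedge_{w \in C_i \setminus \{u\}} x_w\big) \wedge x_v = 1 \wedge 0 = 0$, so $u$ fails to recover; a one-line computation shows this creates a second hole inside $C_i$ one step later, which cascades and drives $C_i$ to the all-zero fixed point. To prevent this I would link the cliques through a small buffer gadget instead of a bare edge: attach to each clique one or two pendant vertices of delay $\geq 2$ and join pendants of consecutive cliques along a spanning path, so that the whole graph $G$ is connected. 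Choosing the phases of the $k$ rotations so that at time $0$ no clique vertex carrying a pendant is the hole and every pendant has full delay and value $1$, I would argue by induction on $t$ that: (i) each pendant reads a $0$ only when its clique-neighbour is the hole, i.e.\ at isolated times (gaps $\geq 3$), so its delay never falls below $1$ and it stays in value $1$ forever; (ii) since every clique vertex therefore sees only the constant $1$ along its pendant edge, its conjunction is unaffected and each $C_i$ executes exactly the isolated $p_i$-rotation of Proposition~\ref{prop:cycles}; (iii) hence the global configuration at time $t$ is the product of the $k$ rotations and has minimal period $\operatorname{lcm}(p_1,\dots,p_k) = \prod_{i=1}^{k} p_i$.

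It remains to verify that this period is non-polynomial in the size of $G$. The graph has $\sum_{i=1}^{k} p_i$ clique vertices plus $O(k)$ pendant vertices, so $n = \Theta\!\big(\sum_{i\le k} p_i\big) = \Theta(k^2 \log k)$ by the prime number theorem (Chebyshev's estimates, or even Bertrand's postulate, already suffice), whence $k = \Omega\!\big(\sqrt{n/\log n}\big)$; on the other hand the period is $\prod_{i=1}^{k} p_i \geq 2^{k} = 2^{\Omega(\sqrt{n/\log n})}$, which grows faster than any polynomial in $n$. Since the cliques carry delay $p_i - 1$ and the pendants carry delay $2$, the resulting network is connected, conjunctive, has firing memory with (non-uniform) maximum delay values, and admits an attractor of non-polynomial period, as claimed. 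The uniform-delay refinement of Theorem~\ref{teo:nonpolysamedelay} would then follow by the same scheme after replacing each clique with the ``block'' gadget of Figure~\ref{fig:structblock}.

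I expect the only delicate point to be item (ii): checking coordinate by coordinate that adding the pendant edges leaves each clique's rotation literally unchanged, including at the single time step per period at which the hole passes through the pendant's clique-neighbour. The counting above and the collapse computation for the naive construction are both routine.
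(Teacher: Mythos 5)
Your proposal is correct, and it follows the paper's skeleton (disjoint cliques realizing the rotations of Proposition~\ref{prop:cycles}, one per prime, glued into a single connected graph, plus the prime-counting estimate of \cite{kiwi1994no} to make the resulting least common multiple superpolynomial), but it differs at the one step where the paper is thin: the gluing. The paper's proof of Theorem~\ref{teononpolydif} joins consecutive cliques $K_{p_i+1}$ (delay $p_i$, hence period $p_i+1$, although the text asserts period $p_i$) by a bare edge $s_is_{i+1}$ and only forbids $x_{s_i}=x_{s_{i+1}}=0$ at time $0$, asserting that the rotations then persist. Your Chinese-Remainder observation shows this persistence is not automatic: whenever two adjacent periods are coprime (always the case if the periods are the primes themselves, and already possible in the paper's own sequence, e.g.\ periods $3$ and $4$), the two attachment vertices are simultaneously the hole at some time regardless of phases, $s_i$ then fails to reload, a second hole appears, and both cliques --- and by propagation the whole chain --- collapse to the all-zero fixed point; so the bare-edge construction implicitly needs a choice of ordering and phases that the paper never makes. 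Your pendant buffers of delay at least $2$ remove the issue entirely: an isolated $0$-reading (the hole passes the buffer's clique neighbour once every $p_i\geq 3$ steps) only decrements its counter from $\geq 2$ to $\geq 1$ and it reloads at the next step, so every buffer keeps Boolean value $1$, each clique literally executes its isolated rotation, and the simultaneous induction you sketch closes with no arithmetic conditions on periods or phases; this is the same absorption mechanism the paper itself uses in its clock/block gadgets (Proposition~\ref{propcycleshom} and the \textbf{PSPACE} construction). With components of minimal period exactly $p_i$ (cliques $K_{p_i}$, uniform delay $p_i-1$, odd primes so $\tau\geq 2$), the global minimal period is $\prod_i p_i\geq 2^{k}=2^{\Omega(\sqrt{n/\log n})}$ with $n=\Theta(k^2\log k)$, the same count as in the paper. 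In short: same strategy, but your buffered gluing is more robust and actually closes the step the paper waves through, at the negligible cost of $O(k)$ extra vertices.
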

\begin{proof}
	Let us consider a fixed natural number $m \geq 2$ and a collection of prime numbers $p_i, \hdots, p_l, $ such that $2 \leq p_i \leq m$ where $l = \pi (m)$ and $\pi (m)$ denotes the number of primes not exceeding $m$. For each $i \in \{1, \hdots, l\}$ we consider a conjunctive network with interaction graph given by a complete graph $K_{p_i + 1}$ as we did in the proof of the last proposition. We consider a graph $G$ defined as the connected union of the previous complete graphs in the following way: we consider $V = \bigcup \limits_{i=1}^{l} V(K_{p_i +1})$ and for every $i \in \{1,\hdots,l\}$ we choose an arbitrary vertex $s_i \in K_{p_i+1}$ and we consider $E = \bigcup \limits_{i=1}^{l} E(K_{p_i +1})  \cup \bigcup \limits_{i=1}^{l-1} \{s_is_i+1\}$. In other words, we consider the union of the previous complete graphs and we connect each other by a unique aribitrarily chosen edge.  \\
	
	Let us define the label function $\varphi: V \to \{1,\hdots, l\}$ given by $\varphi(u) = i \text{ if } u \in V(K_{p_i +1})$.	Let $F^{p_1}, \hdots, F^{p_l}$ be the conjunctive networks with firing memory associated to each complete graph $K_{p_i +1}$. We recall that each of these functions has a maximum delay value of $p_i$ in every node. We define the following conjunctive network with firing memory $F : \{0,1\}^{|V|} \to \{0,1\}^{|V|}$ given by $F(x)_i = F^{p_{\varphi(i)}}(x)_i$ for every $i \in V \setminus \{s_j| j \in \{1, \hdots, l\} \}$ and $F(x)_{s_i} = F^{p_{\varphi(s_i)}}(x)_{s_i} \wedge x_{s_{i+1}}.$ It is not difficult to see that the interaction graph of $F$ is $G$. \\
	
	Finally, let us define the initial condition $x \in \{0,1\}^{|V|}$ in the following way: for the vertices in $V(K_{p_i +1})$ we assign the initial condition $0123,\hdots,p_i$ with the only restriction that it if $x_{s_i} = 0$ then $x_{s_{i+1}} \neq 0$. Note that we need this because if two nodes are connected and both have initial state $0$ then the global dynamics converge to the fixed point $0$. It is not difficult to see that starting from $x$ every node in $K_{p_i +1}$  is in a cycle with period $p_i$. Thus, we have that if $T$ is the global period of the network then: 	
	\begin{equation*}
	T \geq \prod \limits_{i=1}^{\pi(m)} p_i.
	\end{equation*}
	And also, we have that:	
	\begin{equation}
	|V| = \sum \limits_{i=1}^{\pi(m)} (p_i + 1).
	\label{vertexbound}
	\end{equation}	
	Additionally, if we define $\theta = \sum_{i=1}^{\pi(m)} \log (p_i)$, we have that: 	
	\begin{equation}
	T \geq \exp(\theta(m)).
	\label{Teq}
	\end{equation}
	Based in \ref{vertexbound}, and \ref{Teq}, we are going to apply a technique used in \cite{kiwi1994no} to deduce that T is not polynomial. This is based a result stated in \cite{hardy1979introduction}:
	\begin{equation}
	\begin{split}
	\pi (m) & = \Theta \left(\frac{m}{\log(m)}\right), \\
	\end{split}
	\label{pieq}
	\end{equation}
	\begin{align}
	\theta(m) = \Theta \left(\pi \log(m) \right). 
	\label{thetaeq}
	\end{align}
	
	We observe that from  \ref{pieq} it can be deduced that $m = O(\pi(m) log(m))$ and then $|V|  = O(\pi(m)^2 \log(m))$. On the other hand, using this last observation we deduce that $\log |V| = O(\log(m))$. And finally, $\sqrt{|V|\log(|V|)} = O(\pi(m) \log(m))$ which is equivalent to say that $\pi(m) \log(m) = \Omega(\sqrt{|V|\log(|V|)})$ and thus
	\begin{equation*}
	T \geq \exp(\Omega(\sqrt{|V|\log(|V|)}).
	\end{equation*}
	We conclude that $F$ has attractors with non polynomial period.
\end{proof}

\begin{prop}
	Let $\tau \geq 2$.  For every integer $k\geq2$, there exists a conjunctive network with firing memory and maximum delay $dt_i = \tau$ in every node $i$ which admits attractors with period $k(\tau+1).$
	\label{Appendix:propcycle}
\end{prop}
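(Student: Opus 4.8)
The plan is to promote the \emph{block} gadget of Figure~\ref{fig:structblock} to a one-way ``delay line'' of length $\tau+1$ and then to glue $k$ such blocks into a single cyclic chain, so that one traveling signal completes a full loop in exactly $k(\tau+1)$ steps; since every private clock has period $\tau+1$ and $\tau+1 \mid k(\tau+1)$, the resulting global period is automatically a multiple of $\tau+1$, and the extra factor $k$ comes from the loop.

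First I would record the two facts about the firing-memory $\mathrm{AND}$ rule that get used over and over: a node remains in state $1$ for as long as its delay counter is positive, whatever its neighbours do; it can fall to state $0$ only at a step where its counter is already $0$ and at least one neighbour is in state $0$; and it is reset to delay $\tau$ exactly when \emph{all} of its neighbours are in state $1$. Together with the period-$(\tau+1)$ ``traveling $0$'' attractor of a clock $C=K_{\tau+1}$ from Proposition~\ref{prop:cycles} (at every step exactly one vertex is in state $0$, every other vertex having delay equal to its cyclic distance to it), this lets me state the key lemma on one block $B$, whose path is $v_0-v_1-\cdots-v_\tau$ with each $v_j$ joined to one vertex of each of $\tau-1$ private clocks (Figure~\ref{fig:structblock}). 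Choosing the initial phases of the clocks so that the clock-neighbours of $v_j$ are all in state $1$ at the step on which $v_j$ is meant to carry the signal, the lemma asserts: starting from $B$ in its periodic \emph{idle} state, if a signal is injected at $v_0$ at time $t$, then (i) it re-emerges at $v_\tau$ at time $t+\tau+1$; (ii) no vertex of $B$ ever reaches the absorbing all-$0$ configuration, because the $\tau-1$ private clocks keep re-charging each $v_j$; (iii) each private clock runs undisturbed despite its (symmetric) edge to a $v_j$, precisely because the unique step on which $v_j$ is in state $0$ is a step on which the adjacent clock vertex is itself in state $0$, so the clock's own update is unchanged; and (iv) once the signal has left, $B$ has returned to its idle state, so it is ready to receive again. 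Proving this lemma — including the clean handoff across the edge joining $v_\tau$ of one block to $v_0$ of the next — is the technical heart of the argument and I expect it to be the main obstacle: it is a finite but delicate case analysis tracking the state/delay pair of each of the $\tau+1$ path vertices and the $(\tau+1)(\tau-1)$ clock vertices over $O(\tau)$ consecutive steps, and the clock phases have to be chosen just so that ``no collapse'', ``clocks undisturbed'' and ``clean handoff'' all hold simultaneously.

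Granting the block lemma, I would finish as follows. Take $k$ disjoint copies $B_1,\dots,B_k$ of $B$, add for each $i$ an edge between $v_\tau$ of $B_i$ and $v_0$ of $B_{i+1}$ (indices modulo $k$) — which keeps every path vertex of degree $\tau+1$, so all $\mathrm{AND}$-thresholds are as wanted — and keep every private clock, with $dt_i=\tau$ at every vertex; the graph is connected. For the initial condition, put each clock in the phase that the block lemma prescribes for its own block, put all path vertices in the idle state, and inject a single signal at $v_0$ of $B_1$. By the block lemma the signal crosses $B_1$ in $\tau+1$ steps, is handed across the junction to $B_2$, crosses it in a further $\tau+1$ steps, and so on; after $k(\tau+1)$ steps the signal has traversed all $k$ blocks and the global configuration equals the initial one (every private clock having completed an integer number of its $(\tau+1)$-cycles), so the trajectory is periodic with period dividing $k(\tau+1)$. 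Finally, to see the period is exactly $k(\tau+1)$, note that at every time the configuration differs from the purely idle pattern in exactly one place — the path vertex currently carrying the signal — and over one loop this place runs through the $k(\tau+1)$ path vertices $v_0,\dots,v_\tau$ of $B_1$, then of $B_2$, and so on, hitting $k(\tau+1)$ pairwise distinct positions; hence the configurations at times $0,1,\dots,k(\tau+1)-1$ are pairwise distinct, no proper divisor of $k(\tau+1)$ can be a period, and the attractor reached from this initial condition has period exactly $k(\tau+1)$. This proves the proposition.
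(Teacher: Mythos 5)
Your overall plan coincides with the paper's: the same block gadget (a $(\tau+1)$-path in which every vertex carries $\tau-1$ private clocks $K_{\tau+1}$, all with $dt_i=\tau$), the same cyclic gluing of $k$ blocks, a single travelling $0$ that crosses one block per $\tau+1$ steps, and the same counting argument for why the period is exactly $k(\tau+1)$. The problem is that the entire content of the proposition sits inside your ``block lemma'' (correct phases exist, clocks are undisturbed, no collapse, clean handoff), which you explicitly defer, and the one concrete mechanism you do state for it, your point (iii), is wrong. In the rotating attractor of a clock $K_{\tau+1}$, the unique vertex in state $0$ at a given step is precisely the vertex that must be refreshed at the next step, and it refreshes only if \emph{all} of its neighbours -- including its path neighbour $v_j$ -- are in state $1$. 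So if $v_j$ were in state $0$ at a step when its adjacent clock vertex $a$ is itself in state $0$, as (iii) stipulates, then $F_a=0$ at that step, the refresh is blocked, $a$ stays at $0$, and the clock (and then the block) begins to degrade; the clock's update is anything but ``unchanged''. The correct requirement is the opposite one: $v_j$ must be in state $1$ at every step at which $a$ has value $0$. Note also that (iii) contradicts your own phase prescription that the clock neighbours of $v_j$ are all in state $1$ at the step when $v_j$ carries the signal.

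This is exactly the delicate point that the paper settles by an explicit initial condition rather than an existence claim: the $\tau-1$ clock neighbours of the $j$-th path vertex are put at phases $j+1,\ldots,j+\tau-1 \pmod{\tau+1}$ (Figure \ref{fig:stblocks}). Consequently $v_j$ sees a $0$ from its clocks at every local time except those $\equiv j-1,\,j \pmod{\tau+1}$, which runs its counter down but lets it refresh twice in a row, so in an idle block its delay oscillates between $1$ and $\tau$ and never spuriously reaches $0$; conversely $v_j$ is in state $0$ only at local time $\equiv j$, at which moment its clock neighbours have values $1,\ldots,\tau-1$, so no clock refresh is ever blocked. The handoff is then the observation that an incoming $0$ (from the previous path vertex, or from the terminal vertex of the preceding block) supplies the missing $0$ at time $\equiv j-1$, turning the idle refresh into a drop to state $0$ one step later. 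You correctly identified that this phase choice and verification is the heart of the proof, but you neither exhibit the phases nor carry out the check, and the condition you do propose would make the construction fail; you also need the all-$1$ condition at time $\equiv j-1$ (not only at the signal step $\equiv j$), which is what restricting to $\tau-1$ clocks per vertex buys. With the paper's explicit assignment substituted for your lemma, the rest of your argument (cyclic gluing and minimality of the period via the position of the unique travelling $0$) matches the paper's proof.
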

\begin{proof}
	Let $k \geq 1$ be an integer. Let us define $C = K_{\tau +1}$ as complete graph  with $\tau + 1$ vertices. We recall that these gadget defines a conjunctive network with firing memory which allows cycles of length $\tau +1$ when we have that the maximum delay of every node is $dt_i = \tau$. We are going to call this structure a clock. We are going to define $k$ copies of a certain gadget that we will call block. Let $j \in \{0, \hdots, k\}$ we define the $j$-block $B_j$ as a $\tau+1$-path such that every node has a $\tau -1$ neighbours in a different clock beside its neighbour in the path as it is shown in Figure \ref{fig:appendixstructblock}. As every block contains an induced path then we can number the vertices in the path defining an initial and a terminal vertex.  We write $C(B_j)_{r,l} \text{ for } r= 1,2,3, \hdots, \tau \text{  } l = 1,2,3, \hdots, \tau - 1$ to denote the clocks of the $j$-block that are associated to the $r$-th vertex of the path. Besides, a  node in a clock that is connected to a node in the path is called $C(B_j)_{r,l}a$ or simply $a$ when the context is clear. 
	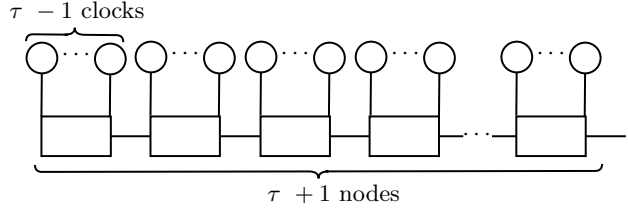
\begin{figure}
		\centering		
		\tikzset{every picture/.style={line width=0.75pt}} %set default line width to 0.75pt        
		
		\begin{tikzpicture}[x=0.75pt,y=0.75pt,yscale=-0.5,xscale=0.5]
		%uncomment if require: \path (0,300); %set diagram left start at 0, and has height of 300
		
		%Straight Lines [id:da7793732355738401] 
		\draw    (481.5,171) -- (541,171) ;

		%Straight Lines [id:da33320963799675307] 
		\draw    (393,171) -- (452.5,171) ;

		%Straight Lines [id:da9982643034980614] 
		\draw    (172,171) -- (247.5,171) ;

		%Straight Lines [id:da6305898668226269] 
		\draw    (62,171) -- (137.5,171) ;

		%Shape: Rectangle [id:dp4588495680475857] 
		\draw  [fill={rgb, 255:red, 255; green, 255; blue, 255 }  ,fill opacity=1 ] (27,151) -- (97,151) -- (97,191) -- (27,191) -- cycle ;
		%Shape: Rectangle [id:dp36593315683530037] 
		\draw  [fill={rgb, 255:red, 255; green, 255; blue, 255 }  ,fill opacity=1 ] (137,151) -- (207,151) -- (207,191) -- (137,191) -- cycle ;
		%Straight Lines [id:da35541436591703834] 
		\draw    (283,171) -- (358.5,171) ;

		%Shape: Rectangle [id:dp5465447409823256] 
		\draw  [fill={rgb, 255:red, 255; green, 255; blue, 255 }  ,fill opacity=1 ] (248,151) -- (318,151) -- (318,191) -- (248,191) -- cycle ;
		%Shape: Rectangle [id:dp2889532562959056] 
		\draw  [fill={rgb, 255:red, 255; green, 255; blue, 255 }  ,fill opacity=1 ] (358,151) -- (428,151) -- (428,191) -- (358,191) -- cycle ;
		%Straight Lines [id:da5477995783183771] 
		\draw    (541,171) -- (616.5,171) ;

		%Shape: Rectangle [id:dp6611916296824716] 
		\draw  [fill={rgb, 255:red, 255; green, 255; blue, 255 }  ,fill opacity=1 ] (506,151) -- (576,151) -- (576,191) -- (506,191) -- cycle ;
		%Shape: Brace [id:dp7562140163971955] 
		\draw   (109.5,76) .. controls (109.55,71.33) and (107.24,68.98) .. (102.57,68.93) -- (70.57,68.61) .. controls (63.9,68.54) and (60.59,66.17) .. (60.64,61.5) .. controls (60.59,66.17) and (57.24,68.47) .. (50.57,68.4)(53.57,68.43) -- (18.57,68.08) .. controls (13.9,68.03) and (11.55,70.34) .. (11.5,75.01) ;
		%Straight Lines [id:da2958830536366863] 
		\draw    (27.5,92) -- (27,151) ;

		%Shape: Circle [id:dp10886912839990037] 
		\draw  [fill={rgb, 255:red, 255; green, 255; blue, 255 }  ,fill opacity=1 ] (12,92) .. controls (12,83.44) and (18.94,76.5) .. (27.5,76.5) .. controls (36.06,76.5) and (43,83.44) .. (43,92) .. controls (43,100.56) and (36.06,107.5) .. (27.5,107.5) .. controls (18.94,107.5) and (12,100.56) .. (12,92) -- cycle ;
		%Straight Lines [id:da5444084032693257] 
		\draw    (96.5,93) -- (96,152) ;

		%Shape: Circle [id:dp7730743194504233] 
		\draw  [fill={rgb, 255:red, 255; green, 255; blue, 255 }  ,fill opacity=1 ] (81,93) .. controls (81,84.44) and (87.94,77.5) .. (96.5,77.5) .. controls (105.06,77.5) and (112,84.44) .. (112,93) .. controls (112,101.56) and (105.06,108.5) .. (96.5,108.5) .. controls (87.94,108.5) and (81,101.56) .. (81,93) -- cycle ;
		
		%Straight Lines [id:da006591036483765356] 
		\draw    (137.5,91) -- (137,150) ;

		%Straight Lines [id:da9349306475319963] 
		\draw    (206.5,92) -- (206,151) ;

		%Straight Lines [id:da1981895346253757] 
		\draw    (248.5,91) -- (248,150) ;

		%Straight Lines [id:da6559813489768787] 
		\draw    (317.5,92) -- (317,151) ;

		%Straight Lines [id:da3058400459455948] 
		\draw    (359.5,91) -- (359,150) ;

		%Straight Lines [id:da5395533258379659] 
		\draw    (428.5,92) -- (428,151) ;

		%Straight Lines [id:da7969969459712731] 
		\draw    (506.5,91) -- (506,150) ;

		%Straight Lines [id:da07232346592530337] 
		\draw    (575.5,92) -- (575,151) ;

		%Shape: Circle [id:dp35715548617664106] 
		\draw  [fill={rgb, 255:red, 255; green, 255; blue, 255 }  ,fill opacity=1 ] (122,91) .. controls (122,82.44) and (128.94,75.5) .. (137.5,75.5) .. controls (146.06,75.5) and (153,82.44) .. (153,91) .. controls (153,99.56) and (146.06,106.5) .. (137.5,106.5) .. controls (128.94,106.5) and (122,99.56) .. (122,91) -- cycle ;
		%Shape: Circle [id:dp05187329347038272] 
		\draw  [fill={rgb, 255:red, 255; green, 255; blue, 255 }  ,fill opacity=1 ] (191,92) .. controls (191,83.44) and (197.94,76.5) .. (206.5,76.5) .. controls (215.06,76.5) and (222,83.44) .. (222,92) .. controls (222,100.56) and (215.06,107.5) .. (206.5,107.5) .. controls (197.94,107.5) and (191,100.56) .. (191,92) -- cycle ;
		%Shape: Circle [id:dp9414162857142212] 
		\draw  [fill={rgb, 255:red, 255; green, 255; blue, 255 }  ,fill opacity=1 ] (233,91) .. controls (233,82.44) and (239.94,75.5) .. (248.5,75.5) .. controls (257.06,75.5) and (264,82.44) .. (264,91) .. controls (264,99.56) and (257.06,106.5) .. (248.5,106.5) .. controls (239.94,106.5) and (233,99.56) .. (233,91) -- cycle ;
		%Shape: Circle [id:dp6147502416545361] 
		\draw  [fill={rgb, 255:red, 255; green, 255; blue, 255 }  ,fill opacity=1 ] (302,92) .. controls (302,83.44) and (308.94,76.5) .. (317.5,76.5) .. controls (326.06,76.5) and (333,83.44) .. (333,92) .. controls (333,100.56) and (326.06,107.5) .. (317.5,107.5) .. controls (308.94,107.5) and (302,100.56) .. (302,92) -- cycle ;
		%Shape: Circle [id:dp6098296648633619] 
		\draw  [fill={rgb, 255:red, 255; green, 255; blue, 255 }  ,fill opacity=1 ] (344,91) .. controls (344,82.44) and (350.94,75.5) .. (359.5,75.5) .. controls (368.06,75.5) and (375,82.44) .. (375,91) .. controls (375,99.56) and (368.06,106.5) .. (359.5,106.5) .. controls (350.94,106.5) and (344,99.56) .. (344,91) -- cycle ;
		%Shape: Circle [id:dp36425474798694135] 
		\draw  [fill={rgb, 255:red, 255; green, 255; blue, 255 }  ,fill opacity=1 ] (413,92) .. controls (413,83.44) and (419.94,76.5) .. (428.5,76.5) .. controls (437.06,76.5) and (444,83.44) .. (444,92) .. controls (444,100.56) and (437.06,107.5) .. (428.5,107.5) .. controls (419.94,107.5) and (413,100.56) .. (413,92) -- cycle ;
		%Shape: Circle [id:dp9177982566367411] 
		\draw  [fill={rgb, 255:red, 255; green, 255; blue, 255 }  ,fill opacity=1 ] (491,91) .. controls (491,82.44) and (497.94,75.5) .. (506.5,75.5) .. controls (515.06,75.5) and (522,82.44) .. (522,91) .. controls (522,99.56) and (515.06,106.5) .. (506.5,106.5) .. controls (497.94,106.5) and (491,99.56) .. (491,91) -- cycle ;
		%Shape: Circle [id:dp140606505581079] 
		\draw  [fill={rgb, 255:red, 255; green, 255; blue, 255 }  ,fill opacity=1 ] (560,92) .. controls (560,83.44) and (566.94,76.5) .. (575.5,76.5) .. controls (584.06,76.5) and (591,83.44) .. (591,92) .. controls (591,100.56) and (584.06,107.5) .. (575.5,107.5) .. controls (566.94,107.5) and (560,100.56) .. (560,92) -- cycle ;
		%Shape: Brace [id:dp6154982720304323] 
		\draw   (20.5,200) .. controls (20.51,204.67) and (22.84,207) .. (27.51,206.99) -- (296.51,206.52) .. controls (303.18,206.51) and (306.51,208.83) .. (306.52,213.5) .. controls (306.51,208.83) and (309.84,206.49) .. (316.51,206.48)(313.51,206.49) -- (585.51,206.01) .. controls (590.18,206) and (592.51,203.67) .. (592.5,199) ;

		% Text Node
		\draw (466,169) node   {$\cdots $};
		% Text Node
		\draw (62,89) node   {$\cdots $};
		% Text Node
		\draw (172,88) node   {$\cdots $};
		% Text Node
		\draw (283,88) node   {$\cdots $};
		% Text Node
		\draw (394,88) node   {$\cdots $};
		% Text Node
		\draw (541,88) node   {$\cdots $};
		% Text Node
		\draw (63,44) node   {$\tau \ -1\ \text{clocks}$};
		% Text Node
		\draw (322,229) node   {$\tau \ +1\ \text{nodes}$};
		\end{tikzpicture}
		\caption{Structure of the $j$-th block used in Proposition \ref{prop:cycles} to define a conjunctive network with firing memory and maximum delay values $dt_i = \tau$ for every node $i$ that admits attractors with period $k(\tau+1)$. Every circle in the figure represents clock $C(B_j)_{r,l}$ associated to a node $r$ represented by a square. This gadget has $\tau + 1$ nodes and every node has $\tau -1$ clocks.}
		\label{fig:appendixstructblock}
	\end{figure}	
	Finally we consider the graph $G$ as the connected union of $k$-blocks defined connecting every terminal vertex of the $j$-th block to an initial vertex in $(j+1)$-th block and the terminal vertex of $k$-block to the initial vertex of the $1$st-block. With this construction $G$ is a $(\tau+1)$-cycle in which every node is connected to a clock. Using the structure of $G$, we define a global rule $F^{\tau}$ as a conjunctive network with firing memory and maximum delay values $\tau$ which have as underlying interaction graph $G$.  \\	
	Now, we define the dynamics of an attractor with period $\tau + 1$ using $F^{\tau}$.  We define the initial condition $x \in \{0,1\}^{|V|}$ by setting the same initial condition in each block except the first one. For  first block we have the state $0123\cdots\tau$ for the nodes in the path and for $j = 2,3,\cdots,k$ we have the initial states $\tau123\cdots\tau$. For the clocks, every node in a block has one neighbour in a clock and its dynamics is defined as in Proposition \ref{prop:cycles}, so it is associated to an attractor with period $\tau +1$. We are going to write only the state of the node $a$ in every clock and the other states in this subgraph are assumed to be in the initial states so the clock defines an attractor with period $\tau+1$. Assuming this notation, the initial state for the clocks are: $$\underbrace{123\hdots (\tau-1)}_{\text{Node } 0} \text{   } \underbrace{234\hdots\tau}_{\text{Node } 1}  \text{   } \underbrace{  345\hdots \tau 0}_{\text{Node } 2} \hdots   \text{   }   \underbrace{0123\hdots (\tau-2)}_{\text{Node } \tau}.$$ A summary of the initial condition $x$ is shown in Figure \ref{fig:stblocks}.
	\begin{figure}
		\centering	
		\tikzset{every picture/.style={line width=0.75pt}} %set default line width to 0.75pt        
		
		\begin{tikzpicture}[x=0.7pt,y=0.7pt,yscale=-0.5,xscale=0.7]
		%uncomment if require: \path (0,300); %set diagram left start at 0, and has height of 300

		%Straight Lines [id:da6639216544431907] 
		\draw    (61,183) -- (215,183) ;

		%Straight Lines [id:da40230144128312095] 
		\draw    (480.5,183) -- (540,183) ;

		%Straight Lines [id:da2500839603374412] 
		\draw    (215,183) -- (377,183) ;

		%Shape: Rectangle [id:dp26641897916358603] 
		\draw  [fill={rgb, 255:red, 255; green, 255; blue, 255 }  ,fill opacity=1 ] (180,163) -- (250,163) -- (250,203) -- (180,203) -- cycle ;
		%Straight Lines [id:da22195118983610307] 
		\draw    (180.5,103) -- (180,162) ;

		%Straight Lines [id:da47198889673367095] 
		\draw    (249.5,104) -- (249,163) ;

		%Shape: Circle [id:dp42297361511947384] 
		\draw  [fill={rgb, 255:red, 255; green, 255; blue, 255 }  ,fill opacity=1 ] (165,103) .. controls (165,94.44) and (171.94,87.5) .. (180.5,87.5) .. controls (189.06,87.5) and (196,94.44) .. (196,103) .. controls (196,111.56) and (189.06,118.5) .. (180.5,118.5) .. controls (171.94,118.5) and (165,111.56) .. (165,103) -- cycle ;
		%Shape: Circle [id:dp63224941923344] 
		\draw  [fill={rgb, 255:red, 255; green, 255; blue, 255 }  ,fill opacity=1 ] (234,104) .. controls (234,95.44) and (240.94,88.5) .. (249.5,88.5) .. controls (258.06,88.5) and (265,95.44) .. (265,104) .. controls (265,112.56) and (258.06,119.5) .. (249.5,119.5) .. controls (240.94,119.5) and (234,112.56) .. (234,104) -- cycle ;
		%Shape: Brace [id:dp38746640386915376] 
		\draw   (264.5,88) .. controls (264.55,83.33) and (262.24,80.98) .. (257.57,80.93) -- (225.57,80.61) .. controls (218.9,80.54) and (215.59,78.17) .. (215.64,73.5) .. controls (215.59,78.17) and (212.24,80.47) .. (205.57,80.4)(208.57,80.43) -- (173.57,80.08) .. controls (168.9,80.03) and (166.55,82.34) .. (166.5,87.01) ;
		%Shape: Rectangle [id:dp5489439244641368] 
		\draw  [fill={rgb, 255:red, 255; green, 255; blue, 255 }  ,fill opacity=1 ] (26,163) -- (96,163) -- (96,203) -- (26,203) -- cycle ;
		%Straight Lines [id:da4869616870319018] 
		\draw    (377,183) -- (452.5,183) ;

		%Shape: Rectangle [id:dp6482568520181697] 
		\draw  [fill={rgb, 255:red, 255; green, 255; blue, 255 }  ,fill opacity=1 ] (342,163) -- (412,163) -- (412,203) -- (342,203) -- cycle ;
		%Straight Lines [id:da050974166502869345] 
		\draw    (342.5,103) -- (342,162) ;

		%Straight Lines [id:da5305000478802258] 
		\draw    (411.5,104) -- (411,163) ;

		%Shape: Circle [id:dp019656274738180546] 
		\draw  [fill={rgb, 255:red, 255; green, 255; blue, 255 }  ,fill opacity=1 ] (327,103) .. controls (327,94.44) and (333.94,87.5) .. (342.5,87.5) .. controls (351.06,87.5) and (358,94.44) .. (358,103) .. controls (358,111.56) and (351.06,118.5) .. (342.5,118.5) .. controls (333.94,118.5) and (327,111.56) .. (327,103) -- cycle ;
		%Shape: Circle [id:dp8628742062540489] 
		\draw  [fill={rgb, 255:red, 255; green, 255; blue, 255 }  ,fill opacity=1 ] (396,104) .. controls (396,95.44) and (402.94,88.5) .. (411.5,88.5) .. controls (420.06,88.5) and (427,95.44) .. (427,104) .. controls (427,112.56) and (420.06,119.5) .. (411.5,119.5) .. controls (402.94,119.5) and (396,112.56) .. (396,104) -- cycle ;
		
		%Shape: Brace [id:dp09977426677143364] 
		\draw   (426.5,88) .. controls (426.55,83.33) and (424.24,80.98) .. (419.57,80.93) -- (387.57,80.61) .. controls (380.9,80.54) and (377.59,78.17) .. (377.64,73.5) .. controls (377.59,78.17) and (374.24,80.47) .. (367.57,80.4)(370.57,80.43) -- (335.57,80.08) .. controls (330.9,80.03) and (328.55,82.34) .. (328.5,87.01) ;
		
		%Straight Lines [id:da5994176003207472] 
		\draw    (540,183) -- (615.5,183) ;

		%Shape: Rectangle [id:dp6766782178563064] 
		\draw  [fill={rgb, 255:red, 255; green, 255; blue, 255 }  ,fill opacity=1 ] (505,163) -- (575,163) -- (575,203) -- (505,203) -- cycle ;
		%Shape: Brace [id:dp44282272473374396] 
		\draw   (108.5,88) .. controls (108.55,83.33) and (106.24,80.98) .. (101.57,80.93) -- (69.57,80.61) .. controls (62.9,80.54) and (59.59,78.17) .. (59.64,73.5) .. controls (59.59,78.17) and (56.24,80.47) .. (49.57,80.4)(52.57,80.43) -- (17.57,80.08) .. controls (12.9,80.03) and (10.55,82.34) .. (10.5,87.01) ;
		%Straight Lines [id:da1371233232571063] 
		\draw    (26.5,104) -- (26,163) ;

		%Shape: Circle [id:dp011269010903508292] 
		\draw  [fill={rgb, 255:red, 255; green, 255; blue, 255 }  ,fill opacity=1 ] (11,104) .. controls (11,95.44) and (17.94,88.5) .. (26.5,88.5) .. controls (35.06,88.5) and (42,95.44) .. (42,104) .. controls (42,112.56) and (35.06,119.5) .. (26.5,119.5) .. controls (17.94,119.5) and (11,112.56) .. (11,104) -- cycle ;
		%Straight Lines [id:da6942702011504371] 
		\draw    (95.5,105) -- (95,164) ;

		%Shape: Circle [id:dp1709347329947719] 
		\draw  [fill={rgb, 255:red, 255; green, 255; blue, 255 }  ,fill opacity=1 ] (80,105) .. controls (80,96.44) and (86.94,89.5) .. (95.5,89.5) .. controls (104.06,89.5) and (111,96.44) .. (111,105) .. controls (111,113.56) and (104.06,120.5) .. (95.5,120.5) .. controls (86.94,120.5) and (80,113.56) .. (80,105) -- cycle ;
		%Straight Lines [id:da34682684866985136] 
		\draw    (505.5,103) -- (505,162) ;

		%Straight Lines [id:da9476406966874036] 
		\draw    (574.5,104) -- (574,163) ;

		%Shape: Circle [id:dp21811149813797126] 
		\draw  [fill={rgb, 255:red, 255; green, 255; blue, 255 }  ,fill opacity=1 ] (490,103) .. controls (490,94.44) and (496.94,87.5) .. (505.5,87.5) .. controls (514.06,87.5) and (521,94.44) .. (521,103) .. controls (521,111.56) and (514.06,118.5) .. (505.5,118.5) .. controls (496.94,118.5) and (490,111.56) .. (490,103) -- cycle ;
		%Shape: Circle [id:dp7095048067282137] 
		\draw  [fill={rgb, 255:red, 255; green, 255; blue, 255 }  ,fill opacity=1 ] (559,104) .. controls (559,95.44) and (565.94,88.5) .. (574.5,88.5) .. controls (583.06,88.5) and (590,95.44) .. (590,104) .. controls (590,112.56) and (583.06,119.5) .. (574.5,119.5) .. controls (565.94,119.5) and (559,112.56) .. (559,104) -- cycle ;
		%Shape: Brace [id:dp9994563345389047] 
		\draw   (588.5,88) .. controls (588.55,83.33) and (586.24,80.98) .. (581.57,80.93) -- (549.57,80.61) .. controls (542.9,80.54) and (539.59,78.17) .. (539.64,73.5) .. controls (539.59,78.17) and (536.24,80.47) .. (529.57,80.4)(532.57,80.43) -- (497.57,80.08) .. controls (492.9,80.03) and (490.55,82.34) .. (490.5,87.01) ;

		% Text Node
		\draw (215,100) node   {$\cdots $};
		% Text Node
		\draw (377,100) node   {$\cdots $};
		% Text Node
		\draw (540,100) node   {$\cdots $};
		% Text Node
		\draw (465,181) node   {$\cdots $};
		% Text Node
		\draw (61,101) node   {$\cdots $};
		% Text Node
		\draw (75+3,48) node   {$\cdots $};
		% Text Node
		\draw  [color={rgb, 255:red, 0; green, 0; blue, 0 }  ,draw opacity=1 ][fill={rgb, 255:red, 255; green, 255; blue, 255 }  ,fill opacity=1 ]  (84+8,37) -- (126+12,37) -- (126+12,61) -- (84+8,61) -- cycle  ;
		\draw (105+10,49) node   {$\tau -1$};
		% Text Node
		\draw  [color={rgb, 255:red, 0; green, 0; blue, 0 }  ,draw opacity=1 ][fill={rgb, 255:red, 255; green, 255; blue, 255 }  ,fill opacity=1 ]  (4,37) -- (22,37) -- (22,61) -- (4,61) -- cycle  ;
		\draw (13,49) node   {$1$};
		% Text Node
		\draw  [color={rgb, 255:red, 0; green, 0; blue, 0 }  ,draw opacity=1 ][fill={rgb, 255:red, 255; green, 255; blue, 255 }  ,fill opacity=1 ]  (25,37) -- (43,37) -- (43,61) -- (25,61) -- cycle  ;
		\draw (34,49) node   {$2$};
		% Text Node
		\draw  [color={rgb, 255:red, 0; green, 0; blue, 0 }  ,draw opacity=1 ][fill={rgb, 255:red, 255; green, 255; blue, 255 }  ,fill opacity=1 ]  (46,37) -- (64,37) -- (64,61) -- (46,61) -- cycle  ;
		\draw (55,49) node   {$3$};
		% Text Node
		\draw (231,48) node   {$\cdots $};
		% Text Node
		\draw  [color={rgb, 255:red, 0; green, 0; blue, 0 }  ,draw opacity=1 ][fill={rgb, 255:red, 255; green, 255; blue, 255 }  ,fill opacity=1 ]  (252,37) -- (270,37) -- (270,61) -- (252,61) -- cycle  ;
		\draw (261,49) node   {$\tau $};
		% Text Node
		\draw  [color={rgb, 255:red, 0; green, 0; blue, 0 }  ,draw opacity=1 ][fill={rgb, 255:red, 255; green, 255; blue, 255 }  ,fill opacity=1 ]  (160,37) -- (178,37) -- (178,61) -- (160,61) -- cycle  ;
		\draw (169,49) node   {$2$};
		% Text Node
		\draw  [color={rgb, 255:red, 0; green, 0; blue, 0 }  ,draw opacity=1 ][fill={rgb, 255:red, 255; green, 255; blue, 255 }  ,fill opacity=1 ]  (181,37) -- (199,37) -- (199,61) -- (181,61) -- cycle  ;
		\draw (190,49) node   {$3$};
		% Text Node
		\draw  [color={rgb, 255:red, 0; green, 0; blue, 0 }  ,draw opacity=1 ][fill={rgb, 255:red, 255; green, 255; blue, 255 }  ,fill opacity=1 ]  (202,37) -- (220,37) -- (220,61) -- (202,61) -- cycle  ;
		\draw (211,49) node   {$4$};
		% Text Node
		\draw (384,48) node   {$\cdots $};
		% Text Node
		\draw  [color={rgb, 255:red, 0; green, 0; blue, 0 }  ,draw opacity=1 ][fill={rgb, 255:red, 255; green, 255; blue, 255 }  ,fill opacity=1 ]  (396,37) -- (414,37) -- (414,61) -- (396,61) -- cycle  ;
		\draw (405,49) node   {$\tau $};
		% Text Node
		\draw  [color={rgb, 255:red, 0; green, 0; blue, 0 }  ,draw opacity=1 ][fill={rgb, 255:red, 255; green, 255; blue, 255 }  ,fill opacity=1 ]  (313,37) -- (331,37) -- (331,61) -- (313,61) -- cycle  ;
		\draw (322,49) node   {$3$};
		% Text Node
		\draw  [color={rgb, 255:red, 0; green, 0; blue, 0 }  ,draw opacity=1 ][fill={rgb, 255:red, 255; green, 255; blue, 255 }  ,fill opacity=1 ]  (334,37) -- (352,37) -- (352,61) -- (334,61) -- cycle  ;
		\draw (343,49) node   {$4$};
		% Text Node
		\draw  [color={rgb, 255:red, 0; green, 0; blue, 0 }  ,draw opacity=1 ][fill={rgb, 255:red, 255; green, 255; blue, 255 }  ,fill opacity=1 ]  (355,37) -- (373,37) -- (373,61) -- (355,61) -- cycle  ;
		\draw (364,49) node   {$5$};
		% Text Node
		\draw (553+3,48) node   {$\cdots $};
		% Text Node
		\draw  [color={rgb, 255:red, 0; green, 0; blue, 0 }  ,draw opacity=1 ][fill={rgb, 255:red, 255; green, 255; blue, 255 }  ,fill opacity=1 ]  (562+8,37) -- (604+12,37) -- (604+12,61) -- (562+8,61) -- cycle  ;
		\draw (583+10,49) node   {$\tau -2$};
		% Text Node
		\draw  [color={rgb, 255:red, 0; green, 0; blue, 0 }  ,draw opacity=1 ][fill={rgb, 255:red, 255; green, 255; blue, 255 }  ,fill opacity=1 ]  (482,37) -- (500,37) -- (500,61) -- (482,61) -- cycle  ;
		\draw (491,49) node   {$0$};
		% Text Node
		\draw  [color={rgb, 255:red, 0; green, 0; blue, 0 }  ,draw opacity=1 ][fill={rgb, 255:red, 255; green, 255; blue, 255 }  ,fill opacity=1 ]  (503,37) -- (521,37) -- (521,61) -- (503,61) -- cycle  ;
		\draw (512,49) node   {$1$};
		% Text Node
		\draw  [color={rgb, 255:red, 0; green, 0; blue, 0 }  ,draw opacity=1 ][fill={rgb, 255:red, 255; green, 255; blue, 255 }  ,fill opacity=1 ]  (524,37) -- (542,37) -- (542,61) -- (524,61) -- cycle  ;
		\draw (533,49) node   {$2$};
		% Text Node
		\draw  [color={rgb, 255:red, 0; green, 0; blue, 0 }  ,draw opacity=1 ][fill={rgb, 255:red, 255; green, 255; blue, 255 }  ,fill opacity=1 ]  (417,37) -- (435,37) -- (435,61) -- (417,61) -- cycle  ;
		\draw (426,49) node   {$0$};
		% Text Node
		\draw (60,182) node   {$\tau $};
		% Text Node
		\draw (215,182) node   {$1$};
		% Text Node
		\draw (375,182) node   {$2$};
		% Text Node
		\draw (541,182) node   {$\tau $};	
		\end{tikzpicture}	
		\caption{Initial condition for the $j$-th block, $j\geq 2$ used in Proposition \ref{prop:cycles} to define an attractor with period $k(\tau+1).$ For the first block we just define the state of the first node to $0$ instead of $\tau$.}
		\label{fig:stblocks}		
	\end{figure}
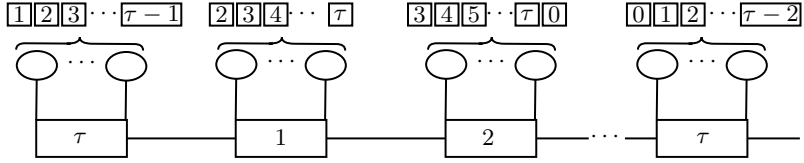	
	%		\caption{Block structure used in Proposition \ref{prop:cycles} to define a conjunctive network with firing memory and maximum delay values $dt_i = \tau$ for every node $i$ that admits attractors with period $k(\tau+1)$}
	We remark that the dynamics of nodes labelled by $a$ is defined by an attractor with period $\tau + 1$ and it is independent of the dynamics of its neighbours in the path. Then, to prove that the global period of the network is  $k(\tau+1)$ it suffices to show that for the first block the next state will be $\tau0123\hdots\tau-1$. If we have that in $\tau+1$ steps the second block will have the state $0123\cdots\tau$ and so in $k(\tau +1)$ steps the network will return to the initial condition. 
	
	In fact, we have that every node in the path except for the first and second node has some neighbour $a$ in a clock and in state $0$. But as it is shown in Figure \ref{fig:ktauc}.  node $1$ has node $0$ as a neighbour and node $0$ is initial setted to $0$ so in the next step, every node in the path is going descend to the previous state except for the first one that will be setted to $\tau$ thus, the next state of the nodes in the path will be $\tau0123\hdots\tau-1$ as desired. Besides, clocks will be updated accordingly: $$\underbrace{012\hdots (\tau-2)}_{\text{Node } 0} \text{   } \underbrace{123\hdots\tau-1}_{\text{Node } 1}  \text{   } \underbrace{  234\hdots \tau-1 \tau}_{\text{Node } 3} \hdots   \text{   }   \underbrace{\tau012\hdots (\tau-2)}_{\text{Node } \tau}$$
	
	We conclude that $F$ admits attractors with lenght $k(\tau +1)$.
	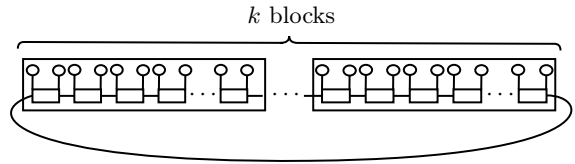
\begin{figure}
		
		\tikzset{every picture/.style={line width=0.75pt}} %set default line width to 0.75pt        
		\centering
		\begin{tikzpicture}[x=0.6pt,y=0.6pt,yscale=-0.6,xscale=0.55]
		%uncomment if require: \path (0,300); %set diagram left start at 0, and has height of 300
		
		%Straight Lines [id:da6197050591265909] 
		\draw    (339.39,141.73) -- (372.35,141.73) ;

		%Straight Lines [id:da4410871826129972] 

		%Shape: Rectangle [id:dp41724276209885147] 
		\draw  [fill={rgb, 255:red, 255; green, 255; blue, 255 }  ,fill opacity=1 ] (585.62,134.89) -- (617.48,134.89) -- (617.48,148.56) -- (585.62,148.56) -- cycle ;
		%Straight Lines [id:da8948678556415077] 
		\draw    (585.85,114.39) -- (585.62,134.55) ;

		%Straight Lines [id:da33533057953511447] 
		\draw    (617.25,114.73) -- (617.02,134.89) ;

		%Shape: Ellipse [id:dp552274004270136] 
		\draw  [fill={rgb, 255:red, 255; green, 255; blue, 255 }  ,fill opacity=1 ] (610.19,114.73) .. controls (610.19,111.81) and (613.35,109.43) .. (617.25,109.43) .. controls (621.14,109.43) and (624.3,111.81) .. (624.3,114.73) .. controls (624.3,117.66) and (621.14,120.03) .. (617.25,120.03) .. controls (613.35,120.03) and (610.19,117.66) .. (610.19,114.73) -- cycle ;
		
		%Straight Lines [id:da9295849090139168] 

		%Straight Lines [id:da9198512390307796] 
		\draw    (426.14,141.73) -- (460.5,141.73) ;

		%Straight Lines [id:da0833637806741725] 
		\draw    (376.08,141.73) -- (410.44,141.73) ;

		%Shape: Rectangle [id:dp8661674426718065] 
		\draw  [fill={rgb, 255:red, 255; green, 255; blue, 255 }  ,fill opacity=1 ] (360.15,134.89) -- (392.01,134.89) -- (392.01,148.56) -- (360.15,148.56) -- cycle ;
		%Shape: Rectangle [id:dp5630745109223239] 
		\draw  [fill={rgb, 255:red, 255; green, 255; blue, 255 }  ,fill opacity=1 ] (410.21,134.89) -- (442.07,134.89) -- (442.07,148.56) -- (410.21,148.56) -- cycle ;
		%Straight Lines [id:da44979801715328716] 
		\draw    (476.66,141.73) -- (511.02,141.73) ;

		%Shape: Rectangle [id:dp3251368628044289] 
		\draw  [fill={rgb, 255:red, 255; green, 255; blue, 255 }  ,fill opacity=1 ] (460.73,134.89) -- (492.59,134.89) -- (492.59,148.56) -- (460.73,148.56) -- cycle ;
		%Shape: Rectangle [id:dp5937626027027707] 
		\draw  [fill={rgb, 255:red, 255; green, 255; blue, 255 }  ,fill opacity=1 ] (510.79,134.89) -- (542.65,134.89) -- (542.65,148.56) -- (510.79,148.56) -- cycle ;
		%Straight Lines [id:da2714806970432184] 
		\draw    (360.38,114.73) -- (360.15,134.89) ;

		%Shape: Ellipse [id:dp6228056777609872] 
		\draw  [fill={rgb, 255:red, 255; green, 255; blue, 255 }  ,fill opacity=1 ] (353.33,114.73) .. controls (353.33,111.81) and (356.49,109.43) .. (360.38,109.43) .. controls (364.28,109.43) and (367.44,111.81) .. (367.44,114.73) .. controls (367.44,117.66) and (364.28,120.03) .. (360.38,120.03) .. controls (356.49,120.03) and (353.33,117.66) .. (353.33,114.73) -- cycle ;
		%Straight Lines [id:da23789917254438642] 
		\draw    (391.78,115.07) -- (391.56,135.24) ;

		%Shape: Ellipse [id:dp1480553776207243] 
		\draw  [fill={rgb, 255:red, 255; green, 255; blue, 255 }  ,fill opacity=1 ] (384.73,115.07) .. controls (384.73,112.15) and (387.89,109.78) .. (391.78,109.78) .. controls (395.68,109.78) and (398.84,112.15) .. (398.84,115.07) .. controls (398.84,118) and (395.68,120.37) .. (391.78,120.37) .. controls (387.89,120.37) and (384.73,118) .. (384.73,115.07) -- cycle ;
		%Straight Lines [id:da7099082082214001] 
		\draw    (410.44,114.39) -- (410.21,134.55) ;

		%Straight Lines [id:da032819103560338014] 
		\draw    (441.84,114.73) -- (441.62,134.89) ;

		%Straight Lines [id:da3306674852440161] 
		\draw    (460.96,114.39) -- (460.73,134.55) ;

		%Straight Lines [id:da007735176584064307] 
		\draw    (492.36,114.73) -- (492.13,134.89) ;

		%Straight Lines [id:da2304364962651071] 
		\draw    (511.47,114.39) -- (511.25,134.55) ;

		%Straight Lines [id:da47252872769733334] 
		\draw    (542.88,114.73) -- (542.65,134.89) ;

		%Shape: Ellipse [id:dp6887010338265942] 
		\draw  [fill={rgb, 255:red, 255; green, 255; blue, 255 }  ,fill opacity=1 ] (403.39,114.39) .. controls (403.39,111.46) and (406.55,109.09) .. (410.44,109.09) .. controls (414.34,109.09) and (417.5,111.46) .. (417.5,114.39) .. controls (417.5,117.31) and (414.34,119.69) .. (410.44,119.69) .. controls (406.55,119.69) and (403.39,117.31) .. (403.39,114.39) -- cycle ;
		%Shape: Ellipse [id:dp8901664048736355] 
		\draw  [fill={rgb, 255:red, 255; green, 255; blue, 255 }  ,fill opacity=1 ] (434.79,114.73) .. controls (434.79,111.81) and (437.95,109.43) .. (441.84,109.43) .. controls (445.74,109.43) and (448.9,111.81) .. (448.9,114.73) .. controls (448.9,117.66) and (445.74,120.03) .. (441.84,120.03) .. controls (437.95,120.03) and (434.79,117.66) .. (434.79,114.73) -- cycle ;
		%Shape: Ellipse [id:dp24182491669365347] 
		\draw  [fill={rgb, 255:red, 255; green, 255; blue, 255 }  ,fill opacity=1 ] (453.9,114.39) .. controls (453.9,111.46) and (457.06,109.09) .. (460.96,109.09) .. controls (464.85,109.09) and (468.01,111.46) .. (468.01,114.39) .. controls (468.01,117.31) and (464.85,119.69) .. (460.96,119.69) .. controls (457.06,119.69) and (453.9,117.31) .. (453.9,114.39) -- cycle ;
		%Shape: Ellipse [id:dp6445435858503228] 
		\draw  [fill={rgb, 255:red, 255; green, 255; blue, 255 }  ,fill opacity=1 ] (485.31,114.73) .. controls (485.31,111.81) and (488.46,109.43) .. (492.36,109.43) .. controls (496.26,109.43) and (499.41,111.81) .. (499.41,114.73) .. controls (499.41,117.66) and (496.26,120.03) .. (492.36,120.03) .. controls (488.46,120.03) and (485.31,117.66) .. (485.31,114.73) -- cycle ;
		%Shape: Ellipse [id:dp967226699253025] 
		\draw  [fill={rgb, 255:red, 255; green, 255; blue, 255 }  ,fill opacity=1 ] (504.42,114.39) .. controls (504.42,111.46) and (507.58,109.09) .. (511.47,109.09) .. controls (515.37,109.09) and (518.53,111.46) .. (518.53,114.39) .. controls (518.53,117.31) and (515.37,119.69) .. (511.47,119.69) .. controls (507.58,119.69) and (504.42,117.31) .. (504.42,114.39) -- cycle ;
		%Shape: Ellipse [id:dp39360311055062436] 
		\draw  [fill={rgb, 255:red, 255; green, 255; blue, 255 }  ,fill opacity=1 ] (535.82,114.73) .. controls (535.82,111.81) and (538.98,109.43) .. (542.88,109.43) .. controls (546.77,109.43) and (549.93,111.81) .. (549.93,114.73) .. controls (549.93,117.66) and (546.77,120.03) .. (542.88,120.03) .. controls (538.98,120.03) and (535.82,117.66) .. (535.82,114.73) -- cycle ;
		%Shape: Ellipse [id:dp784181160676483] 
		\draw  [fill={rgb, 255:red, 255; green, 255; blue, 255 }  ,fill opacity=1 ] (577.86,114.39) .. controls (577.86,111.46) and (581.02,109.09) .. (584.91,109.09) .. controls (588.81,109.09) and (591.97,111.46) .. (591.97,114.39) .. controls (591.97,117.31) and (588.81,119.69) .. (584.91,119.69) .. controls (581.02,119.69) and (577.86,117.31) .. (577.86,114.39) -- cycle ;
		
		%Shape: Rectangle [id:dp25156892269021625] 
		\draw   (350.06,103) -- (627.5,103) -- (627.5,157) -- (350.06,157) -- cycle ;
		
		%Straight Lines [id:da8114083390342146] 

		%Straight Lines [id:da06671069780255223] 
		\draw    (259.18,141.73) -- (292.14,141.73) ;

		%Shape: Rectangle [id:dp7852869061962966] 
		\draw  [fill={rgb, 255:red, 255; green, 255; blue, 255 }  ,fill opacity=1 ] (243.9,134.89) -- (274.46,134.89) -- (274.46,148.56) -- (243.9,148.56) -- cycle ;
		%Straight Lines [id:da13930186537160782] 
		\draw    (244.12,114.39) -- (243.9,134.55) ;

		%Shape: Ellipse [id:dp31022824977804] 
		\draw  [fill={rgb, 255:red, 255; green, 255; blue, 255 }  ,fill opacity=1 ] (237.35,114.39) .. controls (237.35,111.46) and (240.38,109.09) .. (244.12,109.09) .. controls (247.86,109.09) and (250.89,111.46) .. (250.89,114.39) .. controls (250.89,117.31) and (247.86,119.69) .. (244.12,119.69) .. controls (240.38,119.69) and (237.35,117.31) .. (237.35,114.39) -- cycle ;
		
		%Straight Lines [id:da42204201783748774] 
		\draw    (274.24,114.73) -- (274.02,134.89) ;

		%Shape: Ellipse [id:dp812266474236764] 
		\draw  [fill={rgb, 255:red, 255; green, 255; blue, 255 }  ,fill opacity=1 ] (267.47,114.73) .. controls (267.47,111.81) and (270.5,109.43) .. (274.24,109.43) .. controls (277.98,109.43) and (281.01,111.81) .. (281.01,114.73) .. controls (281.01,117.66) and (277.98,120.03) .. (274.24,120.03) .. controls (270.5,120.03) and (267.47,117.66) .. (267.47,114.73) -- cycle ;
		
		%Straight Lines [id:da2753903821274095] 

		%Straight Lines [id:da8309200245146777] 
		\draw    (91.55,141.73) -- (124.51,141.73) ;

		%Straight Lines [id:da7862928339798336] 
		\draw    (43.53,141.73) -- (76.49,141.73) ;

		%Shape: Rectangle [id:dp39399176516231327] 
		\draw  [fill={rgb, 255:red, 255; green, 255; blue, 255 }  ,fill opacity=1 ] (28.25,134.89) -- (58.81,134.89) -- (58.81,148.56) -- (28.25,148.56) -- cycle ;
		%Shape: Rectangle [id:dp05619664674379843] 
		\draw  [fill={rgb, 255:red, 255; green, 255; blue, 255 }  ,fill opacity=1 ] (76.27,134.89) -- (106.83,134.89) -- (106.83,148.56) -- (76.27,148.56) -- cycle ;
		%Straight Lines [id:da7289354578351015] 
		\draw    (140.01,141.73) -- (172.97,141.73) ;

		%Shape: Rectangle [id:dp1706313636809791] 
		\draw  [fill={rgb, 255:red, 255; green, 255; blue, 255 }  ,fill opacity=1 ] (124.73,134.89) -- (155.29,134.89) -- (155.29,148.56) -- (124.73,148.56) -- cycle ;
		%Shape: Rectangle [id:dp8674400221302148] 
		\draw  [fill={rgb, 255:red, 255; green, 255; blue, 255 }  ,fill opacity=1 ] (172.75,134.89) -- (203.31,134.89) -- (203.31,148.56) -- (172.75,148.56) -- cycle ;
		%Straight Lines [id:da6041180324250124] 
		\draw    (28.47,114.73) -- (28.25,134.89) ;

		%Shape: Ellipse [id:dp9352559932552038] 
		\draw  [fill={rgb, 255:red, 255; green, 255; blue, 255 }  ,fill opacity=1 ] (21.7,114.73) .. controls (21.7,111.81) and (24.73,109.43) .. (28.47,109.43) .. controls (32.21,109.43) and (35.24,111.81) .. (35.24,114.73) .. controls (35.24,117.66) and (32.21,120.03) .. (28.47,120.03) .. controls (24.73,120.03) and (21.7,117.66) .. (21.7,114.73) -- cycle ;
		%Straight Lines [id:da5433808559622426] 
		\draw    (58.59,115.07) -- (58.37,135.24) ;

		%Shape: Ellipse [id:dp6518933487052982] 
		\draw  [fill={rgb, 255:red, 255; green, 255; blue, 255 }  ,fill opacity=1 ] (51.83,115.07) .. controls (51.83,112.15) and (54.86,109.78) .. (58.59,109.78) .. controls (62.33,109.78) and (65.36,112.15) .. (65.36,115.07) .. controls (65.36,118) and (62.33,120.37) .. (58.59,120.37) .. controls (54.86,120.37) and (51.83,118) .. (51.83,115.07) -- cycle ;
		%Straight Lines [id:da5280526499870726] 
		\draw    (76.49,114.39) -- (76.27,134.55) ;

		%Straight Lines [id:da11902989062138358] 
		\draw    (106.61,114.73) -- (106.4,134.89) ;

		%Straight Lines [id:da5597144468337514] 
		\draw    (124.95,114.39) -- (124.73,134.55) ;

		%Straight Lines [id:da4044837908545227] 
		\draw    (155.07,114.73) -- (154.85,134.89) ;

		%Straight Lines [id:da8676280038669675] 
		\draw    (173.41,114.39) -- (173.19,134.55) ;

		%Straight Lines [id:da26441418851044696] 
		\draw    (203.53,114.73) -- (203.31,134.89) ;

		%Shape: Ellipse [id:dp38134714378764833] 
		\draw  [fill={rgb, 255:red, 255; green, 255; blue, 255 }  ,fill opacity=1 ] (69.72,114.39) .. controls (69.72,111.46) and (72.75,109.09) .. (76.49,109.09) .. controls (80.23,109.09) and (83.26,111.46) .. (83.26,114.39) .. controls (83.26,117.31) and (80.23,119.69) .. (76.49,119.69) .. controls (72.75,119.69) and (69.72,117.31) .. (69.72,114.39) -- cycle ;
		%Shape: Ellipse [id:dp745840356502014] 
		\draw  [fill={rgb, 255:red, 255; green, 255; blue, 255 }  ,fill opacity=1 ] (99.85,114.73) .. controls (99.85,111.81) and (102.88,109.43) .. (106.61,109.43) .. controls (110.35,109.43) and (113.38,111.81) .. (113.38,114.73) .. controls (113.38,117.66) and (110.35,120.03) .. (106.61,120.03) .. controls (102.88,120.03) and (99.85,117.66) .. (99.85,114.73) -- cycle ;
		%Shape: Ellipse [id:dp8466613329040926] 
		\draw  [fill={rgb, 255:red, 255; green, 255; blue, 255 }  ,fill opacity=1 ] (118.18,114.39) .. controls (118.18,111.46) and (121.21,109.09) .. (124.95,109.09) .. controls (128.69,109.09) and (131.72,111.46) .. (131.72,114.39) .. controls (131.72,117.31) and (128.69,119.69) .. (124.95,119.69) .. controls (121.21,119.69) and (118.18,117.31) .. (118.18,114.39) -- cycle ;
		%Shape: Ellipse [id:dp9170723616059233] 
		\draw  [fill={rgb, 255:red, 255; green, 255; blue, 255 }  ,fill opacity=1 ] (148.3,114.73) .. controls (148.3,111.81) and (151.33,109.43) .. (155.07,109.43) .. controls (158.81,109.43) and (161.84,111.81) .. (161.84,114.73) .. controls (161.84,117.66) and (158.81,120.03) .. (155.07,120.03) .. controls (151.33,120.03) and (148.3,117.66) .. (148.3,114.73) -- cycle ;
		%Shape: Ellipse [id:dp41269032464611133] 
		\draw  [fill={rgb, 255:red, 255; green, 255; blue, 255 }  ,fill opacity=1 ] (166.64,114.39) .. controls (166.64,111.46) and (169.67,109.09) .. (173.41,109.09) .. controls (177.14,109.09) and (180.17,111.46) .. (180.17,114.39) .. controls (180.17,117.31) and (177.14,119.69) .. (173.41,119.69) .. controls (169.67,119.69) and (166.64,117.31) .. (166.64,114.39) -- cycle ;
		%Shape: Ellipse [id:dp19696124692404293] 
		\draw  [fill={rgb, 255:red, 255; green, 255; blue, 255 }  ,fill opacity=1 ] (196.76,114.73) .. controls (196.76,111.81) and (199.79,109.43) .. (203.53,109.43) .. controls (207.27,109.43) and (210.3,111.81) .. (210.3,114.73) .. controls (210.3,117.66) and (207.27,120.03) .. (203.53,120.03) .. controls (199.79,120.03) and (196.76,117.66) .. (196.76,114.73) -- cycle ;
		
		%Shape: Rectangle [id:dp8467643205552731] 
		\draw   (17.5,103) -- (294.94,103) -- (294.94,157) -- (17.5,157) -- cycle ;
		
		%Curve Lines [id:da18525328498420945] 
		\draw    (28.25,141.46) .. controls (-31.75,158) and (-1.5,213) .. (345.5,210) .. controls (692.5,207) and (666.25,140.76) .. (617.75,141.26) ;

		%Shape: Brace [id:dp6722688891376807] 
		\draw   (633.5,94) .. controls (633.51,89.33) and (631.19,86.99) .. (626.52,86.98) -- (332.52,86.05) .. controls (325.85,86.03) and (322.53,83.69) .. (322.54,79.02) .. controls (322.53,83.69) and (319.19,86.01) .. (312.52,85.99)(315.52,86) -- (18.52,85.07) .. controls (13.85,85.06) and (11.51,87.38) .. (11.5,92.05) ;

		% Text Node
		
		% Text Node
		
		% Text Node
		
		% Text Node
		
		% Text Node
		\draw (223.64,140.2) node   {$\cdots $};
		% Text Node
		% Text Node
		\draw (564.61,140.2) node   {$\cdots $};
		
		\draw (318.76,140.13) node   {$\cdots $};
		% Text Node
		\draw (325,56) node   {$k\ \text{blocks}$};

		\end{tikzpicture}

		\caption{Interaction graph $G$ associated to a conjunctive network with firing memory and maximum delay $dt_i = \tau$ for all $i \in V(G)$, defined in Proposition \ref{prop:cycles} that admits attractors with length $k(\tau +1)$. }
		\label{fig:ktauc}
	\end{figure}
\end{proof}

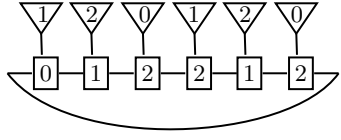
\begin{figure}
	\centering
	\tikzset{every picture/.style={line width=0.75pt}} %set default line width to 0.75pt        
	
	\begin{tikzpicture}[x=0.5pt,y=0.5pt,yscale=-1,xscale=1]
	%uncomment if require: \path (0,105); %set diagram left start at 0, and has height of 105
	
	%Straight Lines [id:da7483803417083732] 
	\draw    (239,60) -- (257.5,60) ;

	%Straight Lines [id:da6280063549469417] 
	\draw    (226,46) -- (225.5,30) ;

	%Straight Lines [id:da9851491523763858] 
	\draw    (277,60) -- (295.5,60) ;

	%Straight Lines [id:da4161319895131972] 
	\draw    (264,46) -- (263.5,30) ;

	%Straight Lines [id:da13034228638776302] 
	\draw    (316,60) -- (334.5,60) ;

	%Straight Lines [id:da4483186417465459] 
	\draw    (303,46) -- (302.5,30) ;

	%Straight Lines [id:da7777138386118663] 
	\draw    (355,60) -- (373.5,60) ;

	%Straight Lines [id:da3803384420441831] 
	\draw    (342,46) -- (341.5,30) ;

	%Straight Lines [id:da8373045251289062] 
	\draw    (393,60) -- (411.5,60) ;

	%Straight Lines [id:da7247661419939353] 
	\draw    (380,46) -- (379.5,30) ;

	%Straight Lines [id:da04244892534559552] 
	\draw    (432,60) -- (450.5,60) ;

	%Straight Lines [id:da42495046538258485] 
	\draw    (419,46) -- (418.5,30) ;

	%Straight Lines [id:da8265647337068267] 
	\draw    (200,60) -- (218.5,60) ;

	%Curve Lines [id:da8846470554099137] 
	\draw    (200,60) .. controls (245.5,120) and (412.5,113) .. (450.5,60) ;

	%Shape: Triangle [id:dp5721293232641094] 
	\draw   (225.5,30) -- (209.88,7) -- (241.13,7) -- cycle ;
	%Shape: Triangle [id:dp15050697492985066] 
	\draw   (263.5,30) -- (247.88,7) -- (279.13,7) -- cycle ;
	%Shape: Triangle [id:dp49569915138505327] 
	\draw   (302.5,30) -- (286.88,7) -- (318.13,7) -- cycle ;
	%Shape: Triangle [id:dp586871643709826] 
	\draw   (341.5,30) -- (325.88,7) -- (357.13,7) -- cycle ;
	%Shape: Triangle [id:dp04402936971389626] 
	\draw   (379.5,30) -- (363.88,7) -- (395.13,7) -- cycle ;
	%Shape: Triangle [id:dp4373533356646915] 
	\draw   (418.5,30) -- (402.88,7) -- (434.13,7) -- cycle ;

	% Text Node
	\draw    (220,48) -- (238,48) -- (238,72) -- (220,72) -- cycle  ;
	\draw (229,60) node   {$0$};
	% Text Node
	\draw    (258,48) -- (276,48) -- (276,72) -- (258,72) -- cycle  ;
	\draw (267,60) node   {$1$};
	% Text Node
	\draw    (297,48) -- (315,48) -- (315,72) -- (297,72) -- cycle  ;
	\draw (306,60) node   {$2$};
	% Text Node
	\draw    (336,48) -- (354,48) -- (354,72) -- (336,72) -- cycle  ;
	\draw (345,60) node   {$2$};
	% Text Node
	\draw    (374,48) -- (392,48) -- (392,72) -- (374,72) -- cycle  ;
	\draw (383,60) node   {$1$};
	% Text Node
	\draw    (413,48) -- (431,48) -- (431,72) -- (413,72) -- cycle  ;
	\draw (422,60) node   {$2$};
	% Text Node
	\draw (227,15) node   {$1$};
	% Text Node
	\draw (264,15) node   {$2$};
	% Text Node
	\draw (304,15) node   {$0$};
	% Text Node
	\draw (341,15) node   {$1$};
	% Text Node
	\draw (380,15) node   {$2$};
	% Text Node
	\draw (419,15) node   {$0$};
	\end{tikzpicture}	
	\caption{An interaction graph $G$ associated to a conjunctive network with firing memory and maximum delay values $dt_i = 2$ for all $i \in V(G)$ that admits attractors with period $6$. Every triangle represents a conjunctive network with firing memory that admits attractors with period $3$ and states in the triangles represent the states of the nodes that are connected to the nodes in the path. }
	\label{fig:examplecycle}
\end{figure}
\begin{teo}
	Let $\tau \geq 2$.  There exists a conjunctive network with firing memory and maximum delay  $dt_i = \tau$ in every node $i$ which admits attractors with non polynomial period.
	\label{Appendix:nonpolycycle}
\end{teo}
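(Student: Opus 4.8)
The plan is to mirror the argument of Theorem~\ref{teononpolydif}, but now in the uniform-delay regime: instead of gluing together complete-graph clocks of prime order, I would glue together the block--clock networks furnished by Proposition~\ref{propcycleshom}. Fix a large integer $m$, let $p_1<\dots<p_l$ be the primes with $p_i\le m$ (so $l=\pi(m)$), and for each $i$ let $F^{(i)}$ be a conjunctive network with firing memory, uniform maximum delay $\tau$ in every node, whose interaction graph $G_i$ is the block--clock gadget of Proposition~\ref{propcycleshom} with parameter $k=p_i$ and which admits an attractor $\mathcal A_i$ of period $p_i(\tau+1)$. On $\mathcal A_i$ each of the $K_{\tau+1}$ clocks carries a single ``clock token'' (a vertex in state $0$) that runs through its $\tau+1$ vertices with period $\tau+1$, and globally a single ``path token'' runs around the cycle of $p_i(\tau+1)$ path nodes.

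To obtain a connected network I would take the disjoint union $\bigsqcup_i G_i$ and add $l-1$ edges $b_i b_{i+1}$, where $b_i$ is an \emph{internal} vertex of one of the $K_{\tau+1}$ clocks of $G_i$ (a clock vertex with no path neighbour), so that the inter-component graph is a path; to keep every rule conjunctive I would replace the local rule of $b_i$ by $F(x)_{b_i}\wedge x_{b_{i+1}}$. The delicate point---the heart of the proof---is to choose the time shifts within the $\mathcal A_i$ (the freedom granted by Proposition~\ref{propcycleshom}) so that none of these added conjuncts ever disturbs a component. Because $b_i$ is internal to a clique $K_{\tau+1}$, its local conjunction equals $1$ at a time $t$ precisely when $b_i$ is the clock token of that clique, and when $G_{i+1}$ is running on $\mathcal A_{i+1}$ the conjunct $x_{b_{i+1}}$ vanishes precisely when $b_{i+1}$ is the clock token of its own clique. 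Both events are arithmetic progressions of \emph{the same} modulus $\tau+1$, hence they are disjoint as soon as their offsets differ modulo $\tau+1$; a time shift of $G_i$ rotates that offset, and the offsets need only realise a proper colouring of the (tree) inter-component graph, which is possible since $\tau+1\ge 3>2$. Arguing from the last component inward (the rule of $b_l$ is never modified, so $G_l$ is unperturbed, and then each $G_i$ in turn), every component keeps running on $\mathcal A_i$ and no clique---hence no component---ever cascades down to the all-$0$ fixed point.

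Granting this, the global trajectory is periodic and
\begin{equation*}
T\ \ge\ \operatorname{lcm}\bigl(p_1(\tau+1),\dots,p_l(\tau+1)\bigr)\ =\ (\tau+1)\prod_{p\le m}p\ =\ (\tau+1)\exp(\theta(m)),\qquad \theta(m)=\sum_{p\le m}\log p,
\end{equation*}
while the number of vertices is $|V|=\sum_i|V(G_i)|=(\tau+1)\tau^2\sum_{p\le m}p$, since $G_i$ consists of $p_i$ blocks and one block has $(\tau+1)+(\tau+1)^2(\tau-1)=(\tau+1)\tau^2$ vertices. For fixed $\tau$ this makes $|V|=\Theta\bigl(\sum_{p\le m}p\bigr)=\Theta(m^2/\log m)$, hence $\log|V|=\Theta(\log m)$ and $m=\Theta(\sqrt{|V|\log|V|})$; combined with $\theta(m)=\Theta(m)$ (Chebyshev) this yields $T\ge\exp\bigl(\Omega(\sqrt{|V|\log|V|})\bigr)$, which is non-polynomial in $|V|$---exactly the estimate obtained, via \cite{hardy1979introduction}, in the proof of Theorem~\ref{teononpolydif}.

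I expect the routine parts to be just as in Theorem~\ref{teononpolydif}: the vertex count (now carrying the extra factor $(\tau+1)\tau^2$) and the inversion of the asymptotics. The genuine obstacle is the perturbation analysis of the second paragraph---verifying that a single badly-timed forced $0$ really does propagate through a $K_{\tau+1}$ clock and destroy the attractor of the whole $G_i$ (so that the disjointness condition is not merely convenient but necessary), and checking that the required colouring of the offsets modulo $\tau+1$ is indeed realisable by time-shifting the initial conditions supplied by Proposition~\ref{propcycleshom}. Everything else is bookkeeping.
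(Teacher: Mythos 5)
Your proposal is correct and takes essentially the same route as the paper's proof: it glues the uniform-delay gadgets of Proposition~\ref{propcycleshom} (one per prime $p_i\le m$) into a connected network by a few cross-component edges, arranges the phases so that the two endpoints of each added edge are never simultaneously in state $0$ (the paper does this by its choice of initial condition on the $a$-vertices of the first blocks, you by time-shifting components and a colouring argument mod $\tau+1$), and then applies the same $\pi(m)/\theta(m)$ asymptotics to get $T\ge\exp(\Omega(\sqrt{|V|\log|V|}))$. The remaining differences are cosmetic (which clock vertices are joined, and your per-block count $(\tau+1)\tau^2$ versus the paper's stated $(\tau+1)\tau$), and none of them affects the estimate for fixed $\tau$.
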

\begin{proof}
	Let $m \geq 2$ and a collection of prime numbers $p_1, \hdots,  p_l$ where $l = \pi(m)$ as in Theorem \ref{teononpolydif}. As a consequence of Proposition \ref{propcycleshom} there exist functions $F_{p_i}$ and associated graphs $C_{p_i}, \text{ } i = 1,2,\cdots, l$  which admit attractors with period $p_i (\tau+1)$. As same as we did to prove Theorem \ref{teononpolydif} we define a global function $F$ with delay $\tau$ in every coordinate which associated interaction graph $G = (V,E)$ is given by the connected union of the graphs $C_{p_i}$. In this case, we connect every of these components by adding an edge between a node labelled by $a$ associated to the first vertex in the path of the first block of $C_{p_i}$ to another $a$ labelled vertex associated to the second vertex in the path of the first block of $C_{p_{i+1}}$ as it is shown in Figure \ref{fig:nonpolys}. Initial condition $x$ is defined as every node in $C_{p_{i+1}}$ is in an attractor with period $(\tau + 1)p_i$ by using the same initial condition given in the proof of Proposition \ref{propcycleshom}.  We remark that there no connected vertices with state $0$ in every iteration because of how we defined the connection between components. Again, as same as in Theorem \ref{teononpolydif} we have that the global period of the network $T$ satisfy that: 
	\begin{equation*}
	T \geq(\tau + 1)  \prod  \limits_{i=1}^{\pi(m)} p_i.
	\end{equation*}
	And also, we have that:	
	\begin{equation}
	|V| = (\tau+1)\tau \sum \limits_{i=1}^{\pi(m)} p_i.
	\label{vertexbound}
	\end{equation}	
	It is not difficult to see that applying the same technique that we used in the proof of Theorem \ref{teononpolydif} we can conclude:	
	\begin{equation*}
	T \geq \exp(\Omega(\sqrt{|V|\log(|V|)}).
	\end{equation*}
	Thus,  $F$ has attractors with non polynomial period.
\end{proof}
\begin{figure}
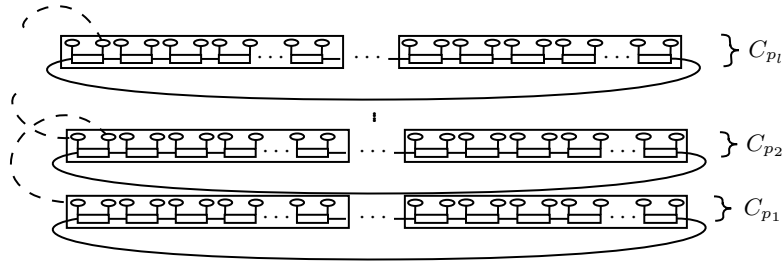

	\tikzset{every picture/.style={line width=0.75pt}} %set default line width to 0.75pt        
	\centering	
	% [inline block 0: 1 envs, 42770 chars -> data_tex | \begin{tikzpicture}[x=0.75pt,y=0.75pt,yscale=-0.30,xscale=0.6] 	%uncomment if require: \path (0,435); %set diagram left ...]

	\caption{Interaction graph  $G$ associated to a conjunctive network with firing memory and maximum delay values $dt_i = \tau$ for every node $i \in V(G)$ that admits attractors with non polynomial period. Every component defines a local dynamics with period $(\tau + 1) p_i$. Initial condition is defined verifying that there are no connected nodes in $0$. Global period of the network is given by the product of prime numbers $p_i$. }
	\label{fig:nonpolys}
\end{figure}

\newpage
\subsection{Supplementary figures: dynamics of logic gates gadgets}

\begin{figure}[h!]
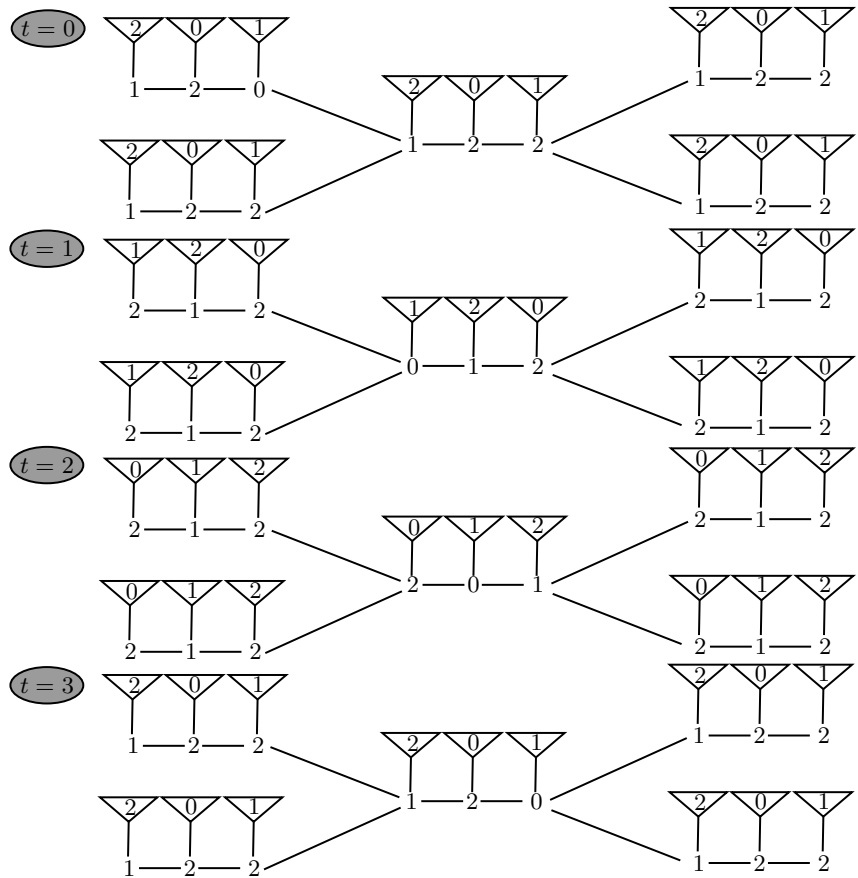

	\centering

		\centering

		\tikzset{every picture/.style={line width=0.75pt}} %set default line width to 0.75pt        
		
		% [inline block 1: 3 envs, 78816 chars -> data_tex | \begin{tikzpicture}[x=0.75pt,y=0.75pt,yscale=-0.65,xscale=0.65] 		%uncomment if require: \path (0,534); %set diagram lef...]

	\caption{Iterations of the OR gate gadget. A $0$ and a $1$ are computed as inputs. After three steps the information is transmitted and the initial condition is recovered.}
	\label{fig:iterOR}
\end{figure}

\end{document}